%% file: linear_bosonic_control_prlV2_FullR.tex
\documentclass[
  aps,
  prl,
  reprint,
  superscriptaddress,
  longbibliography,
  nofootinbib
]{revtex4-2}

\usepackage{amsmath,amssymb,amsthm,mathtools}
\usepackage{graphicx}
\usepackage{xcolor}
\usepackage[normalem]{ulem}
\usepackage{comment}
\usepackage{microtype}
\usepackage[hidelinks]{hyperref}

\newcommand{\Tr}{\operatorname{Tr}}
\newcommand{\dd}{\,\mathrm{d}}
\newcommand{\id}{\mathbb I}
\newcommand{\ket}[1]{\left|#1\right\rangle}
\newcommand{\bra}[1]{\left\langle#1\right|}

\newcommand{\picsup}[1]{\scalebox{0.9}{$\scriptstyle(\mathrm{#1})$}}
\newcommand{\assumptionlink}[2][]{%
  \hyperref[ass:#2]{\textbf{%
    \if\relax\detokenize{#1}\relax#2\else#1\fi}}}

\theoremstyle{definition}
\newtheorem{definition}{Definition}
\theoremstyle{plain}
\newtheorem{theorem}{Theorem}
\newtheorem{lemma}{Lemma}
\newtheoremstyle{plaincolon}
  {}{}
  {\itshape}{}
  {\bfseries}{:}{5pt plus 1pt minus 1pt}{}
\theoremstyle{plaincolon}
\newtheorem{colonlemma}[lemma]{Lemma}
\theoremstyle{remark}

\input{thesis_preamble_commands.tex}

\hypersetup{
  pdftitle={Square-Root Energy Speed Limit for Linear Bosonic Quantum Control},
  pdfauthor={Pablo Restrepo Gaviria and Mischa P. Woods}
}

\begin{document}

\title{
	A Square-Root Barrier to Quantum Gate Speed under Linear Coupling}

\author{Pablo Restrepo Gaviria}
\email{restrepo.pablo2@gmail.com}
\affiliation{ETH Zurich, Zurich, Switzerland}

\author{Mischa P. Woods}
\email{mischa.woods@ens-lyon.fr\quad mischa.woods@gmail.com}
\affiliation{ENS Lyon, Inria, France}

\begin{abstract}
Faster quantum gates can suppress the decoherence accumulated during a computation, but in superconducting processors stronger microwave pulses can also increase leakage, off-resonant excitation, stray-field errors, and crosstalk.  This creates a central energy--speed--error tradeoff: can quantum state engineering make a gate parametrically faster without paying proportionally more drive energy?  We address this question by treating the driving pulse as a quantum bosonic field rather than a classical waveform.  For a finite-dimensional system coupled linearly to that field, we prove that fixed-fidelity gate transition rates grow at most as the square root of the pulse energy, under stated uniformity conditions on the coupling and accepted dynamics.  The bound permits arbitrary pulse states, including squeezed and non-Gaussian states, as well as drive--system entanglement and back action; coherent Gaussian pulses attain its energy exponent.  Thus squeezing or other state engineering alone cannot replace the square-root energy law of a conventional linear drive by the linear scaling allowed by general quantum speed limits.  Achieving that improvement requires changing the interaction class, in addition to using a suitable nonclassical pulse, thereby identifying interaction nonlinearity as an essential resource for relaxing the practical gate-speed error tradeoff.
\end{abstract}

\maketitle
Quantum gates are commonly driven by propagating bosonic fields.  In neutral-atom and trapped-ion
processors these are optical pulses addressing atomic transitions
\cite{Levine2019,Ballance2016}; in circuit quantum electrodynamics they are microwave pulses sent
through a transmission line to one or more superconducting qubits
\cite{Chow2011,Blais2021}.  Despite their different carrier frequencies and hardware, both are
captured at leading order by a finite-dimensional system coupled linearly to quantized field
modes (which we abbreviate to {\it linear coupling} and in its absence, {\it nonlinear coupling}).  The drive is usually replaced by a classical waveform.  Retaining it as a quantum
system asks a sharper resource question: at fixed coupling, how rapidly can energy in the pulse
move the target system through a prescribed gate transition?  Figure~\ref{fig:platform-tradeoff}(a)
schematizes the common optical and microwave realizations.

The distinction is consequential.  In generic quantum systems, general quantum speed limits permit an orthogonalization rate $\omega$ 
linear in the mean energy \cite{Margolus1998},
\begin{align}
  \omega \propto E.
  \label{eq:qfc-linear-scaling}
\end{align}
Under which circumstances an implemented gate, with gate frequency $\omega_{\mathrm{gate}}$, could attain this fundamental scaling was however unclear.
Ref.~\cite{Woods2024} answered affirmatively by constructing an autonomous ``quantum
frequential computer'' whose nonclassical phase-squeezed drive pulse and localized nonlinear drive-qubit 
interaction attain Eq.~\eqref{eq:qfc-linear-scaling}.  Within the same framework, it proved
that conventional nonsqueezed, standard-quantum-limited (SQL) pulses are restricted to the exponent one half,
\begin{align}
  \omega_{\mathrm{gate}}T_0\propto\sqrt{T_0E},
  \label{eq:sql-square-root-scaling}
\end{align}
where $T_0$ is a background reference time.  Thus Ref.~\cite{Woods2024} proves that
nonclassical squeezing is necessary to obtain gates frequencies which scale linearly with pulse energy.
Moreover, its construction shows that a localized nonlinear interaction is sufficient when combined with
squeezing, but does not establish that nonlinear coupling is necessary.  This leaves their individual
roles open: can a suitably squeezed, or more exotic, state recover linear gate-frequency scaling
through the ubiquitous linear bosonic coupling alone?

Here we prove that it cannot, under assumptions that express finite coupling regularity, finite
drive energy during the accepted gate, and a nonvanishing target transition.  Thus the prior
linear scaling is not a state-only effect: within this broad control setting it requires leaving
the linear-interaction class as well as leaving the SQL state class.

This distinction also has immediate practical value [Fig.~\ref{fig:platform-tradeoff}(b)].  In superconducting processors, reducing
microwave-drive energy suppresses drive-induced leakage, off-resonant excitation, stray-field
errors, and crosstalk, but lengthens entangling gate times, allowing background decoherence to
accumulate within a quantum-error-correction cycle.  Increasing the drive shortens the gates but
aggravates these parasitic errors \cite{Kandala2021,Malekakhlagh2022,Wei2022}; two-qubit gates are
consequently a leading contribution to processor error budgets \cite{GoogleQEC2025}.  Replacing
the square-root law in Eq.~\eqref{eq:sql-square-root-scaling} by the linear law in
Eq.~\eqref{eq:qfc-linear-scaling} would deliver a prescribed gate rate with parametrically less
pulse energy, relaxing this speed--error tradeoff.  In this Letter, our result says that squeezing a conventional
transmission-line drive, without engineering its interaction, cannot supply that advantage.

\begin{figure*}[t]
  \centering
  \includegraphics[width=\textwidth]{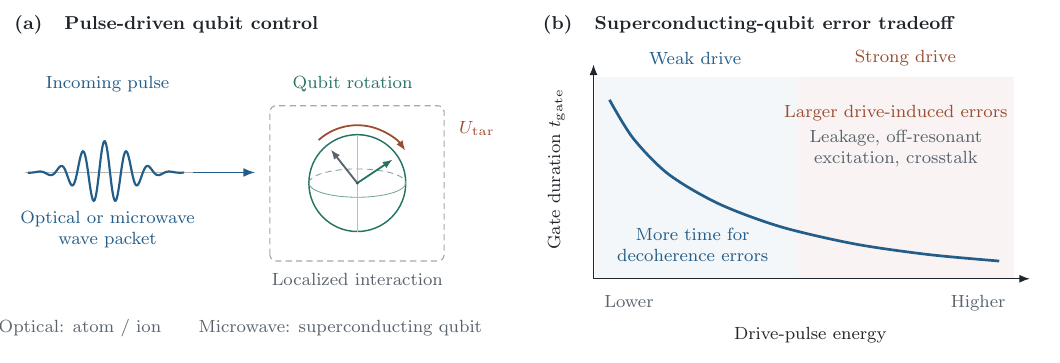}
  \caption{Pulse-driven control and its practical energy--time tradeoff.
  (a) An optical or microwave wave packet drives a localized qubit rotation toward a prescribed
  target state.  The same geometry describes atomic and ionic optical control and microwave
  control of superconducting qubits.
  (b) Schematic tradeoff for superconducting-qubit gates: weaker pulses require longer gates,
  leaving more time for decoherence, whereas stronger pulses shorten gates but can increase
  leakage, off-resonant excitation, and crosstalk.}
  \label{fig:platform-tradeoff}
\end{figure*}

\textit{Model.---}
Set $\hbar=1$.  A finite-dimensional controlled system $Q$ and bosonic continuum $F$ evolve under
the time-independent Hamiltonian
\begin{equation}
 \begin{split}
 H&:=H_Q\otimes\id_F+\id_Q\otimes H_F+V,\\
 H_F&:=\int_{\mathbb R}\!\dd k\,\omega(k)a_k^\dagger a_k,\\
 V&:=\sigma\otimes B+\sigma^\dagger\otimes B^\dagger,
 \qquad B:=\int_{\mathbb R}\!\dd k\,g(k)a_k .
 \end{split}
 \label{eq:model}
\end{equation}
Here $k\in\mathbb R$ is a signed mode coordinate, such as momentum or a waveguide
propagation coordinate, whereas $\omega(k)\ge0$ is the physical mode frequency; no particular
realization is assumed. Appropriate choices of \(\sigma\) describe dipole coupling both with and without the rotating-wave approximation. We normalize $\lVert\sigma\rVert=1$ and require
$g\in L^2(\mathbb R)$ and finite energy-weighted coupling
$C_g:=\int_{g\ne0}|g(k)|^2/\omega(k)\,\dd k$.  Hence uncoupled zero-frequency modes are allowed,
but $\omega(k)>0$ almost everywhere where $g(k)\ne0$.  The full Hamiltonian is self-adjoint and
bounded below.  The initial joint state $\rho_0$ can be arbitrary, e.g. mixed, non-Gaussian, squeezed, and
correlated across $QF$; it need only have the displayed finite-energy expectations.

Let $E:=\Tr(\rho_0H)-\inf\operatorname{spec}H$ be the mean initial energy above the actual
interacting spectral bottom.  Write
$H_0:=H_Q\otimes\id_F+\id_Q\otimes H_F$,
$U^{\picsup{I}}(t):=e^{iH_0t}e^{-iHt}$, and
$\rho^{\picsup{I}}(t):=U^{\picsup{I}}(t)\rho_0U^{\picsup{I}\dagger}(t)$ for interaction picture dynamics.  For a chosen post-gate target vector
$\ket{\psi_{\mathrm{tar}}}$ on $Q$, the target fidelity at time $t$ is
$F_{\mathrm{tar}}(t):=\Tr[(\ket{\psi_{\mathrm{tar}}}\!\bra{\psi_{\mathrm{tar}}}
\otimes\id_F)\rho^{\picsup{I}}(t)]$.  A \hyperref[def:gate-interval]{threshold gate interval}
$I=[t_{\mathrm{in}},t_{\mathrm{out}}]$ has
$\Delta t:=t_{\mathrm{out}}-t_{\mathrm{in}}>0$ and target-population increase
$q:=F_{\mathrm{tar}}(t_{\mathrm{out}})-F_{\mathrm{tar}}(t_{\mathrm{in}})>0$; its angular frequency is
$\omega_{\mathrm{gate}}:=2\pi/\Delta t$.  This threshold crossing is necessary for a
high-fidelity implementation on a specified input state, but, among other things, does not
certify the gate's action on arbitrary inputs.  Additional hypotheses needed for the successive
upper bounds are introduced below.  With
$\rho^{\picsup{S}}(t):=e^{-iHt}\rho_0e^{iHt}$, the instantaneous free-field energy is
$E_F(t):=\Tr[\rho^{\picsup{S}}(t)(\id_Q\otimes H_F)]
=\Tr[\rho^{\picsup{I}}(t)(\id_Q\otimes H_F)]$.  The sole dynamical energy assumption is that this energy during the gate interval satisfies
$E_F(t)\le C_{\mathrm{fld}}E$, with a finite state-independent constant $C_{\mathrm{fld}}$.  This allows energy exchange,
back action, and strong transient $Q$--$F$ entanglement.

\textit{Fixed-Hamiltonian bound.---}
The exact von Neumann equation gives the target-population rate as the expectation of a
commutator with $V$.  Applying Cauchy--Schwarz only after combining its two adjoint contributions
bounds the rate using $\langle B^\dagger B\rangle$.  Treating them separately would also introduce
$BB^\dagger=B^\dagger B+[B,B^\dagger]$, and hence an unnecessary state-independent vacuum term.
The definition of $C_g$ then turns this into an energy bound: a second Cauchy--Schwarz estimate
over the field modes gives $\langle B^\dagger B\rangle\le C_gE_F(t)$
(Lemma~\ref{lem:B-energy} of the S.M.), which yields
\begin{equation}
 \omega_{\mathrm{gate}}
 \le \frac{4\pi}{q}\sqrt{C_gC_{\mathrm{fld}}E} .
 \label{eq:fixed-bound}
\end{equation}
This is the concise content of Theorem~\ref{thm:exact-speed} in the S.M., where the domain and absolute-continuity
details and the full proof are given \cite{SupplementalMaterial}.  For a fixed Hamiltonian, $C_g$ is fixed.
Consequently, increasing only the initial state's energy cannot improve the exponent one half,
irrespective of squeezing or other nonclassical structure.  The conclusion is stronger than a
semiclassical estimate because it follows from the exact joint state and permits arbitrary
drive--system correlations throughout the gate.
In the common preparation before the pulse reaches $Q$, the interaction energy vanishes, so
$E=E_F(0)+E_Q$, where $E_Q$ is the fixed initial controlled-system contribution above the
ground-state energy.  Equation~\eqref{eq:fixed-bound} is therefore also a square-root bound in the
drive energy $E_F(0)$.  We state subsequent bounds using the invariant total energy $E$, but the
same observation applies throughout.

\textit{Uniformly bounded-from-below-Hamiltonian bound.---}
One might instead change the Hamiltonian as the pulse energy grows---for example, by altering the
coupling strength, bandwidth, or dispersion.  We describe such protocols by a family index
$\lambda$, keeping the finite-dimensional system Hilbert space common while allowing
$H_{Q,\lambda}$, $\omega_\lambda$, $g_\lambda$, $\sigma_\lambda$, the initial state, and the gate interval
$I_\lambda=[t_{\mathrm{in},\lambda},t_{\mathrm{out},\lambda}]$ to vary.  Applying the preceding
definitions memberwise gives $\Delta t_\lambda:=t_{\mathrm{out},\lambda}-t_{\mathrm{in},\lambda}$,
$q_\lambda:=F_{\mathrm{tar},\lambda}(t_{\mathrm{out},\lambda})-
F_{\mathrm{tar},\lambda}(t_{\mathrm{in},\lambda})>0$, and
$\omega_{\mathrm{gate},\lambda}:=2\pi/\Delta t_\lambda$, and similarly the interaction term and Schr\"odinger picture dynamics, are now denoted $B_\lambda$ and $\rho_\lambda^{\picsup{S}}(t)$ respectively.  A fixed,
scale-covariant rule assigns each member a reference time $T_\lambda$: under a global rescaling
$H_\lambda\mapsto\alpha_\lambda H_\lambda+c_\lambda I$, it gives
$T_\lambda\mapsto T_\lambda/\alpha_\lambda$.  This supplies a background reference time and
prevents a mere fast-forwarding of the entire dynamics from being counted as an improved gate.
All assumptions \assumptionlink{A0}--\assumptionlink{A2} and \assumptionlink{B1}--\assumptionlink{B2} introduced below are invariant under this
joint rescaling, so none imposes an absolute normalization of $T_\lambda$.

Three assumptions isolate a uniform, physically comparable family.  \assumptionlink{A0} fixes a common
Fock vacuum and free-field zero and uniformly bounds, in units of $T_\lambda$, the separation
between the system spectral ceiling and the interacting ground energy.  \assumptionlink{A1} requires the same field-energy factor
$C_{\mathrm{fld}}$ throughout the family.  \assumptionlink{A2} requires a uniform target-fidelity increase
$q_\lambda\ge q_0>0$.  Under \assumptionlink{A0}--\assumptionlink{A2}, stability itself bounds the allowed energy-weighted
coupling, and Theorem~\ref{thm:relative-speed} of the S.M. gives
\begin{equation}
 \omega_{\mathrm{gate},\lambda}T_\lambda
 \le c_{\mathrm{stab}}\sqrt{T_\lambda E_\lambda},
 \label{eq:uniform-bound}
\end{equation}
with a dimensionless $c_{\mathrm{stab}}$ independent of $\lambda$.  This is a member-relative square-root law;
it becomes an absolute square-root law in laboratory units whenever the $T_\lambda$ remain
uniformly comparable to one fixed laboratory time.

While \assumptionlink{A1}--\assumptionlink{A2} are weak and physically well-motivated assumptions, the dimensionless
lower-stability requirement in \assumptionlink{A0} is clean but not logically unavoidable: a family could
lower its interacting spectral bottom relative to its background timescale as $\lambda$ grows.
We therefore replace \assumptionlink{A0} by
two local conditions \assumptionlink{B1}--\assumptionlink{B2} which are tied to what it means operationally to deliver a drive pulse.  

\begin{figure}[t]
  \centering
  \includegraphics[width=\columnwidth]{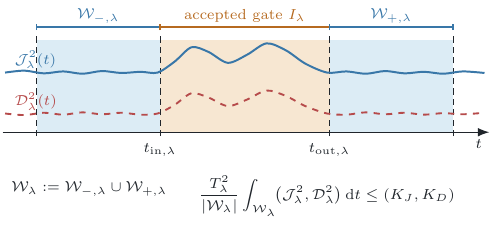}
  \caption{Assumption \assumptionlink{B1} (schematic).  The accepted gate interval
  $I_\lambda=[t_{\mathrm{in},\lambda},t_{\mathrm{out},\lambda}]$ is bordered by
  quiet windows $\mathcal W_{-,\lambda}$ and $\mathcal W_{+,\lambda}$, with total length
  $|\mathcal W_\lambda|=|\mathcal W_{-,\lambda}|+|\mathcal W_{+,\lambda}|>0$; either window may be absent, and
  $|\mathcal W_\lambda|$ may shrink with $\lambda$.
  Over their union $\mathcal W_\lambda$, \assumptionlink{B1} uniformly bounds the dimensionless mean squares of
  the field activation $\mathcal J_\lambda$ and the system-changing interaction action
  $\mathcal D_\lambda$.  The curves are illustrative: the constraint is on the displayed
  averages, not on pointwise values.  Operationally, the quiet window helps distinguish an
  accepted gate from a transient fidelity fluctuation while limiting off-gate disturbance.
}
  \label{fig:B1-quiet-windows}
\end{figure}

\begin{figure*}[t]
  \centering
  \includegraphics[width=0.78\textwidth]
  {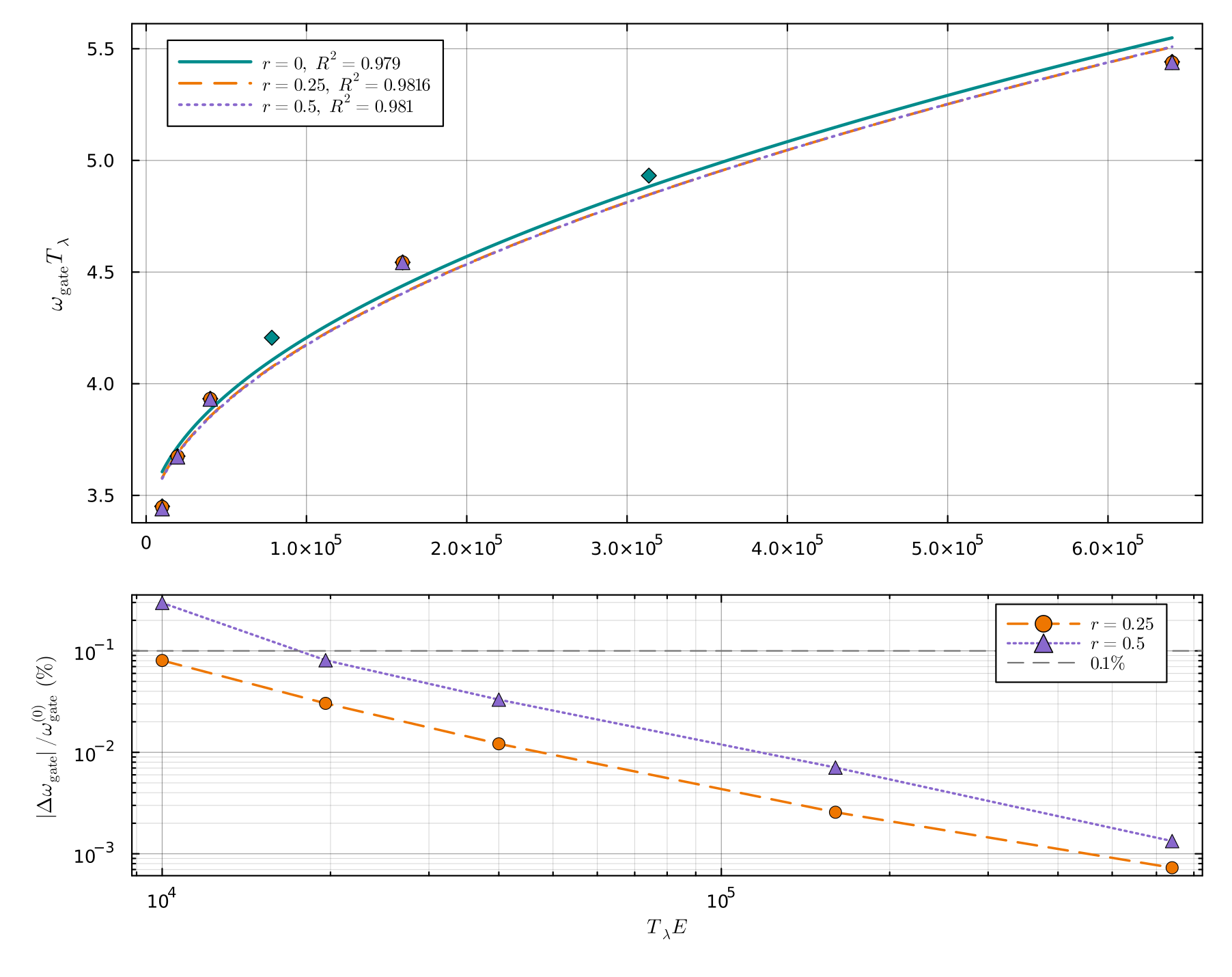}
  \caption{Full-dynamics energy and squeezing sweep for a fixed Hamiltonian
  and reference time.  Upper panel: diamonds show the seven-member coherent
  family; circles and triangles show the five-member $r=0.25$ and $r=0.5$
  families with $\phi_s=\pi$.  Curves are separate square-root regressions, with
  $R^2$ reported in the legend.  Lower panel: magnitude of the squeezed gate-frequency
  shift relative to a coherent trajectory at the same displacement and numerical
  time step.  The near-overlap above and resolved sub-percent shifts below quantify
  the weak sensitivity to moderate packet squeezing in the tested regime.}
  \label{fig:irg-full-dynamics-extended}
\end{figure*}

\textit{Pulse-uniformity bound.---}
For \assumptionlink{B1}, we introduce two complementary drive-pulse detection quantities, denoted $\mathcal J_\lambda(t)$ and $\mathcal D_\lambda(t)$.  Writing
$\rho_{F,\lambda}(t):=\Tr_Q\rho_\lambda^{\picsup{S}}(t)$ for the actual drive state, define
$\mathcal J_\lambda^2(t):=\Tr[\rho_{F,\lambda}(t)B_\lambda^\dagger B_\lambda]$.  Since \(B_\lambda\) annihilates the collective field mode selected by the coupling, \(\mathcal J_\lambda^2(t)\) measures the normally ordered activation of the drive channel at time $t$ that directly interacts with \(Q\), rather than the total field energy. We require the existence of a `quiet' time interval before and/or after the gate interval
, denoted $\mathcal W_\lambda$, over which its dimensionless mean-square activation is uniformly bounded, namely $\frac{T_\lambda^2}{|\mathcal W_\lambda|}
\int_{\mathcal W_\lambda}\mathcal J_\lambda(t)^2\dd t
\le K_J,$ for some $\lambda$-independent $K_J$.  For an orthogonal system
projector $P$, set $C_{P,\lambda}:=[P\otimes\id_F,V_\lambda]$.  This commutator isolates the
blocks of $V_\lambda$ coupling the subspace selected by $P$ to its orthogonal complement; a
component commuting with $P$ cannot change that population.  Indeed, the interaction contribution
to its rate is $\left.\frac{\mathrm d}{\mathrm dt}\langle P\otimes\id_F\rangle\right|_{V_\lambda}
:= -i\langle [P\otimes \id_F, V_\lambda]  \rangle    =-i\langle C_{P,\lambda} \rangle$, where the restriction   $|_{V_\lambda}$ is to remove the free evolution; an equivalent expression without this restriction holds in the interaction picture.  Define
$\mathcal D_\lambda^2(t):=\sup_{P,\psi}\Tr[(\ket{\psi}\bra{\psi}\otimes\rho_{F,\lambda}(t))
C_{P,\lambda}^\dagger C_{P,\lambda}]$, where the supremum also ranges over unit system vectors
$\ket{\psi}$.  Accordingly, $\mathcal D_\lambda(t)$ measures, at time $t$, the root-mean-square capability of
the interaction to change system populations, maximized over every projector and system input.  We require $\mathcal D_\lambda(t)$ to be `quiet' over $\mathcal W_\lambda$, namely $	\frac{T_\lambda^2}{|\mathcal W_\lambda|}
\int_{\mathcal W_\lambda}\mathcal D_\lambda(t)^2\dd t \le K_D$ for some $\lambda$-independent $K_D$. 
Because a drive implementing a unitary gate should act with high fidelity regardless of the
controlled system's initial state on $Q$, this input-independent maximization is operationally natural.
Assumption \assumptionlink{B1} uniformly bounds the dimensionless mean squares of
$\mathcal J_\lambda$ and $\mathcal D_\lambda$ over $\mathcal W_\lambda$; see
Fig.~\ref{fig:B1-quiet-windows}.  Either window may be
absent, and the positive total length may tend to zero with $\lambda$.  \assumptionlink{B1} has one final
uniformity requirement: nonvanishing system sensitivity, $\inf_\lambda  \max_{P}\lVert[P,\sigma_\lambda]\rVert
 >0$.
This excludes a family in which $\sigma_\lambda$ becomes asymptotically proportional to the
identity on $Q$, so that a vanishing system-changing component is compensated by unbounded drive
activation during the gate interval. Indeed, under \assumptionlink{A2}, this regularity condition follows whenever
$\sup_\lambda\int_{I_\lambda}\mathcal J_\lambda(t)\dd t<\infty$; see
Lemma~\ref{lem:target-transition-bounds-chi} of the S.M.
 Thus \assumptionlink{B1} does not
demand a macroscopic idle time.  Its nonzero pre- and/or
post-gate quiet window provides a local safeguard against
accepting a mere fidelity fluctuation as a completed gate.
 For \assumptionlink{B2}, write $\Gamma_{g_\lambda}:=\int_{\mathbb R}|g_\lambda(k)|^2\dd k$.
For some finite $\lambda$-independent $K_\omega$, the condition
$C_{g_\lambda}\le K_\omega T_\lambda\Gamma_{g_\lambda}$ requires the inverse-frequency moment
to remain uniformly comparable to this unweighted coupling norm.
Equivalently, $C_{g_\lambda}/\Gamma_{g_\lambda}$ is the $|g_\lambda|^2$-weighted mean of
$\omega_\lambda^{-1}$, so \assumptionlink{B2} prevents appreciable coupling weight from migrating to modes
whose periods become arbitrarily long relative to $T_\lambda$.  It is therefore an infrared
uniformity condition, not an ultraviolet cutoff: the overall coupling strength and bandwidth may
still grow.  It holds, for example, under a uniform dimensionless frequency gap (i.e.,
$\omega_\lambda(k)\geq\omega_{\min}>0$ for all relevant $k$), or when a
monotone dispersion is gapless but the coupling vanishes below a uniform mode cutoff whose
dimensionless frequency stays bounded away from zero.  It also allows genuinely gapless coupling
when the coupling amplitude vanishes sufficiently regularly near zero energy; these criteria are made precise in
Lemmas~\ref{lem:B2-frequency-gap} and \ref{lem:B2-energy-localization} of the S.M.

Retaining \assumptionlink{A1}--\assumptionlink{A2} and imposing \assumptionlink{B1}--\assumptionlink{B2}, Theorem~\ref{thm:quiet-relative-speed} of the S.M. proves
\begin{equation}
 \omega_{\mathrm{gate},\lambda}T_\lambda
 \le c_{\mathrm{quiet}}\sqrt{T_\lambda E_\lambda},
 \label{eq:quiet-bound}
\end{equation}
where $c_{\mathrm{quiet}}$ is independent of $\lambda$, of the detailed coupling shape, and of the
possibly drifting ground energy.  The mechanism is transparent.  Quiet interaction action first
bounds the unweighted coupling norm (Lemma~\ref{lem:quiet-controls-Gamma} of the S.M.); infrared compatibility converts this into a bound on $C_g$;
Eq.~\eqref{eq:fixed-bound} then controls the exact fidelity rate.  Because the quiet windows can
shrink and need occur on only one side, evading this theorem entails a physically significant
failure: unbounded off-gate action, pathological migration of coupling to zero frequency, loss of
field-energy control, or a vanishing target transition.

These requirements are not tailored to defeat the known quantum-frequential construction of~\cite{Woods2024}.  The
model there satisfies \assumptionlink{A0}--\assumptionlink{A2} and analogues to \assumptionlink{B1}--\assumptionlink{B2} (since that model has different interaction terms, \assumptionlink{B1}--\assumptionlink{B2} as stated are not well-defined without adaptation to the non-linear coupling); see Sec.~\ref{sec:nonlinear-clock-comparison} and Theorem~\ref{thm:nonlinear-clock-comparison} of the S.M.  Nevertheless, for fixed $T_\lambda$ it reaches 
\begin{equation}
	\omega_{\mathrm{gate}}T_\lambda\propto (T_\lambda E)^{1-\epsilon}
\end{equation}
 for any fixed
$\epsilon>0$ allowed by the construction, arbitrarily close to linear scaling, without
contradicting Eqs.~\eqref{eq:uniform-bound} or \eqref{eq:quiet-bound}.  This comparison both
supports the physical content of the assumptions and isolates linearity of the bosonic
interaction as the decisive restriction.

\textit{Numerical illustration.---}
The S.M. analyzes a Gaussian waveguide with $\omega(k)=v|k|$ and an
incoming pulse supported on the positive-$k$ branch,
$g(k)=O(k)$ near zero, and Gaussian packet states
$D(\alpha)S(re^{i\phi})\lvert\mathrm{vac}\rangle$, where $D$ and $S$ are the displacement and squeezing operators, respectively.  Varying only the coherent displacement, we evolved the full qubit--field dynamics for ten input states at fixed Hamiltonian and reference time $T_\lambda$.  The resulting family realizes the common target transition and satisfies \assumptionlink{A0} and \assumptionlink{B2} analytically while supporting \assumptionlink{A1}, \assumptionlink{A2}, and \assumptionlink{B1} numerically.  Over this factor-four energy window, its dimensionless threshold frequencies closely follow a square-root fit ($R^2=0.9924$; see Fig.~\ref{fig:irg-full-dynamics-sweep} of the S.M.).  Repeating the calculation for seven coherent inputs over a factor-$64$ energy window [Fig.~\ref{fig:irg-full-dynamics-extended}, upper panel] gives $R^2=0.9790$ and systematic sub-square-root curvature, consistent with an upper
bound that need not be saturated by this pulse family.

We also evolved the full qubit--field state for displacement-quadrature-squeezed packets with
$r=0.25$ and $0.5$, $\phi_s=\pi$, at five energies, without changing the Hamiltonian, reference
time, or gate criterion.  All ten squeezed trajectories realize the same $q_0=0.98$
transition.  Although the initial squeezed-quadrature variance is reduced by $39.3\%$ and
$63.2\%$, the largest matched frequency shifts are only $0.080\%$ and $0.297\%$, respectively,
and fall below $0.0014\%$ at the highest energy.  The lower panel of
Fig.~\ref{fig:irg-full-dynamics-extended} resolves these relative shifts.  Thus, in
this regime the gate frequency is controlled primarily by the coherent displacement and is
only weakly sensitive to moderate squeezing of the packet covariance.  These many-body
numerics illustrate consistency with the bound; they neither replace the state-independent
theorem nor establish universal squeezing insensitivity.

\textit{Conclusion.---}
The square-root exponent attained by conventional coherent-state drives is optimal throughout
the linear bosonic interaction class.  This includes finite-energy pulses from Heisenberg-limited
lasers \cite{Baker2021,Ostrowski2023PRL,Ostrowski2023PRA,Ostrowski2025}: their enhanced coherence and nonclassical photon statistics
cannot supply the quantum-frequential advantage through a fixed linear atom--field coupling.
Such lasers may nevertheless provide suitable nonclassical drives when combined with the
nonlinear coupling of Ref.~\cite{Woods2024}.
The natural next technological step is therefore not
further state engineering within present linear-control architectures, but engineering the
nonlinear quantum-frequential coupling of Ref.~\cite{Woods2024}.  Taken together, the two results
identify both a non-SQL drive state and an interaction outside the linear bosonic class as
necessary for the known quadratic advantage.  An intuitive distinction is the nonlinear
construction's ability to generate
large, localized transient drive--qubit entanglement: it is suppressed before and after gate realization.  The linear model permits entanglement---we never bound it---yet its system-changing
action remains tied to a square-root energy norm.  This is an interpretation,
not an additional entanglement theorem; it nevertheless motivates nonlinear phase-dependent
couplers that preserve quiet pulse boundaries.

The conclusion fits a broader pattern.  Gaussian continuous-variable operations alone are
efficiently classically simulable, while universal continuous-variable computation requires a
nonlinear element \cite{Lloyd1999,Bartlett2002}; linear-optical universality similarly uses
measurement to supply an effective nonlinearity \cite{Knill2001}.  Our result is a dynamical,
energy-constrained counterpart: for autonomous gate drives, linear bosonic interactions erase the
putative squeezed-state advantage in the frequency--energy exponent.  Engineered nonlinearity is
therefore not merely an implementation detail but can lead to an important quantum advantage in gate implementation performance reducing a known bottleneck in superconducting quantum computation (recall introduction).

\makeatletter
\let\maintextaddcontentsline\addcontentsline
\renewcommand{\addcontentsline}[3]{}
\makeatother

\begin{acknowledgments}
\textit{Acknowledgments.---}
AI use disclosure: Preliminary versions of the main results were developed by P.R.G. as part of his master's thesis under the supervision of M.P.W. During subsequent development of the results and preparation of this manuscript, M.P.W. used OpenAI Codex with GPT-5.6 Sol to assist in refining and extending theorem statements and proofs. M.P.W. directed this use by specifying the assumptions, target statements, and independently and rigorously verified every AI-assisted mathematical argument. It was an highly interactive processes over several months. M.P.W. assumes primary responsibility for the correctness of the final theorem statements and proofs.
This work was supported by the French National Research Agency (ANR) under the France 2030 program, Grant No.~ANR-22-PETQ-0006.
\end{acknowledgments}

\bibliography{linear_bosonic_control_prl}

\makeatletter
\let\addcontentsline\maintextaddcontentsline
\makeatother

\clearpage
\onecolumngrid
\appendix
\setcounter{secnumdepth}{3}
\include{self_contained_square_root_speed_boundsV2_fullR}

\end{document}

%% file: thesis_preamble_commands.tex
\newcounter{bound}



%% file: self_contained_square_root_speed_boundsV2_fullR.tex
\makeatletter
\begin{center}
  {\Large\bfseries Supplemental Material for\\[0.5ex]
  \textit{\@title}}
\end{center}
\makeatother
\renewcommand{\tocname}{Contents of the Supplemental Material}
\tableofcontents
\section{Model and exact transition-speed bound}
\label{sec:model}

Throughout, $\hbar=1$.  The controlled system has a finite-dimensional Hilbert space
$\mathcal H_Q$.  The drive is a bosonic continuum with Fock space $\mathcal H_F$ and
annihilation and creation operators satisfying
\begin{equation}
  [a_k,a_{k'}]=0,
  \qquad
  [a_k,a_{k'}^\dagger]=\delta(k-k').
  \label{eq:CCR}
\end{equation}
The mode label $k\in\mathbb R$ is a signed spectral coordinate, while $\omega(k)\ge0$ is the
physical mode frequency; no particular realization of the bosonic drive is assumed.

\subsection{Hamiltonians and regularity assumptions}

The uncoupled Hamiltonian is
\begin{equation}
  H_0
  :=H_Q\otimes\id_F+\id_Q\otimes H_F,
  \qquad
  H_F
  :=\int_{\mathbb R}\dd k\,\omega(k)a_k^\dagger a_k,
  \label{eq:free-Hamiltonian}
\end{equation}
where $H_Q=H_Q^\dagger$ and $\omega:\mathbb R\to[0,\infty)$ is measurable and finite almost
everywhere.  In particular, $H_F\ge0$ and
the Fock vacuum $\ket{\mathrm{vac}}$ satisfies $H_F\ket{\mathrm{vac}}=0$.

The interaction and total Hamiltonian are
\begin{align}
  V
  &:=\sigma\otimes B+\sigma^\dagger\otimes B^\dagger,
  \label{eq:interaction-V}\\
  B
  &:=\int_{\mathbb R}\dd k\,g(k)a_k,
  \label{eq:collective-B}\\
  H
  &:=H_0+V.
  \label{eq:total-Hamiltonian}
\end{align}
Here $\sigma$ is a nonzero operator on $\mathcal H_Q$.  We use the induced operator norm
\begin{equation}
  \lVert A\rVert
  :=\sup_{0\ne u\in\mathcal H_Q}
  \frac{\lVert Au\rVert_2}{\lVert u\rVert_2},
  \qquad
  \lVert u\rVert_2:=\sqrt{\langle u\mid u\rangle},
  \label{eq:operator-norm}
\end{equation}
and normalize
\begin{equation}
  \lVert\sigma\rVert=1.
  \label{eq:sigma-normalization}
\end{equation}
This entails no loss of generality: replacing $\sigma$ by
$\sigma/\lVert\sigma\rVert$ and $g(k)$ by $\lVert\sigma\rVert g(k)$ leaves $V$ unchanged.

We assume
\begin{equation}
  g\in L^2(\mathbb R),
  \qquad
  C_g
  :=\int_{\{k\in\mathbb R:\,g(k)\ne0\}} \dd k\,
  \frac{|g(k)|^2}{\omega(k)}
  <\infty.
  \label{eq:Cg-definition}
\end{equation}
The first condition defines the smeared creation and annihilation operators in
\eqref{eq:collective-B}; the second is the energy-weighted regularity condition used in the
speed estimate.  The coupled set $\{k:g(k)\ne0\}$ is understood modulo null sets.  Equivalently,
on the full mode domain $\mathbb R$, the quotient $|g|^2/\omega$ is assigned the value zero
where $g=0$, including when $\omega=0$, and the value $+\infty$ where $g\ne0$ but $\omega=0$.
Thus $C_g<\infty$ requires $\omega(k)>0$ almost everywhere on the coupled set, while permitting
uncoupled zero-frequency modes.  Quotients involving $g/\sqrt{\omega}$ or $g/\omega$ below are
likewise set to zero where $g=0$.  The same convention is used memberwise for
$C_{g_\lambda}$ introduced later.

The signed-coordinate representation used here is unitarily equivalent to a half-line
representation with a two-valued branch index: splitting a one-particle wavefunction into
$f_+(\kappa):=f(\kappa)$ and $f_-(\kappa):=f(-\kappa)$ identifies
$L^2(\mathbb R)$ with $L^2((0,\infty);\mathbb C^2)$.  Thus using the full line changes only
how opposite propagation branches are labelled; it does not add a dynamical assumption.

We also assume that the total Hamiltonian is physically well defined and stable: the coupled
system has a finite lowest possible energy.  This lowest energy need not be an eigenvalue, so a
normalizable ground state need not exist.  Mathematically, we assume that $H$, defined on the
natural common domain of $H_0$ and $V$, is self-adjoint and bounded from below.  The initial
joint state $\rho_0$ is allowed to be mixed and correlated.  It
is assumed to have finite expectations for every energy quantity displayed below.  All
commutators and traces used in the proof are assumed to be well defined on a common
finite-energy core.  Equivalently, the estimates can first be proved on that core and then
extended by finite-energy approximation; see, for example, the standard domain and core
treatment of Pauli--Fierz Hamiltonians in Ref.~\cite{Hiroshima2002}.

\subsection{Exact interaction-picture dynamics}

Define
\begin{equation}
  U^{\picsup{I}}(t):=e^{iH_0t}e^{-iHt},
  \qquad
  \rho^{\picsup{I}}(t):=U^{\picsup{I}}(t)\rho_0U^{\picsup{I}\dagger}(t).
  \label{eq:interaction-picture-state}
\end{equation}
The canonical commutation relations imply
\begin{equation}
  [H_F,a_k]=-\omega(k)a_k.
  \label{eq:HF-ak-commutator}
\end{equation}
Consequently,
\begin{align}
  \frac{\dd}{\dd t}
  \left(e^{iH_Ft}a_ke^{-iH_Ft}\right)
  &=i e^{iH_Ft}[H_F,a_k]e^{-iH_Ft}
  \notag\\
  &=-i\omega(k)e^{iH_Ft}a_ke^{-iH_Ft}.
  \label{eq:free-ak-differential-equation}
\end{align}
The initial value at $t=0$ is $a_k$, so the unique solution is
\begin{equation}
  e^{iH_Ft}a_ke^{-iH_Ft}=e^{-i\omega(k)t}a_k.
  \label{eq:free-ak-evolution}
\end{equation}
It follows that the exact interaction-picture Hamiltonian is
\begin{align}
  H_{\mathrm{int}}(t)
  &:=e^{iH_0t}Ve^{-iH_0t}
  \notag\\
  &=\sigma(t)\otimes B(t)
  +\sigma^\dagger(t)\otimes B^\dagger(t),
  \label{eq:interaction-picture-Hamiltonian}\\
  \sigma(t)
  &:=e^{iH_Qt}\sigma e^{-iH_Qt},
  \label{eq:sigma-t}\\
  B(t)
  &:=\int_{\mathbb R}\dd k\,g(k)e^{-i\omega(k)t}a_k.
  \label{eq:B-t}
\end{align}
Differentiating $U^{\picsup{I}}(t)$ gives
\begin{equation}
  i\frac{\dd}{\dd t}U^{\picsup{I}}(t)=H_{\mathrm{int}}(t)U^{\picsup{I}}(t),
  \label{eq:UI-equation}
\end{equation}
and hence the exact density operator obeys
\begin{equation}
  \frac{\dd}{\dd t}\rho^{\picsup{I}}(t)
  =-i[H_{\mathrm{int}}(t),\rho^{\picsup{I}}(t)].
  \label{eq:von-Neumann-equation}
\end{equation}

\subsection{Energy resource and gate interval}

Let
\begin{equation}
  E_{\mathrm{gs}}:=\inf\operatorname{spec}(H),
  \qquad
  E:=\Tr(\rho_0H)-E_{\mathrm{gs}}>0.
  \label{eq:energy-resource}
\end{equation}
Thus $E$ is the initial mean energy above the bottom of the spectrum of the actual total
Hamiltonian.  A normalizable ground-state vector is not required.

The free-field energy at time $t$ is
\begin{equation}
  E_F(t)
  :=\Tr\!\left[\rho^{\picsup{I}}(t)(\id_Q\otimes H_F)\right].
  \label{eq:field-energy-t}
\end{equation}
Because $[H_0,\id_Q\otimes H_F]=0$, this quantity has the same value in the interaction and
Schr\"odinger pictures.  We assume that, during the gate interval defined below, there is a
constant $C_{\mathrm{fld}}<\infty$ such that
\begin{equation}
  \sup_{t\in I}E_F(t)\le C_{\mathrm{fld}}E.
  \label{eq:field-energy-control}
\end{equation}

\begin{definition}[Threshold gate interval]
\label{def:gate-interval}
Let
\begin{equation}
  \rho_Q^{\picsup{I}}(t):=\Tr_F\rho^{\picsup{I}}(t)
  \label{eq:reduced-state}
\end{equation}
be the reduced state of the controlled system in the interaction picture.  Fix a target vector
$\ket{\psi_{\mathrm{tar}}}$ on system $Q$ and define
\begin{equation}
  P_{\mathrm{tar}}
  :=\ket{\psi_{\mathrm{tar}}}\bra{\psi_{\mathrm{tar}}},
  \qquad
  F_{\mathrm{tar}}(t)
  :=\Tr[P_{\mathrm{tar}}\rho_Q^{\picsup{I}}(t)].
  \label{eq:target-fidelity}
\end{equation}
A threshold gate interval is an interval
\begin{equation}
  I=[t_{\mathrm{in}},t_{\mathrm{out}}],
  \qquad
  \Delta t:=t_{\mathrm{out}}-t_{\mathrm{in}}>0,
  \label{eq:gate-interval}
\end{equation}
on which $F_{\mathrm{tar}}$ is absolutely continuous and
\begin{equation}
  F_{\mathrm{tar}}(t_{\mathrm{out}})
  -F_{\mathrm{tar}}(t_{\mathrm{in}})
  \ge q>0.
  \label{eq:fidelity-gap}
\end{equation}
The associated gate frequency is
\begin{equation}
  \omega_{\mathrm{gate}}:=\frac{2\pi}{\Delta t}.
  \label{eq:gate-frequency}
\end{equation}
\end{definition}

Only the fidelity increase $q$ is used in the mathematical estimate.  Any operational
requirements imposed before or after $I$ serve to determine which threshold crossing is
accepted as the completed gate.

\subsection{Fixed-Hamiltonian square-root bound}

The following estimate is a special case of the norm formula for inner derivations
\cite{Stampfli1970}.  We reproduce its elementary projection proof for completeness.

\begin{lemma}[Commutator with a projection]
\label{lem:projection-commutator}
If $P$ is an orthogonal projection and $S$ is a bounded operator on $\mathcal H_Q$, then
\begin{equation}
  \lVert[P,S]\rVert\le\lVert S\rVert.
  \label{eq:projection-commutator-bound}
\end{equation}
\end{lemma}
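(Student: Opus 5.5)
The plan is to decompose $S$ into blocks with respect to the orthogonal splitting $\mathcal H_Q=P\mathcal H_Q\oplus P^\perp\mathcal H_Q$, where $P^\perp:=\id-P$. Writing $S=PSP+PSP^\perp+P^\perp SP+P^\perp SP^\perp$, the diagonal blocks commute with $P$. A direct computation then gives
\begin{equation*}
  [P,S]=PS-SP=PSP^\perp-P^\perp SP .
\end{equation*}
So the commutator retains only the two off-diagonal blocks, with a relative sign.

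Next I would estimate $\lVert[P,S]u\rVert_2$ for an arbitrary unit vector $u$. Set $u_1:=Pu$ and $u_2:=P^\perp u$, so that $\lVert u_1\rVert_2^2+\lVert u_2\rVert_2^2=1$. Then
\begin{equation*}
  [P,S]u=PSu_2-P^\perp Su_1 .
\end{equation*}
The two summands lie in the orthogonal ranges of $P$ and $P^\perp$, so by Pythagoras
\begin{equation*}
  \lVert[P,S]u\rVert_2^2=\lVert PSu_2\rVert_2^2+\lVert P^\perp Su_1\rVert_2^2 .
\end{equation*}
Since projections are contractions, $\lVert PSu_2\rVert_2\le\lVert S\rVert\,\lVert u_2\rVert_2$ and $\lVert P^\perp Su_1\rVert_2\le\lVert S\rVert\,\lVert u_1\rVert_2$. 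Summing gives $\lVert[P,S]u\rVert_2^2\le\lVert S\rVert^2(\lVert u_1\rVert_2^2+\lVert u_2\rVert_2^2)=\lVert S\rVert^2$. Taking the supremum over unit $u$ in the definition \eqref{eq:operator-norm} yields \eqref{eq:projection-commutator-bound}.

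There is no genuine obstacle here. The only point needing care is to use orthogonality of the output components, rather than the triangle inequality, which would give only the weaker bound $2\lVert S\rVert$. Equivalently, one can note that $[P,S]=(PSP^\perp-P^\perp SP)$ equals $U(PSP^\perp+P^\perp SP)$ with the unitary $U=P-P^\perp$. The block anti-diagonal operator then has norm $\max(\lVert PSP^\perp\rVert,\lVert P^\perp SP\rVert)\le\lVert S\rVert$. I would present the vector computation above as the main argument, since it is fully elementary and works verbatim on $\mathcal H_Q$.
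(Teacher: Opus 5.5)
Your proposal is correct and follows essentially the same route as the paper's proof: the block identity $[P,S]=PSP^\perp-P^\perp SP$, Pythagoras on the orthogonal output components, the contraction bounds by $\lVert S\rVert\,\lVert P^\perp u\rVert_2$ and $\lVert S\rVert\,\lVert Pu\rVert_2$, and Pythagoras on $u=Pu+P^\perp u$. Your alternative factorization through the unitary $P-P^\perp$ is also valid; it is the same device as the operator $S_P$ that the paper uses later in Assumption B1.
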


\begin{proof}
Put $Q:=\id_Q-P$.  Since $P^2=P$,
\begin{align}
  [P,S]
  &=PS-SP
  \notag\\
  &=PS(P+Q)-(P+Q)SP
  \notag\\
  &=PSQ-QSP.
  \label{eq:projection-block-form}
\end{align}
For any $u\in\mathcal H_Q$, the vectors $PSQ\,u$ and $QSP\,u$ lie in the orthogonal subspaces
$P\mathcal H_Q$ and $Q\mathcal H_Q$.  Therefore
\begin{equation}
  \lVert[P,S]u\rVert_2^2
  =\lVert PSQ\,u\rVert_2^2+\lVert QSP\,u\rVert_2^2.
  \label{eq:projection-Pythagoras-first}
\end{equation}
Orthogonal projections are contractions and the induced operator norm is submultiplicative, so
\begin{align}
  \lVert PSQ\,u\rVert_2
  &\le\lVert P\rVert\lVert S\rVert\lVert Qu\rVert_2
  \le\lVert S\rVert\lVert Qu\rVert_2,
  \notag\\
  \lVert QSP\,u\rVert_2
  &\le\lVert Q\rVert\lVert S\rVert\lVert Pu\rVert_2
  \le\lVert S\rVert\lVert Pu\rVert_2.
  \label{eq:projection-contractions}
\end{align}
Because $Pu\perp Qu$ and $u=Pu+Qu$,
\begin{equation}
  \lVert Pu\rVert_2^2+\lVert Qu\rVert_2^2=\lVert u\rVert_2^2.
  \label{eq:projection-Pythagoras-second}
\end{equation}
Substituting \eqref{eq:projection-contractions} and
\eqref{eq:projection-Pythagoras-second} into
\eqref{eq:projection-Pythagoras-first} gives
\begin{equation}
  \lVert[P,S]u\rVert_2^2
  \le\lVert S\rVert^2\lVert u\rVert_2^2.
  \label{eq:projection-vector-bound}
\end{equation}
Taking the supremum over nonzero $u$ proves
\eqref{eq:projection-commutator-bound}.
\end{proof}

\begin{lemma}[Energy control of the collective annihilation operator]
\label{lem:B-energy}
For every joint density operator $\rho$ with finite free-field energy,
\begin{equation}
  \Tr\!\left[\rho(\id_Q\otimes B^\dagger(t)B(t))\right]
  \le
  C_g\Tr\!\left[\rho(\id_Q\otimes H_F)\right].
  \label{eq:B-energy-state-bound}
\end{equation}
\end{lemma}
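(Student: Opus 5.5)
The plan is to reduce the statement to a pointwise operator inequality on the one-particle level, and then lift it to the joint state by linearity.

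\emph{Step 1 (one-particle bound).} Work first with vectors in the finite-particle, finite-energy core, where all expressions are well defined. Fix $t$ and write $g_t(k):=g(k)e^{-i\omega(k)t}$. Then $B(t)=\int\dd k\,g_t(k)a_k$ and $|g_t|=|g|$, so $C_{g_t}=C_g$ and it suffices to treat $t=0$. Take a vector $\Phi\in\mathcal H_Q\otimes\mathcal H_F$ in the core, with component functions $(a_k\Phi)$ viewed as an $\mathcal H_Q\otimes\mathcal H_F$-valued function of $k$. Off the coupled set the integrand of $B\Phi$ vanishes, so we may restrict the integral to $\{g\ne0\}$, where $\omega>0$ almost everywhere. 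There we factor $g(k)a_k\Phi=\bigl(g(k)/\sqrt{\omega(k)}\bigr)\bigl(\sqrt{\omega(k)}\,a_k\Phi\bigr)$. The vector-valued Cauchy--Schwarz (Minkowski--Bochner) inequality then gives
\begin{equation*}
  \lVert(\id_Q\otimes B)\Phi\rVert^2
  \le\Bigl(\int\dd k\,\frac{|g(k)|}{\sqrt{\omega(k)}}\,\sqrt{\omega(k)}\,\lVert a_k\Phi\rVert\Bigr)^2
  \le C_g\int\dd k\,\omega(k)\lVert a_k\Phi\rVert^2 .
\end{equation*}
The last integral is exactly $\langle\Phi,(\id_Q\otimes H_F)\Phi\rangle$, by the definition of $H_F$ as a quadratic form.

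\emph{Step 2 (lift to density operators).} Decompose $\rho=\sum_j p_j\ket{\Phi_j}\bra{\Phi_j}$ spectrally. Summing the vector inequality with weights $p_j$ yields $\Tr[\rho(\id_Q\otimes B^\dagger B)]\le C_g\Tr[\rho(\id_Q\otimes H_F)]$. Here $\Tr[\rho B^\dagger B]$ is interpreted as $\sum_jp_j\lVert B\Phi_j\rVert^2$. Finite free-field energy guarantees that each relevant $\Phi_j$ lies in the form domain of $H_F$, and Step 1 shows it then lies in the domain of $B$.

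\emph{Step 3 (extension and the main obstacle).} The delicate step is justifying Step 1 for general form-domain vectors rather than core vectors. This requires $k\mapsto a_k\Phi$ to be a measurable, square-integrable family with $\int\omega\lVert a_k\Phi\rVert^2=\langle\Phi,H_F\Phi\rangle$.

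The approach is to prove the bound on finite-particle vectors with nice wavefunctions. In the $n$-particle sector, $a_k\Phi$ is $\sqrt n\,\Phi(k,\cdot)$, and both identities follow from Fubini together with the explicit form of $H_F$ as multiplication by $\sum_i\omega(k_i)$.

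We then extend by approximation in the $H_F$-form norm. The bound itself shows that $B$ is $H_F^{1/2}$-bounded, so $B$ extends by closure, and the inequality passes to the limit.
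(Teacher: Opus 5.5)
Your proof is correct and follows essentially the same route as the paper: factor $g(k)a_k=(g/\sqrt{\omega})(\sqrt{\omega}\,a_k)$, apply Cauchy--Schwarz over the modes to get $\lVert B(t)\Psi\rVert_2^2\le C_g\langle\Psi|\id_Q\otimes H_F|\Psi\rangle$, and then sum over a spectral decomposition of $\rho$. The differences are minor. The paper gets the vector bound by testing against an arbitrary unit vector (a duality form of Cauchy--Schwarz that avoids Bochner integrability), whereas you use the Minkowski inequality for the Bochner integral and also sketch the form-core extension, which the paper leaves to its standing domain assumptions.
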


\begin{proof}
Let $\ket{\Psi}$ be a joint vector in the form domain of $\id_Q\otimes H_F$.  Write
\begin{equation}
  B(t)\ket{\Psi}
  =\int_{\mathbb R}\dd k\,
  \frac{g(k)}{\sqrt{\omega(k)}}
  \left(\sqrt{\omega(k)}e^{-i\omega(k)t}a_k\ket{\Psi}\right).
  \label{eq:B-weighted-vector-integral}
\end{equation}
To see explicitly how Cauchy--Schwarz applies to this vector-valued integral, let
$\ket{\Phi}$ be any unit vector.  Then
\begin{align}
  \left|
    \left\langle\Phi\middle|B(t)\middle|\Psi\right\rangle
  \right|
  &\le
  \int_{\mathbb R}\dd k\,
  \frac{|g(k)|}{\sqrt{\omega(k)}}
  \sqrt{\omega(k)}
  \left|
    \left\langle\Phi\middle|a_k\middle|\Psi\right\rangle
  \right|
  \notag\\
  &\le
  \left(
    \int_{\mathbb R}\dd k\,
    \frac{|g(k)|^2}{\omega(k)}
  \right)^{1/2}
  \left(
    \int_{\mathbb R}\dd k\,
    \omega(k)
    \left|
      \left\langle\Phi\middle|a_k\middle|\Psi\right\rangle
    \right|^2
  \right)^{1/2}
  \notag\\
  &\le
  \sqrt{C_g}
  \left(
    \int_{\mathbb R}\dd k\,
    \omega(k)\lVert a_k\ket{\Psi}\rVert_2^2
  \right)^{1/2}.
  \label{eq:B-duality-Cauchy-Schwarz}
\end{align}
The last inequality uses
$|\langle\Phi|a_k|\Psi\rangle|\le\lVert a_k\ket{\Psi}\rVert_2$.
Taking the supremum over unit $\ket{\Phi}$ gives
\begin{align}
  \lVert B(t)\ket{\Psi}\rVert_2^2
  &\le
  C_g\int_{\mathbb R}\dd k\,
  \omega(k)\lVert a_k\ket{\Psi}\rVert_2^2
  \notag\\
  &=C_g
  \left\langle\Psi\middle|
  \id_Q\otimes H_F
  \middle|\Psi\right\rangle.
  \label{eq:B-energy-vector-bound}
\end{align}
Now take a spectral decomposition
$\rho=\sum_jp_j\ket{\Psi_j}\bra{\Psi_j}$, with $p_j\ge0$ and $\sum_jp_j=1$.
Applying \eqref{eq:B-energy-vector-bound} to each vector and summing gives
\begin{align}
  \Tr[\rho(\id_Q\otimes B^\dagger(t)B(t))]
  &=\sum_jp_j\lVert B(t)\ket{\Psi_j}\rVert_2^2
  \notag\\
  &\le C_g\sum_jp_j
  \left\langle\Psi_j\middle|
  \id_Q\otimes H_F
  \middle|\Psi_j\right\rangle,
  \label{eq:B-energy-mixed-step}
\end{align}
which is \eqref{eq:B-energy-state-bound}.
\end{proof}

\setcounter{theorem}{7}
\begin{theorem}[Exact full-dynamics speed bound]
\label{thm:exact-speed}
Under the assumptions of Section~\ref{sec:model}, every threshold gate interval in
Definition~\ref{def:gate-interval} satisfying the field-energy condition
\eqref{eq:field-energy-control} obeys
\begin{equation}
  \omega_{\mathrm{gate}}
  \le
  \frac{4\pi}{q}\sqrt{C_gC_{\mathrm{fld}}E}.
  \label{eq:theorem8-frequency}
\end{equation}
\end{theorem}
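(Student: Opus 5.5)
We differentiate the target fidelity along the exact interaction-picture dynamics and bound the rate pointwise on $I$. By \eqref{eq:von-Neumann-equation} and cyclicity of the trace (justified on the finite-energy core),
\begin{equation*}
  \dot F_{\mathrm{tar}}(t)=-i\Tr\!\left[\rho^{\picsup{I}}(t)\,[P_{\mathrm{tar}}\otimes\id_F,H_{\mathrm{int}}(t)]\right].
\end{equation*}
We insert \eqref{eq:interaction-picture-Hamiltonian} and write $S(t):=[P_{\mathrm{tar}},\sigma(t)]$. The commutator $[P_{\mathrm{tar}},\sigma^\dagger(t)]$ equals $-S(t)^\dagger$. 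Hence
\begin{equation*}
  [P_{\mathrm{tar}}\otimes\id_F,H_{\mathrm{int}}(t)]=X-X^\dagger,
  \qquad X:=S(t)\otimes B(t).
\end{equation*}
It follows that $\dot F_{\mathrm{tar}}=-i\langle X-X^\dagger\rangle=2\,\mathrm{Im}\langle X\rangle$, and so $|\dot F_{\mathrm{tar}}|\le 2|\langle X\rangle|$. This is the step where the two adjoint contributions are combined before Cauchy--Schwarz is applied, exactly as advertised in the main text. It avoids any appearance of $BB^\dagger$ and therefore of the vacuum term $[B,B^\dagger]=\Gamma_g$.

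Next we bound $|\langle X\rangle|$. We purify or spectrally decompose $\rho^{\picsup{I}}(t)=\sum_j p_j\ket{\Psi_j}\bra{\Psi_j}$. Since $S\otimes B=(S\otimes\id_F)(\id_Q\otimes B)$, we have $\langle\Psi_j|X|\Psi_j\rangle=\langle (S^\dagger\otimes\id)\Psi_j|(\id\otimes B)\Psi_j\rangle$. Cauchy--Schwarz for vectors, then for the convex sum, gives
\begin{equation*}
  |\langle X\rangle|\le \lVert S(t)\rVert\,\Tr[\rho^{\picsup{I}}(t)(\id_Q\otimes B^\dagger(t)B(t))]^{1/2}.
\end{equation*}
Lemma~\ref{lem:projection-commutator} gives $\lVert S(t)\rVert\le\lVert\sigma(t)\rVert=\lVert\sigma\rVert=1$, using unitary conjugation and \eqref{eq:sigma-normalization}. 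Lemma~\ref{lem:B-energy}, applied to $\rho^{\picsup{I}}(t)$, bounds the expectation of $B^\dagger B$ by $C_gE_F(t)$. The field-energy condition \eqref{eq:field-energy-control} then yields the uniform bound
\begin{equation*}
  |\dot F_{\mathrm{tar}}(t)|\le 2\sqrt{C_gC_{\mathrm{fld}}E}
\end{equation*}
for almost every $t\in I$.

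Finally, absolute continuity of $F_{\mathrm{tar}}$ on $I$ (Definition~\ref{def:gate-interval}) lets us integrate:
\begin{equation*}
  q\le F_{\mathrm{tar}}(t_{\mathrm{out}})-F_{\mathrm{tar}}(t_{\mathrm{in}})\le 2\Delta t\sqrt{C_gC_{\mathrm{fld}}E}.
\end{equation*}
Hence $\omega_{\mathrm{gate}}=2\pi/\Delta t\le (4\pi/q)\sqrt{C_gC_{\mathrm{fld}}E}$, which is \eqref{eq:theorem8-frequency}.

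The main obstacle is not the algebra but the rigorous identification of the a.e.\ derivative of $F_{\mathrm{tar}}$ with the commutator expectation for unbounded $B(t)$ and mixed states of finite energy. I would first prove the rate formula for vectors in the finite-energy core, where $t\mapsto U^{\picsup{I}}(t)\Psi$ is differentiable and $B(t)$ is relatively form-bounded by $H_F$ (this is exactly Lemma~\ref{lem:B-energy}). I would then pass to general $\rho_0$ via the integrated form of the inequality, $|F(t_2)-F(t_1)|\le\int_{t_1}^{t_2}2\sqrt{C_gE_F}\,\dd t$, which is stable under finite-energy approximation. Relying on the stated standing assumption that such commutators and traces are well defined on a common core, this step can be taken essentially as given.
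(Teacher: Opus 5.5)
Your proposal is correct and follows the paper's proof essentially step for step. It combines the adjoint pair into $2\,\mathrm{Im}\langle X\rangle$, then applies Cauchy--Schwarz, Lemma~\ref{lem:projection-commutator}, Lemma~\ref{lem:B-energy}, the field-energy condition, and integration via absolute continuity. The only cosmetic difference is that you apply Cauchy--Schwarz after moving $S^\dagger\otimes\id_F$ onto the bra vector. The paper instead uses $|z|^2\le\Tr[\rho X^\dagger X]$ followed by the operator bound $A^\dagger A\otimes B^\dagger B\le\lVert A\rVert^2\,\id_Q\otimes B^\dagger B$, and both routes give the same estimate.
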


\begin{proof}
Since
$F_{\mathrm{tar}}(t)=\Tr[\rho^{\picsup{I}}(t)(P_{\mathrm{tar}}\otimes\id_F)]$,
the von Neumann equation and cyclicity of the trace give
\begin{align}
  \frac{\dd}{\dd t}F_{\mathrm{tar}}(t)
  &=-i\Tr\!\left[
    (P_{\mathrm{tar}}\otimes\id_F)
    [H_{\mathrm{int}}(t),\rho^{\picsup{I}}(t)]
  \right]
  \notag\\
  &=-i\Tr\!\left[
    \rho^{\picsup{I}}(t)
    [P_{\mathrm{tar}}\otimes\id_F,H_{\mathrm{int}}(t)]
  \right].
  \label{eq:theorem8-fidelity-derivative-first}
\end{align}
For brevity, define
\begin{equation}
  A(t):=[P_{\mathrm{tar}},\sigma(t)],
  \qquad
  X(t):=A(t)\otimes B(t).
  \label{eq:theorem8-A-X}
\end{equation}
Because $P_{\mathrm{tar}}=P_{\mathrm{tar}}^\dagger$,
\begin{align}
  [P_{\mathrm{tar}},\sigma^\dagger(t)]
  &=P_{\mathrm{tar}}\sigma^\dagger(t)
    -\sigma^\dagger(t)P_{\mathrm{tar}}
  \notag\\
  &=-\left(
    P_{\mathrm{tar}}\sigma(t)-\sigma(t)P_{\mathrm{tar}}
  \right)^\dagger
  =-A^\dagger(t).
  \label{eq:theorem8-adjoint-commutator}
\end{align}
Substitution of \eqref{eq:interaction-picture-Hamiltonian} therefore yields
\begin{align}
  [P_{\mathrm{tar}}\otimes\id_F,H_{\mathrm{int}}(t)]
  &=A(t)\otimes B(t)-A^\dagger(t)\otimes B^\dagger(t)
  \notag\\
  &=X(t)-X^\dagger(t).
  \label{eq:theorem8-X-minus-adjoint}
\end{align}
Set
\begin{equation}
  z(t):=\Tr[\rho^{\picsup{I}}(t)X(t)].
  \label{eq:theorem8-z}
\end{equation}
Since $\rho^{\picsup{I}}(t)=\rho^{\picsup{I}\dagger}(t)$,
$\Tr[\rho^{\picsup{I}}(t)X^\dagger(t)]=z(t)^*$.  Equations
\eqref{eq:theorem8-fidelity-derivative-first} and
\eqref{eq:theorem8-X-minus-adjoint} consequently give
\begin{equation}
  \frac{\dd}{\dd t}F_{\mathrm{tar}}(t)
  =-i\bigl(z(t)-z(t)^*\bigr)
  =2\operatorname{Im}z(t),
  \label{eq:theorem8-fidelity-derivative-z}
\end{equation}
and hence
\begin{equation}
  \left|\frac{\dd}{\dd t}F_{\mathrm{tar}}(t)\right|
  \le2|z(t)|.
  \label{eq:theorem8-derivative-by-z}
\end{equation}

We next bound $z(t)$.  The Hilbert--Schmidt Cauchy--Schwarz inequality, applied to
$\rho^{\picsup{I}}(t)^{1/2}$ and $X(t)\rho^{\picsup{I}}(t)^{1/2}$, gives
\begin{align}
  |z(t)|^2
  &=\left|
    \Tr\!\left[
      \rho^{\picsup{I}}(t)^{1/2}X(t)\rho^{\picsup{I}}(t)^{1/2}
    \right]
  \right|^2
  \notag\\
  &\le
  \Tr[\rho^{\picsup{I}}(t)]
  \Tr[\rho^{\picsup{I}}(t)X^\dagger(t)X(t)]
  \notag\\
  &=\Tr[\rho^{\picsup{I}}(t)X^\dagger(t)X(t)],
  \label{eq:theorem8-state-CS}
\end{align}
where $\Tr\rho^{\picsup{I}}(t)=1$ was used in the last line.  Because the two factors of $X(t)$ act on
different tensor components,
\begin{equation}
  X^\dagger(t)X(t)
  =A^\dagger(t)A(t)\otimes B^\dagger(t)B(t).
  \label{eq:theorem8-XdagX}
\end{equation}
The positive-operator inequality
$A^\dagger(t)A(t)\le\lVert A(t)\rVert^2\id_Q$
\cite[Chapter~5, Exercise~45(a)]{MacCluer2009} can be tensored with the positive operator
$B^\dagger(t)B(t)\ge0$ without reversing the order.  Hence
\[
  A^\dagger(t)A(t)\otimes B^\dagger(t)B(t)
  \le
  \lVert A(t)\rVert^2\id_Q\otimes B^\dagger(t)B(t).
\]

Together with \eqref{eq:theorem8-state-CS} and \eqref{eq:theorem8-XdagX}, taking the trace in the basis which diagonalises $\rho^{\picsup{I}}(t)$, implies
\begin{equation}
  |z(t)|^2
  \le\lVert A(t)\rVert^2
  \Tr\!\left[
    \rho^{\picsup{I}}(t)(\id_Q\otimes B^\dagger(t)B(t))
  \right].
  \label{eq:theorem8-z-second-moment}
\end{equation}
Lemma~\ref{lem:projection-commutator}, unitary invariance of the operator norm, and
\eqref{eq:sigma-normalization} give
\begin{equation}
  \lVert A(t)\rVert
  \le\lVert\sigma(t)\rVert
  =\lVert\sigma\rVert
  =1.
  \label{eq:theorem8-A-norm}
\end{equation}
Lemma~\ref{lem:B-energy} and Definition~\ref{eq:field-energy-t} then yields
\begin{equation}
  |z(t)|^2\le C_gE_F(t).
  \label{eq:theorem8-z-energy}
\end{equation}
Combining \eqref{eq:theorem8-derivative-by-z},
\eqref{eq:theorem8-z-energy}, and the uniform field-energy estimate
\eqref{eq:field-energy-control}, we obtain
\begin{equation}
  \left|\frac{\dd}{\dd t}F_{\mathrm{tar}}(t)\right|
  \le2\sqrt{C_gC_{\mathrm{fld}}E}
  \qquad(t\in I).
  \label{eq:theorem8-uniform-rate}
\end{equation}

Finally, absolute continuity and the fundamental theorem of calculus imply
\begin{align}
  q
  &\le
  F_{\mathrm{tar}}(t_{\mathrm{out}})
  -F_{\mathrm{tar}}(t_{\mathrm{in}})
  \notag\\
  &=\int_{t_{\mathrm{in}}}^{t_{\mathrm{out}}}
  \frac{\dd}{\dd t}F_{\mathrm{tar}}(t)\dd t
  \notag\\
  &\le\int_{t_{\mathrm{in}}}^{t_{\mathrm{out}}}
  \left|\frac{\dd}{\dd t}F_{\mathrm{tar}}(t)\right|\dd t
  \notag\\
  &\le2\Delta t\sqrt{C_gC_{\mathrm{fld}}E}.
  \label{eq:theorem8-integration}
\end{align}
Rearranging and substitution of
$\omega_{\mathrm{gate}}=2\pi/\Delta t$ proves
\eqref{eq:theorem8-frequency}.
\end{proof}

\section{Uniform families and dimensionless square-root bounds}
\label{sec:families}

We first formulate the uniform family and prove the relative square-root bound under
Assumptions~\assumptionlink{A0}--\assumptionlink{A2}.  We then replace the uniform
lower-stability condition \assumptionlink{A0} by the quiet-interaction conditions
\assumptionlink{B1}--\assumptionlink{B2}, obtaining a corresponding bound without that
dimensionless lower-stability requirement.

\subsection{\texorpdfstring{Assumptions~\assumptionlink{A0},
\assumptionlink{A1}, and \assumptionlink{A2}: definitions and dimensionless square-root bound}%
{Assumptions A0, A1, and A2: definitions and dimensionless square-root bound}}
We now consider a family of control protocols indexed by $\lambda$.  Every quantity that may
vary with the family is assigned a subscript $\lambda$.  Thus
\begin{align}
  H_{F,\lambda}
  &:=\int_{\mathbb R}\dd k\,
  \omega_\lambda(k)a_k^\dagger a_k,
  \label{eq:family-HF}\\
  B_\lambda
  &:=\int_{\mathbb R}\dd k\,g_\lambda(k)a_k,
  \label{eq:family-B}\\
  V_\lambda
  &:=\sigma_\lambda\otimes B_\lambda
  +\sigma_\lambda^\dagger\otimes B_\lambda^\dagger,
  \label{eq:family-V}
\end{align}
where every $H_{Q,\lambda}=H_{Q,\lambda}^\dagger$ acts on the same finite-dimensional
controlled-system Hilbert space $\mathcal H_Q$.  Without loss of generality, we use the convention
$\lVert\sigma_\lambda\rVert=1$ for every $\lambda$, as in
\eqref{eq:sigma-normalization}.  We further define
\begin{align}
  H_\lambda
  &:=H_{Q,\lambda}\otimes\id_F+\id_Q\otimes H_{F,\lambda}+V_\lambda,
  \label{eq:family-H}\\
  C_{g_\lambda}
  &:=\int_{\{k\in\mathbb R:\,g_\lambda(k)\ne0\}}\dd k\,
  \frac{|g_\lambda(k)|^2}{\omega_\lambda(k)},
  \label{eq:family-Cg}\\
  E_{\mathrm{gs},\lambda}
  &:=\inf\operatorname{spec}(H_\lambda),
  \qquad
  E_\lambda
  :=\Tr(\rho_{0,\lambda}H_\lambda)-E_{\mathrm{gs},\lambda}.
  \label{eq:family-energy}
\end{align}
The definition of $C_{g_\lambda}$ uses the quotient convention following
\eqref{eq:Cg-definition}; in particular, its finiteness requires
$\omega_\lambda(k)>0$ almost everywhere on the coupled set
$\{k:g_\lambda(k)\ne0\}$.

For each member, fix a unit target vector $\ket{\psi_{\mathrm{tar},\lambda}}$ and set
$P_{\mathrm{tar},\lambda}:=\ket{\psi_{\mathrm{tar},\lambda}}
\bra{\psi_{\mathrm{tar},\lambda}}$.  In the interaction picture, define the member's joint
state by
\begin{equation}
  \rho_\lambda^{\picsup{I}}(t)
  :=U_\lambda^{\picsup{I}}(t)\rho_{0,\lambda}U_\lambda^{\picsup{I}\dagger}(t),
  \qquad
  U_\lambda^{\picsup{I}}(t)
  :=e^{iH_{0,\lambda}t}e^{-iH_\lambda t},
  \label{eq:family-interaction-picture-state}
\end{equation}
where $H_{0,\lambda}:=H_{Q,\lambda}\otimes\id_F+\id_Q\otimes H_{F,\lambda}$; this is the
$\lambda$-indexed version of \eqref{eq:interaction-picture-state}.  In particular,
$\sigma_\lambda(t):=e^{iH_{Q,\lambda}t}\sigma_\lambda e^{-iH_{Q,\lambda}t}$.

Bounding a gate frequency requires a background time against which frequencies and energies
can be compared.  We denote the reference time assigned to member $\lambda$ by
$T_\lambda>0$.  The results below hold for any such assignment, provided the dimensionless
uniformity conditions in which $T_\lambda$ appears are satisfied.  A convenient way to make
the assignment covariant under a change of the overall dynamical scale is
\begin{equation}
  T_\lambda
  :=\tau(H_{\mathrm{ref},\lambda},\rho_{\mathrm{ref},\lambda}),
  \qquad
  \tau(\alpha H+c\id,\rho)
  =\alpha^{-1}\tau(H,\rho)
  \quad(\alpha>0,\ c\in\mathbb R),
  \label{eq:reference-time-rule}
\end{equation}
where $\tau$ is one fixed reference-time rule for the whole family.  For simplicity, we take
$H_{\mathrm{ref},\lambda}:=H_\lambda$ and
$\rho_{\mathrm{ref},\lambda}:=\rho_{0,\lambda}$.  Thus a trivial fast-forwarding
$H_\lambda\mapsto\alpha_\lambda H_\lambda$ also sends
$T_\lambda\mapsto T_\lambda/\alpha_\lambda$ and does not count as an improvement relative to
the background dynamics.  If a control-independent background clock is desired instead, the
same rule may be applied to $H_{0,\lambda}$ and a chosen reference state.
The finite-clock comparison below instead uses the equally scale-covariant recurrence period
of its distinguished clock subsystem.

For example, let
\begin{equation}
  \mathcal L_{\mathrm B}(\rho,\eta)
  :=\arccos\Tr\!\sqrt{\sqrt\rho\,\eta\sqrt\rho}
  \label{eq:Bures-angle}
\end{equation}
be the Bures angle.  Whenever the orbit reaches an orthogonal state, its full
orthogonalization time
\begin{equation}
  T_\lambda^{\mathrm{orth}}
  :=\inf\!\left\{t>0:
    \mathcal L_{\mathrm B}\!\left(
      \rho_{0,\lambda},
      e^{-iH_\lambda t}\rho_{0,\lambda}e^{iH_\lambda t}
    \right)=\frac{\pi}{2}
  \right\}
  \label{eq:Bures-orthogonalization-time}
\end{equation}
provides such a rule.  Here $\pi/2$ is the Bures angle between orthogonal states; the associated
angular reference frequency is written as $2\pi/T_\lambda^{\mathrm{orth}}$, in the same
full-cycle convention as \eqref{eq:family-gate-frequency}.  An infinitesimal alternative is
\begin{equation}
  T_\lambda^{\mathrm{QFI}}
  :=\frac{2}{\sqrt{\mathcal F_Q(\rho_{0,\lambda},H_\lambda)}},
  \label{eq:QFI-reference-time}
\end{equation}
when the quantum Fisher information is finite and nonzero.  For a pure initial state this is
$1/\Delta_{\rho_{0,\lambda}}H_\lambda$.  Both examples obey
\eqref{eq:reference-time-rule}; the first compares the gate frequency with the full
orthogonalization time of the dynamics, whereas the second uses its initial statistical speed.

Define further
$\rho_{Q,\lambda}^{\picsup{I}}(t):=\Tr_F\rho_\lambda^{\picsup{I}}(t)$ and
$F_{\mathrm{tar},\lambda}(t):=\Tr[P_{\mathrm{tar},\lambda}\rho_{Q,\lambda}^{\picsup{I}}(t)]$, as in
\eqref{eq:reduced-state} and \eqref{eq:target-fidelity}.  Its threshold gate interval, now
allowed to depend on $\lambda$, is
\begin{equation}
  I_\lambda
  :=[t_{\mathrm{in},\lambda},t_{\mathrm{out},\lambda}],
  \qquad
  \Delta t_\lambda
  :=t_{\mathrm{out},\lambda}-t_{\mathrm{in},\lambda}>0,
  \label{eq:family-gate-interval}
\end{equation}
as in \eqref{eq:gate-interval}.  Write
$q_\lambda:=F_{\mathrm{tar},\lambda}(t_{\mathrm{out},\lambda})
-F_{\mathrm{tar},\lambda}(t_{\mathrm{in},\lambda})$ for its fidelity increase, as in
\eqref{eq:fidelity-gap}.  The corresponding gate frequency is therefore
\begin{equation}
  \omega_{\mathrm{gate},\lambda}
  :=\frac{2\pi}{\Delta t_\lambda},
  \label{eq:family-gate-frequency}
\end{equation}
as in \eqref{eq:gate-frequency}; thus the interval, elapsed time, fidelity increase, and gate
frequency may all depend on $\lambda$.

The following are the additional uniform assumptions needed for the relative result.

\begin{enumerate}
  \item[\textbf{A0.}] \phantomsection\label{ass:A0}\textbf{Scale-covariant uniform lower stability.}
  The modes $\{a_k\}_{k\in\mathbb R}$ are independent of $\lambda$ and act on a common Fock space,
  so the vacuum $\ket{\mathrm{vac}}$, defined by $a_k\ket{\mathrm{vac}}=0$ for every $k$, is
  also independent of $\lambda$.
  The uncoupled bosonic Hamiltonians use the common convention
  \begin{equation}
    H_{F,\lambda}\ket{\mathrm{vac}}=0,
    \qquad
    \inf\operatorname{spec}(H_{F,\lambda})=0,
    \label{eq:common-free-energy-zero}
  \end{equation}
  In the member-dependent reference units, the separation between the system spectral ceiling
  and the interacting ground energy is uniformly bounded: there exists a dimensionless
  $K_0<\infty$ such that
  \begin{equation}
    T_\lambda\bigl[\lambda_{\max}(H_{Q,\lambda})-E_{\mathrm{gs},\lambda}\bigr]\le K_0
    \qquad\text{for every }\lambda.
    \label{eq:A0-reference-time-bound}
  \end{equation}
  This condition is invariant under the global fast-forwarding
  $H_\lambda\mapsto\alpha_\lambda H_\lambda+c_\lambda\id_{QF}$ and
  $T_\lambda\mapsto T_\lambda/\alpha_\lambda$, with the scalar shift assigned to
  $H_{Q,\lambda}$.

  \item[\textbf{A1.}] \phantomsection\label{ass:A1}\textbf{Uniform field-energy control.}
  Every member satisfies
  \begin{equation}
    \sup_{t\in I_\lambda}
    \Tr\!\left[
      \rho_\lambda^{\picsup{I}}(t)(\id_Q\otimes H_{F,\lambda})
    \right]
    \le C_{\mathrm{fld}}E_\lambda
    \label{eq:family-field-energy-control}
  \end{equation}
  with the same finite constant $C_{\mathrm{fld}}$.

  \item[\textbf{A2.}] \phantomsection\label{ass:A2}\textbf{Uniformly nontrivial target transition.}
  The target projector and accuracy thresholds assigned to each member provide
  \begin{equation}
    q_\lambda\ge q_0>0
    \label{eq:uniform-q}
  \end{equation}
  with one constant $q_0$ independent of $\lambda$.
\end{enumerate}

Each member is also assumed to satisfy the self-adjointness, regularity, finite-energy, and
absolute-continuity assumptions stated in Section~\ref{sec:model}.

\begin{theorem}[Relative square-root speed bound]
\label{thm:relative-speed}
Under the model assumptions of Section~\ref{sec:model} and the additional uniform assumptions
\assumptionlink{A0}--\assumptionlink{A2}, every member of the family satisfies
\begin{equation}
  \omega_{\mathrm{gate},\lambda}T_\lambda
  \le c_{\mathrm{stab}}\sqrt{T_\lambda E_\lambda},
  \label{eq:theorem9-result}
\end{equation}
where the dimensionless constant
\begin{equation}
  c_{\mathrm{stab}}
  :=\frac{8\pi}{q_0}
  \sqrt{C_{\mathrm{fld}}K_0}
  \label{eq:theorem9-c-stab}
\end{equation}
is independent of $\lambda$, $E_\lambda$, and the choice of $g_\lambda$.
\end{theorem}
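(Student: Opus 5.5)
The plan is to reduce Theorem~\ref{thm:relative-speed} to Theorem~\ref{thm:exact-speed} by showing that lower stability alone bounds the energy-weighted coupling,
\begin{equation*}
  C_{g_\lambda}\le 4\bigl[\lambda_{\max}(H_{Q,\lambda})-E_{\mathrm{gs},\lambda}\bigr]\le \frac{4K_0}{T_\lambda},
\end{equation*}
where the second inequality is \assumptionlink{A0}. Substituting this into \eqref{eq:theorem8-frequency}, together with $q_\lambda\ge q_0$ from \assumptionlink{A2} and the uniform $C_{\mathrm{fld}}$ from \assumptionlink{A1}, gives
\begin{equation*}
  \omega_{\mathrm{gate},\lambda}\le \frac{4\pi}{q_0}\sqrt{\frac{4K_0C_{\mathrm{fld}}E_\lambda}{T_\lambda}}.
\end{equation*}
Multiplying by $T_\lambda$ yields exactly $c_{\mathrm{stab}}=8\pi\sqrt{C_{\mathrm{fld}}K_0}/q_0$.

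The stability bound will come from a variational trial state. We drop the index $\lambda$ and use product states $\ket{\psi}\otimes\ket{\alpha f}$, where $\ket{\alpha f}$ is the coherent state of a one-particle profile $f\in L^2$ built on the common vacuum of \assumptionlink{A0}. Since $a_k\ket{\alpha f}=\alpha f(k)\ket{\alpha f}$ and $H_F\ket{\mathrm{vac}}=0$, the energy is
\begin{equation*}
  \langle\psi|H_Q|\psi\rangle+|\alpha|^2\!\int\!\omega|f|^2\dd k+2\operatorname{Re}\Bigl(\alpha\,\langle\psi|\sigma|\psi\rangle\!\int\! g f\dd k\Bigr),
\end{equation*}
and it must be $\ge E_{\mathrm{gs}}$.

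We then choose the parameters as follows.
\begin{itemize}
  \item Take $\ket{\psi}$ attaining the numerical radius $w:=\sup_{\psi}|\langle\psi|\sigma|\psi\rangle|$. The standard inequality $w\ge\lVert\sigma\rVert/2=1/2$ holds in finite dimension.
  \item Take $f=-\overline{g}/\omega$ on the coupled set, so that $\int gf=-C_g$ and $\int\omega|f|^2=C_g$.
  \item Take $\alpha=c\,e^{i\theta}$ with the phase $\theta$ chosen so that the cross term equals $-2cwC_g$.
\end{itemize}
The inequality then reads $\langle H_Q\rangle+c^2C_g-2cwC_g\ge E_{\mathrm{gs}}$. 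Setting $c=w$ gives $w^2C_g\le\lambda_{\max}(H_Q)-E_{\mathrm{gs}}$, hence $C_g\le 4(\lambda_{\max}(H_Q)-E_{\mathrm{gs}})$.

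The main technical obstacle is that $\overline g/\omega$ need not lie in $L^2$ or in the form domain, because $\omega$ may approach $0$ on the coupled set. We handle this with the truncations $f_\varepsilon:=-\mathbf 1_{\{\omega\ge\varepsilon,\,|g|\le 1/\varepsilon,\,|k|\le1/\varepsilon\}}\overline g/\omega$. These lie in $L^2$ with $\int\omega|f_\varepsilon|^2<\infty$, so the coherent states are finite-energy vectors in the domain on which $H$ is bounded below.

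For each truncation we get $w^2C_g^{(\varepsilon)}\le\lambda_{\max}-E_{\mathrm{gs}}$, where $C_g^{(\varepsilon)}$ is the truncated integral. Letting $\varepsilon\to0$ and using monotone convergence, $C_g^{(\varepsilon)}\uparrow C_g$ under the quotient convention, so the bound passes to $C_g$.

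The remaining care is to check that the form evaluation of $\langle V\rangle$ on coherent states is legitimate, which holds since $g\in L^2$ and $f_\varepsilon\in L^2$. Finally, $K_0$, $C_{\mathrm{fld}}$ and $q_0$ are $\lambda$-independent, so $c_{\mathrm{stab}}$ does not depend on $\lambda$, $E_\lambda$ or $g_\lambda$.
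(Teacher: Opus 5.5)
Your proposal is correct and follows the paper's proof essentially step for step. You use the same numerical-radius choice of $\ket{\psi}$ with $w\ge 1/2$, and the same truncated coherent trial state: your phase alignment and the optimal choice $c=w$ reproduce exactly the paper's amplitude $-s^*\overline{g}/\omega$. Monotone convergence then gives $C_{g_\lambda}\le 4[\lambda_{\max}(H_{Q,\lambda})-E_{\mathrm{gs},\lambda}]$, which you combine with the exact rate bound and A0--A2, as the paper does. The only cosmetic difference is that you invoke Theorem~\ref{thm:exact-speed} directly with $q=q_0$, whereas the paper re-derives the fidelity-rate estimate memberwise.
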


Because $T_\lambda$ is member dependent, \eqref{eq:theorem9-result} is a relative,
dimensionless bound.  It gives an absolute square-root law in a fixed laboratory time whenever
the family $\{T_\lambda\}_\lambda$ is uniformly comparable to that time.

\begin{proof}
The proof has two parts.  We first derive a uniform dimensionless bound on $C_{g_\lambda}$ from
Assumption~\assumptionlink{A0}.  We then derive and integrate the exact fidelity-rate
estimate for the member $\lambda$.

\medskip
\noindent\textit{Part 1: dimensionless lower stability controls $C_{g_\lambda}$.}
Define the numerical radius
\begin{equation}
  w(\sigma_\lambda)
  :=\max_{\lVert\psi\rVert_2=1}
  \left|
    \left\langle\psi\middle|\sigma_\lambda\middle|\psi\right\rangle
  \right|.
  \label{eq:numerical-radius}
\end{equation}
In finite dimension the maximum exists.  We shall use the standard numerical-radius
inequality \cite[Section~1.3]{GustafsonRao1997}
\begin{equation}
  w(A)\ge\frac12\lVert A\rVert.
  \label{eq:numerical-radius-operator-norm}
\end{equation}
Applying \eqref{eq:numerical-radius-operator-norm} to $A=\sigma_\lambda$ and using the
normalization $\lVert\sigma_\lambda\rVert=1$ from \eqref{eq:sigma-normalization} gives
\begin{equation}
  w(\sigma_\lambda)\ge\frac12.
  \label{eq:numerical-radius-half}
\end{equation}

Choose a unit vector $\ket{\psi_\lambda}$ attaining
$w(\sigma_\lambda)$, and put
\begin{equation}
  s_\lambda
  :=\left\langle\psi_\lambda\middle|
    \sigma_\lambda
  \middle|\psi_\lambda\right\rangle.
  \label{eq:s-lambda}
\end{equation}
To construct admissible coherent trial vectors without assuming additional infrared
regularity, define
\begin{equation}
  D_n
  :=\left\{
    k\in[-n,n]:
    \frac1n\le\omega_\lambda(k)\le n,
    \ |g_\lambda(k)|\le n
  \right\}.
  \label{eq:Dn-definition}
\end{equation}
These measurable sets exhaust the modes with $\omega_\lambda(k)>0$ up to null sets.  Their
complement contributes nothing to $C_{g_\lambda}$ because its finiteness requires
$g_\lambda(k)=0$ almost everywhere where $\omega_\lambda(k)=0$.  On every $D_n$,
$|k|$, $\omega_\lambda(k)$, $1/\omega_\lambda(k)$, and $|g_\lambda(k)|$ are bounded.  Let
$\mathbf 1_{D_n}$ denote the indicator function of $D_n$: thus
$\mathbf 1_{D_n}(k)=1$ when $k\in D_n$ and $\mathbf 1_{D_n}(k)=0$ otherwise.  Define
\begin{equation}
  \alpha_{\lambda,n}(k)
  :=-
  \frac{s_\lambda^*g_\lambda^*(k)}{\omega_\lambda(k)}
  \mathbf 1_{D_n}(k).
  \label{eq:coherent-trial-amplitude}
\end{equation}
Then $\alpha_{\lambda,n}\in L^2(\mathbb R)$ and hence defines a normalized coherent state
$\ket{\alpha_{\lambda,n}}$, i.e., a state satisfying
$a_k\ket{\alpha_{\lambda,n}}=\alpha_{\lambda,n}(k)\ket{\alpha_{\lambda,n}}$ for almost every
$k$.  Introduce
\begin{equation}
  C_{\lambda,n}
  :=\int_{D_n}\dd k\,
  \frac{|g_\lambda(k)|^2}{\omega_\lambda(k)}.
  \label{eq:C-lambda-n}
\end{equation}
The coherent-state identities
$\langle a_k\rangle=\alpha_{\lambda,n}(k)$ and
$\langle a_k^\dagger a_k\rangle=|\alpha_{\lambda,n}(k)|^2$ give
\begin{align}
  \left\langle\alpha_{\lambda,n}\middle|
    H_{F,\lambda}
  \middle|\alpha_{\lambda,n}\right\rangle
  &=|s_\lambda|^2C_{\lambda,n},
  \label{eq:trial-field-energy}\\
  \left\langle\alpha_{\lambda,n}\middle|
    B_\lambda
  \middle|\alpha_{\lambda,n}\right\rangle
  &=-s_\lambda^*C_{\lambda,n}.
  \label{eq:trial-B-mean}
\end{align}
For the normalized product trial vector
$\ket{\psi_\lambda}\otimes\ket{\alpha_{\lambda,n}}$, the interaction expectation is therefore
\begin{align}
  \langle V_\lambda\rangle
  &=s_\lambda
    \left(-s_\lambda^*C_{\lambda,n}\right)
  +s_\lambda^*
    \left(-s_\lambda C_{\lambda,n}\right)
  \notag\\
  &=-2|s_\lambda|^2C_{\lambda,n}.
  \label{eq:trial-interaction-energy}
\end{align}
Adding the system, field, and interaction contributions in the same product state
$\ket{\psi_\lambda}\otimes\ket{\alpha_{\lambda,n}}$ gives
\begin{equation}
  \left\langle H_\lambda\right\rangle
  =
  \left\langle\psi_\lambda\middle|H_{Q,\lambda}\middle|\psi_\lambda\right\rangle
  -|s_\lambda|^2C_{\lambda,n}.
  \label{eq:trial-total-energy}
\end{equation}
By the definition $E_{\mathrm{gs},\lambda}=\inf\operatorname{spec}(H_\lambda)$ in
\eqref{eq:family-energy}, the expectation of $H_\lambda$ in every normalized admissible state
is at least $E_{\mathrm{gs},\lambda}$.  Applying this fact to the product state used in
\eqref{eq:trial-total-energy} gives
\begin{equation}
  E_{\mathrm{gs},\lambda}
  \le
  \left\langle\psi_\lambda\middle|H_{Q,\lambda}\middle|\psi_\lambda\right\rangle
  -|s_\lambda|^2C_{\lambda,n}.
  \label{eq:variational-bound-n}
\end{equation}
Since the integrand in \eqref{eq:C-lambda-n} is nonnegative, monotone convergence gives
$C_{\lambda,n}\to C_{g_\lambda}$.  Taking $n\to\infty$ in
\eqref{eq:variational-bound-n} yields
\begin{equation}
  |s_\lambda|^2C_{g_\lambda}
  \le
  \left\langle\psi_\lambda\middle|H_{Q,\lambda}\middle|\psi_\lambda\right\rangle
  -E_{\mathrm{gs},\lambda}.
  \label{eq:Cg-before-common-bound}
\end{equation}
Using
$\langle\psi_\lambda|H_{Q,\lambda}|\psi_\lambda\rangle
\le\lambda_{\max}(H_{Q,\lambda})$ and
$|s_\lambda|=w(\sigma_\lambda)\ge1/2$, we obtain the memberwise estimate
\begin{equation}
  C_{g_\lambda}
  \le4\bigl[\lambda_{\max}(H_{Q,\lambda})-E_{\mathrm{gs},\lambda}\bigr].
  \label{eq:uniform-Cg-bound}
\end{equation}
The quantity in square brackets is nonnegative.  Indeed, testing $H_\lambda$ on a system ground
vector tensored with the Fock vacuum gives
$E_{\mathrm{gs},\lambda}\le\lambda_{\min}(H_{Q,\lambda})
\le\lambda_{\max}(H_{Q,\lambda})$.

\medskip
\noindent\textit{Part 2: exact fidelity rate and integration.} This part proceeds similarly to the proof of~Theorem~\ref{thm:exact-speed}.
For the member $\lambda$, let
\begin{align}
  A_\lambda(t)
  &:=[P_{\mathrm{tar},\lambda},\sigma_\lambda(t)],
  \label{eq:theorem9-A}\\
  X_\lambda(t)
  &:=A_\lambda(t)\otimes B_\lambda(t),
  \label{eq:theorem9-X}\\
  z_\lambda(t)
  &:=\Tr[\rho_\lambda^{\picsup{I}}(t)X_\lambda(t)].
  \label{eq:theorem9-z}
\end{align}
Since $P_{\mathrm{tar},\lambda}$ is self-adjoint,
\begin{equation}
  [P_{\mathrm{tar},\lambda},\sigma_\lambda^\dagger(t)]
  =-A_\lambda^\dagger(t).
  \label{eq:theorem9-adjoint-commutator}
\end{equation}
Thus the exact interaction-picture von Neumann equation gives
\begin{align}
  \frac{\dd}{\dd t}F_{\mathrm{tar},\lambda}(t)
  &=-i\Tr\!\left[
    \rho_\lambda^{\picsup{I}}(t)
    \bigl(X_\lambda(t)-X_\lambda^\dagger(t)\bigr)
  \right]
  \notag\\
  &=2\operatorname{Im}z_\lambda(t).
  \label{eq:theorem9-fidelity-derivative}
\end{align}
The Hilbert--Schmidt Cauchy--Schwarz inequality gives
\begin{align}
  |z_\lambda(t)|^2
  &\le\Tr[\rho_\lambda^{\picsup{I}}(t)X_\lambda^\dagger(t)X_\lambda(t)]
  \notag\\
  &\le\lVert A_\lambda(t)\rVert^2
  \Tr\!\left[
    \rho_\lambda^{\picsup{I}}(t)
    (\id_Q\otimes B_\lambda^\dagger(t)B_\lambda(t))
  \right].
  \label{eq:theorem9-z-bound-first}
\end{align}
Lemma~\ref{lem:projection-commutator} and
$\lVert\sigma_\lambda(t)\rVert=\lVert\sigma_\lambda\rVert=1$ imply
\begin{equation}
  \lVert A_\lambda(t)\rVert\le1.
  \label{eq:theorem9-A-bound}
\end{equation}
Lemma~\ref{lem:B-energy}, applied with $g_\lambda$ and $\omega_\lambda$, gives
\begin{equation}
  \Tr\!\left[
    \rho_\lambda^{\picsup{I}}(t)
    (\id_Q\otimes B_\lambda^\dagger(t)B_\lambda(t))
  \right]
  \le
  C_{g_\lambda}
  \Tr\!\left[
    \rho_\lambda^{\picsup{I}}(t)(\id_Q\otimes H_{F,\lambda})
  \right].
  \label{eq:theorem9-B-energy}
\end{equation}
Using Assumption \assumptionlink{A1}, equations
\eqref{eq:theorem9-fidelity-derivative}--\eqref{eq:theorem9-B-energy} yield
\begin{equation}
  \left|
    \frac{\dd}{\dd t}F_{\mathrm{tar},\lambda}(t)
  \right|
  \le2\sqrt{C_{g_\lambda}C_{\mathrm{fld}}E_\lambda}
  \qquad(t\in I_\lambda).
  \label{eq:theorem9-rate}
\end{equation}
Integration over $I_\lambda$ and Assumption \assumptionlink{A2} give
\begin{align}
  q_0
  &\le q_\lambda
  \le
  F_{\mathrm{tar},\lambda}(t_{\mathrm{out},\lambda})
  -F_{\mathrm{tar},\lambda}(t_{\mathrm{in},\lambda})
  \notag\\
  &\le
  2\Delta t_\lambda
  \sqrt{C_{g_\lambda}C_{\mathrm{fld}}E_\lambda}.
  \label{eq:theorem9-integrated-rate}
\end{align}
Therefore
\begin{equation}
  \omega_{\mathrm{gate},\lambda}
  =\frac{2\pi}{\Delta t_\lambda}
  \le
  \frac{4\pi}{q_0}
  \sqrt{C_{g_\lambda}C_{\mathrm{fld}}E_\lambda}.
  \label{eq:theorem9-frequency-before-Cg}
\end{equation}
Multiplying by $T_\lambda$ and using \eqref{eq:uniform-Cg-bound} together with
Assumption~\assumptionlink{A0}, we find
\begin{align}
  \omega_{\mathrm{gate},\lambda}T_\lambda
  &\le
  \frac{4\pi}{q_0}
  \sqrt{
    (T_\lambda C_{g_\lambda})C_{\mathrm{fld}}(T_\lambda E_\lambda)
  }
  \notag\\
  &\le
  \frac{8\pi}{q_0}
  \sqrt{C_{\mathrm{fld}}K_0}
  \sqrt{T_\lambda E_\lambda}
  \notag\\
  &=c_{\mathrm{stab}}\sqrt{T_\lambda E_\lambda}.
  \label{eq:theorem9-final}
\end{align}
This proves \eqref{eq:theorem9-result} with the constant
\eqref{eq:theorem9-c-stab}.
\end{proof}

\subsection{\texorpdfstring{Assumptions~\assumptionlink{B1} and
\assumptionlink{B2}: definitions and dimensionless square-root bound}%
{Assumptions B1 and B2: definitions and dimensionless square-root bound}}
\label{sec:quiet-interaction-bound}

Theorem~\ref{thm:relative-speed} obtains a uniform dimensionless bound on $C_{g_\lambda}$ from
Assumption~\assumptionlink{A0}.  We now give a different sufficient route.  It imposes no uniform
control on $T_\lambda[\lambda_{\max}(H_{Q,\lambda})-E_{\mathrm{gs},\lambda}]$, but instead assumes
a quantitative form of the physical statement that the interaction does not disturb the
controlled system during a selected pre-gate or post-gate interval, or during both.  The result permits
$\omega_\lambda$ to depend on $\lambda$; a $\lambda$-independent dispersion is the special case
$\omega_\lambda=\omega$.

The ordinary requirement that the reduced channel be close to the identity before the gate and
close to the target channel after the gate is not, by itself, a bound on the instantaneous
interaction.  A rapidly oscillating interaction can have a large generator while producing only
small net motion.  The condition below is deliberately stronger: it controls the action of the
system-changing part of the interaction on the quiet-time field state.  We first define all of
its ingredients.

For member $\lambda$, write
\begin{align}
  \Gamma_{g_\lambda}
  &:=\int_{\mathbb R}\dd k\,|g_\lambda(k)|^2,
  \label{eq:quiet-Gamma}\\
  H_{\mathrm{int},\lambda}(t)
  &:=\sigma_\lambda(t)\otimes B_\lambda(t)
   +\sigma_\lambda^\dagger(t)\otimes B_\lambda^\dagger(t),
  \label{eq:quiet-Hint}\\
  B_\lambda(t)
  &:=\int_{\mathbb R}\dd k\,
  g_\lambda(k)e^{-i\omega_\lambda(k)t}a_k.
  \label{eq:quiet-Bt}
\end{align}
The canonical commutation relations give the time-independent identity
\begin{equation}
  [B_\lambda(t),B_\lambda^\dagger(t)]
  =\Gamma_{g_\lambda}\id_F.
  \label{eq:quiet-B-commutator}
\end{equation}

We retain Assumptions~\assumptionlink{A1} and
\assumptionlink{A2}.  Assumption
\assumptionlink{A0} is replaced by the following two conditions.

\begin{enumerate}
  \item[\textbf{B1.}] \phantomsection\label{ass:B1}\textbf{Uniform quiet-time interaction action.}
  For each $\lambda$, choose nonnegative quiet-window lengths
  $\ell_{-,\lambda},\ell_{+,\lambda}\ge0$, not both zero, and define the pre-gate and post-gate intervals
  \begin{align}
    \mathcal W_{-,\lambda}
    &:=[t_{\mathrm{in},\lambda}-\ell_{-,\lambda},
        t_{\mathrm{in},\lambda}],
    \notag\\
    \mathcal W_{+,\lambda}
    &:=[t_{\mathrm{out},\lambda},
        t_{\mathrm{out},\lambda}+\ell_{+,\lambda}].
    \label{eq:B1-quiet-intervals}
  \end{align}
  Put
  \begin{equation}
    \mathcal W_\lambda:=\mathcal W_{-,\lambda}\cup\mathcal W_{+,\lambda},
    \qquad |\mathcal W_\lambda|=\ell_{-,\lambda}+\ell_{+,\lambda}>0.
    \label{eq:B1-quiet-set}
  \end{equation}
  Thus at least one window has positive length.  If one length is zero, its interval contributes
  nothing to the averages in Assumption~\assumptionlink{B1}, so that assumption imposes no
  quiet-time constraint on that side.
  Let $\ket{\Phi_\lambda^{\picsup{I}}(t)}\in\mathcal H_F\otimes\mathcal H_R$ be any
  purification of the actual field marginal $\Tr_Q\rho_\lambda^{\picsup{I}}(t)$, where
  $\rho_\lambda^{\picsup{I}}(t)$ is the interaction-picture joint state defined in
  \eqref{eq:family-interaction-picture-state}.
  Denote the corresponding Schr\"odinger-picture joint state by
  $\rho_\lambda^{\picsup{S}}(t):=e^{-iH_\lambda t}\rho_{0,\lambda}e^{iH_\lambda t}$, and set
  $\ket{\Phi_\lambda^{\picsup{S}}(t)}
  :=(e^{-iH_{F,\lambda}t}\otimes\id_R)\ket{\Phi_\lambda^{\picsup{I}}(t)}$; this vector purifies
  $\Tr_Q\rho_\lambda^{\picsup{S}}(t)$.
  Define the normally ordered activation
  \begin{align}
    \mathcal J_\lambda(t)
    &:=\left\|
      (B_\lambda(t)\otimes\id_R)\ket{\Phi_\lambda^{\picsup{I}}(t)}
    \right\|_2
    \notag\\
    &=\sqrt{\Tr\!\left[\rho_\lambda^{\picsup{I}}(t)
      \bigl(\id_Q\otimes B_\lambda^\dagger(t)B_\lambda(t)\bigr)\right]}
    \notag\\
    &=\sqrt{\Tr\!\left[
      \rho_\lambda^{\picsup{S}}(t)
      \bigl(\id_Q\otimes B_\lambda^\dagger B_\lambda\bigr)
    \right]}.
    \label{eq:B1-J}
  \end{align}
  The last equality follows from the unitary change of picture and
  $B_\lambda(t)=e^{iH_{F,\lambda}t}B_\lambda e^{-iH_{F,\lambda}t}$.
  To measure the part of the interaction capable of changing system observables, define
  \begin{align}
    \mathcal D_\lambda(t)
    &:=\sup_{\substack{P=P^\dagger=P^2\\ \lVert\psi\rVert_2=1}}
    \left\|
      \bigl([P\otimes\id_F,H_{\mathrm{int},\lambda}(t)]
      \otimes\id_R\bigr)
      \bigl(\ket{\psi}\otimes\ket{\Phi_\lambda^{\picsup{I}}(t)}\bigr)
    \right\|_2
    \notag\\
    &=\sup_{\substack{P=P^\dagger=P^2\\ \lVert\psi\rVert_2=1}}
    \left\|
      \bigl([P\otimes\id_F,V_\lambda]\otimes\id_R\bigr)
      \bigl(\ket{\psi}\otimes\ket{\Phi_\lambda^{\picsup{S}}(t)}\bigr)
    \right\|_2
    \notag\\
    &=\sup_{\substack{P=P^\dagger=P^2\\ \lVert\psi\rVert_2=1}}
      \sqrt{\Tr\!\Bigl[
        \bigl(\ket{\psi}\bra{\psi}\otimes\Tr_Q\rho_\lambda^{\picsup{S}}(t)\bigr)
        [P\otimes\id_F,V_\lambda]^\dagger[P\otimes\id_F,V_\lambda]
      \Bigr]}.
    \label{eq:B1-disturbance-action}
  \end{align}
  Here the supremum ranges over all orthogonal projections $P$ on $\mathcal H_Q$ and all
  unit vectors $\ket{\psi}\in\mathcal H_Q$; thus $\ket{\psi}$ is an arbitrary normalized
  pure state of the controlled system.
  The second line is the Schr\"odinger-picture form;
  unitary invariance permits the freely evolved $P$ and $\ket{\psi}$ to be renamed because
  the supremum ranges over their full admissible sets.

  The commutator also has a direct block interpretation.  For fixed $P$, put
  $\Pi:=P\otimes\id_F$, $\Pi^\perp:=(\id_Q-P)\otimes\id_F$, and
  $S_P:=(2P-\id_Q)\otimes\id_F$.  Every interaction decomposes as
  \begin{align*}
    V_\lambda
    &=V_{\lambda,\parallel}^{(P)}+V_{\lambda,\perp}^{(P)},\\
    V_{\lambda,\parallel}^{(P)}
    &:=\Pi V_\lambda\Pi+\Pi^\perp V_\lambda\Pi^\perp,\\
    V_{\lambda,\perp}^{(P)}
    &:=\Pi V_\lambda\Pi^\perp+\Pi^\perp V_\lambda\Pi.
  \end{align*}
  The block-diagonal part commutes with $S_P$, while the block-off-diagonal part
  anticommutes with $S_P$; the latter need not anticommute with $P$ itself.  Moreover,
  \begin{equation}
    [P\otimes\id_F,V_\lambda]
    =S_PV_{\lambda,\perp}^{(P)}.
    \label{eq:B1-off-diagonal-commutator}
  \end{equation}
  Since $S_P$ is unitary, the commutator norm in
  \eqref{eq:B1-disturbance-action} is exactly the state-weighted norm of the part of
  $V_\lambda$ that couples the $P$ and $\id_Q-P$ sectors.  Thus $\mathcal D_\lambda(t)$
  measures the strongest interaction action capable of moving the controlled system across
  some orthogonal decomposition, while discarding the part that is block diagonal for that
  decomposition.
  This definition is independent of the chosen purification because any two purifications are
  related by an isometry on the reference system.  It is stronger than requiring only
  $\langle H_{\mathrm{int},\lambda}(t)\rangle\simeq0$: an expectation can vanish through
  cancellation even when the interaction acts strongly.

  Define also the system sensitivity
  \begin{equation}
    \chi_\lambda
    :=\max_{P=P^\dagger=P^2}
      \lVert[P,\sigma_\lambda]\rVert.
    \label{eq:B1-chi}
  \end{equation}
  We assume $\lambda$-independent constants $\chi_0>0$, $K_J<\infty$, and $K_D<\infty$ such that
  \begin{align}
    \chi_\lambda&\ge\chi_0,
    \label{eq:B1-chi-lower}\\
    \frac{T_\lambda^2}{|\mathcal W_\lambda|}
    \int_{\mathcal W_\lambda}\mathcal J_\lambda(t)^2\dd t
    &\le K_J,
    \label{eq:B1-J-bound}\\
    \frac{T_\lambda^2}{|\mathcal W_\lambda|}
    \int_{\mathcal W_\lambda}\mathcal D_\lambda(t)^2\dd t
    &\le K_D
    \label{eq:B1-D-bound}
  \end{align}
  for every $\lambda$.  The lower bound on $\chi_\lambda$ excludes an interaction operator that
  becomes asymptotically proportional to the identity on the controlled system.  The two integral
  estimates state that both the incoming field activation and the system-changing interaction
  action remain uniformly quiet on average over the union $\mathcal W_\lambda$.  They impose no separate
  estimate on either component window.

  \begin{lemma}[The target transition bounds the system sensitivity]
  \label{lem:target-transition-bounds-chi}
  Under Assumption~\assumptionlink{A2}, every family member satisfies
  \begin{equation}
    \chi_\lambda
    \ge
    \frac{q_0}{
      2\displaystyle\int_{I_\lambda}\mathcal J_\lambda(t)\dd t}.
    \label{eq:chi-from-gate-activation}
  \end{equation}
  In particular, if
  $\sup_\lambda\int_{I_\lambda}\mathcal J_\lambda(t)\dd t<\infty$,
  then condition~\eqref{eq:B1-chi-lower} holds.
  \end{lemma}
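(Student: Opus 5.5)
We prove this by rerunning the exact fidelity-rate argument of Theorem~\ref{thm:exact-speed}, but now bounding the commutator factor by $\chi_\lambda$ instead of by $1$. We also keep the activation $\mathcal J_\lambda(t)$ in place of its energy bound.

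\textbf{Step 1 (rate identity).} As in the proof of Theorem~\ref{thm:relative-speed}, set $A_\lambda(t):=[P_{\mathrm{tar},\lambda},\sigma_\lambda(t)]$, $X_\lambda(t):=A_\lambda(t)\otimes B_\lambda(t)$ and $z_\lambda(t):=\Tr[\rho_\lambda^{\picsup{I}}(t)X_\lambda(t)]$. The derivation there gives
\[
\tfrac{\dd}{\dd t}F_{\mathrm{tar},\lambda}(t)=2\operatorname{Im}z_\lambda(t).
\]
The Hilbert--Schmidt Cauchy--Schwarz step and the operator inequality $A^\dagger A\le\lVert A\rVert^2\id_Q$ then yield
\[
|z_\lambda(t)|^2\le\lVert A_\lambda(t)\rVert^2\,\Tr\!\left[\rho_\lambda^{\picsup{I}}(t)\bigl(\id_Q\otimes B_\lambda^\dagger(t)B_\lambda(t)\bigr)\right]=\lVert A_\lambda(t)\rVert^2\mathcal J_\lambda(t)^2,
\]
where the last equality is the definition \eqref{eq:B1-J}.

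\textbf{Step 2 (sensitivity bound).} Write $W(t):=e^{iH_{Q,\lambda}t}$, so that $\sigma_\lambda(t)=W(t)\sigma_\lambda W(t)^\dagger$. Then
\[
A_\lambda(t)=W(t)\,[W(t)^\dagger P_{\mathrm{tar},\lambda}W(t),\sigma_\lambda]\,W(t)^\dagger .
\]
Here $W(t)^\dagger P_{\mathrm{tar},\lambda}W(t)$ is again an orthogonal projection, and the operator norm is unitarily invariant. Hence $\lVert A_\lambda(t)\rVert\le\chi_\lambda$ for every $t$, by the definition \eqref{eq:B1-chi} of $\chi_\lambda$ as a maximum over all projections. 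Combining with Step 1,
\[
\Bigl|\tfrac{\dd}{\dd t}F_{\mathrm{tar},\lambda}(t)\Bigr|\le 2\chi_\lambda\mathcal J_\lambda(t).
\]

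\textbf{Step 3 (integration).} $F_{\mathrm{tar},\lambda}$ is absolutely continuous on $I_\lambda$ by the standing assumptions, so the fundamental theorem of calculus and Assumption~\assumptionlink{A2} give
\[
q_0\le q_\lambda\le\int_{I_\lambda}\Bigl|\tfrac{\dd}{\dd t}F_{\mathrm{tar},\lambda}\Bigr|\dd t\le 2\chi_\lambda\int_{I_\lambda}\mathcal J_\lambda(t)\dd t .
\]
This is \eqref{eq:chi-from-gate-activation} after rearranging. The integral in the denominator is strictly positive, because otherwise the right-hand side would vanish and contradict $q_0>0$. If the integral is infinite, the bound is trivial. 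For the final statement, take $\chi_0:=q_0/(2M)$ with $M:=\sup_\lambda\int_{I_\lambda}\mathcal J_\lambda\dd t<\infty$; then $\chi_\lambda\ge\chi_0$ for every $\lambda$, which is \eqref{eq:B1-chi-lower}.

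The only potential obstacle is measurability and integrability of $\mathcal J_\lambda$ on $I_\lambda$, needed so the integral makes sense. These follow from the same finite-energy regularity conventions already assumed in Section~\ref{sec:model}.
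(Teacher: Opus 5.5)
Your proposal is correct and follows the paper's proof essentially step by step. It uses the adjoint-pair rate identity, then Hilbert--Schmidt Cauchy--Schwarz to get $|z_\lambda(t)|\le\lVert A_\lambda(t)\rVert\,\mathcal J_\lambda(t)$, then the unitary-conjugation bound $\lVert A_\lambda(t)\rVert\le\chi_\lambda$, and finally integration over $I_\lambda$ with assumption A2, including the same positivity remark and the choice $\chi_0=q_0/(2M)$.
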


  \begin{proof}
  Fix $\lambda$ and put
  $A_\lambda(t):=[P_{\mathrm{tar},\lambda},\sigma_\lambda(t)]$ and
  $X_\lambda(t):=A_\lambda(t)\otimes B_\lambda(t)$.  The exact adjoint-pair calculation in
  the proof of Theorem~\ref{thm:exact-speed}, namely \ref{eq:theorem8-derivative-by-z}, applied memberwise, gives
  \begin{equation}
    \left|\frac{\dd}{\dd t}F_{\mathrm{tar},\lambda}(t)\right|
    \le2\left|\Tr[\rho_\lambda^{\picsup{I}}(t)X_\lambda(t)]\right|.
    \label{eq:chi-lemma-exact-rate}
  \end{equation}
  Hilbert--Schmidt Cauchy--Schwarz and the definition of $\mathcal J_\lambda(t)$ imply
  \begin{equation}
    \left|\Tr[\rho_\lambda^{\picsup{I}}(t)X_\lambda(t)]\right|
    \le \lVert A_\lambda(t)\rVert\mathcal J_\lambda(t).
    \label{eq:chi-lemma-CS}
  \end{equation}
  Unitary invariance and the definition~\eqref{eq:B1-chi} give
  $\lVert A_\lambda(t)\rVert\le\chi_\lambda$.  Absolute continuity and
  Assumption~\assumptionlink{A2} therefore yield
  \begin{equation}
    q_0
    \le q_\lambda
    \le \int_{I_\lambda}
      \left|\frac{\dd}{\dd t}F_{\mathrm{tar},\lambda}(t)\right|\dd t
    \le2\chi_\lambda\int_{I_\lambda}\mathcal J_\lambda(t)\dd t.
    \label{eq:chi-lemma-integrated-rate}
  \end{equation}
  Thus the activation integral is positive, and rearranging proves
  \eqref{eq:chi-from-gate-activation}.  If its supremum over $\lambda$ is a finite constant
  $M_I$, then $M_I>0$ and \eqref{eq:B1-chi-lower} follows with
  $\chi_0=q_0/(2M_I)$.
  \end{proof}

  \item[\textbf{B2.}] \phantomsection\label{ass:B2}\textbf{Uniform inverse-frequency compatibility.}
  There is a dimensionless constant $K_\omega<\infty$ such that
  \begin{equation}
    C_{g_\lambda}
    \le K_\omega T_\lambda \Gamma_{g_\lambda}
    \qquad\text{for every }\lambda.
    \label{eq:B2-inverse-frequency}
  \end{equation}
  Equivalently, for $g_\lambda\ne0$,
  \begin{equation}
    \frac{
      \displaystyle\int_{\{k\in\mathbb R:\,g_\lambda(k)\ne0\}}
      |g_\lambda(k)|^2/\omega_\lambda(k)\dd k
    }{
      \displaystyle T_\lambda\int_{\{k\in\mathbb R:\,g_\lambda(k)\ne0\}}
      |g_\lambda(k)|^2\dd k
    }
    \le K_\omega.
    \label{eq:B2-ratio-form}
  \end{equation}
  This is an infrared condition on the coupling relative to the free dispersion.  It does not
  bound the peak of $g_\lambda$, its bandwidth, or $\Gamma_{g_\lambda}$ separately.  When
  both $T_\lambda$ and the dispersion are $\lambda$-independent, only uniformity over the
  allowed coupling shapes remains.  A positive dispersion alone does not guarantee this
  dimensionless uniformity if it approaches zero and
  the coupling is allowed to concentrate there.
\end{enumerate}

Assumption~\assumptionlink{B2} imposes only a uniform infrared condition on the coupling relative to
the free dispersion; it is not an ultraviolet cutoff.  The following two lemmas make this
clear.  The first collects two straightforward gap criteria, one for the dispersion and one
for the coupling, while the second is a more nuanced gapless criterion based only on
localization near zero energy.

\begin{colonlemma}[Uniform dimensionless frequency- or coupling-gap sufficiency for \assumptionlink{B2}]
\label{lem:B2-frequency-gap}
\emph{Uniform frequency gap.}
Suppose there exists a dimensionless constant $\vartheta_{\min}>0$, independent of $\lambda$,
such that
\begin{equation}
  T_\lambda\omega_\lambda(k)\ge\vartheta_{\min}
  \qquad\text{for almost every }k\text{ and every }\lambda.
  \label{eq:B2-gapped-sufficient}
\end{equation}
Then Assumption~\assumptionlink{B2} holds with
$K_\omega=\vartheta_{\min}^{-1}$.

\emph{Uniform coupling gap.}
Alternatively, suppose every dispersion has the radial form
$\omega_\lambda(k)=\Omega_\lambda(|k|)$ almost everywhere, with
$\Omega_\lambda:[0,\infty)\to[0,\infty)$ nondecreasing, and there exist constants
$k_{\mathrm{gap}}>0$ and $\vartheta_{\mathrm{gap}}>0$, independent of $\lambda$, such that
\begin{equation}
  g_\lambda(k)=0
  \quad\text{for almost every }|k|<k_{\mathrm{gap}},
  \qquad
  T_\lambda\Omega_\lambda(k_{\mathrm{gap}})\ge\vartheta_{\mathrm{gap}}
  \quad\text{for every }\lambda.
  \label{eq:B2-coupling-gap-sufficient}
\end{equation}
Then Assumption~\assumptionlink{B2} holds with
$K_\omega=\vartheta_{\mathrm{gap}}^{-1}$.
\end{colonlemma}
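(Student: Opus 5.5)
Both parts reduce to a pointwise lower bound on $\omega_\lambda$ over the coupled set $\{k: g_\lambda(k)\ne0\}$, which is the only region entering $C_{g_\lambda}$ and $\Gamma_{g_\lambda}$. Once such a bound holds in dimensionless form, $T_\lambda\omega_\lambda(k)\ge\vartheta$ for almost every coupled $k$, we integrate $|g_\lambda(k)|^2/\omega_\lambda(k)\le \vartheta^{-1}T_\lambda|g_\lambda(k)|^2$ over the coupled set. This gives $C_{g_\lambda}\le\vartheta^{-1}T_\lambda\Gamma_{g_\lambda}$, which is \eqref{eq:B2-inverse-frequency} with $K_\omega=\vartheta^{-1}$. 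The quotient convention following \eqref{eq:Cg-definition} assigns zero where $g_\lambda=0$, so the uncoupled region contributes nothing to either side. Since $T_\lambda>0$ is fixed per member, no further uniformity issue arises.

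\emph{Frequency gap.} Hypothesis \eqref{eq:B2-gapped-sufficient} gives $\omega_\lambda(k)\ge\vartheta_{\min}/T_\lambda>0$ for almost every $k$, in particular on the coupled set. Then $1/\omega_\lambda(k)\le T_\lambda/\vartheta_{\min}$ there, and the integration step above yields $K_\omega=\vartheta_{\min}^{-1}$.

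\emph{Coupling gap.} Up to a null set, the coupled set lies in $\{|k|\ge k_{\mathrm{gap}}\}$ because $g_\lambda=0$ almost everywhere on $|k|<k_{\mathrm{gap}}$. For such $k$, monotonicity of $\Omega_\lambda$ gives $\omega_\lambda(k)=\Omega_\lambda(|k|)\ge\Omega_\lambda(k_{\mathrm{gap}})\ge\vartheta_{\mathrm{gap}}/T_\lambda$. The radial representation holds only almost everywhere, so we discard the union of the exceptional null sets; this does not affect the integrals. The same integration then gives $K_\omega=\vartheta_{\mathrm{gap}}^{-1}$.

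There is no real obstacle here. The only care needed is the measure-theoretic bookkeeping: check that the null sets where the hypotheses fail, or where $\omega_\lambda=0$ on uncoupled modes, are harmless under the stated convention. Also note that $C_{g_\lambda}<\infty$ follows automatically, since $\Gamma_{g_\lambda}<\infty$ because $g_\lambda\in L^2$.
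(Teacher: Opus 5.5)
Your proposal is correct and matches the paper's proof: in both cases you obtain $T_\lambda\omega_\lambda(k)\ge\vartheta$ almost everywhere on the coupled set, either directly from the frequency gap or via $\Omega_\lambda(|k|)\ge\Omega_\lambda(k_{\mathrm{gap}})$ for $|k|\ge k_{\mathrm{gap}}$. You then integrate $|g_\lambda|^2/\omega_\lambda\le\vartheta^{-1}T_\lambda|g_\lambda|^2$ to obtain $C_{g_\lambda}\le\vartheta^{-1}T_\lambda\Gamma_{g_\lambda}$, and your explicit null-set bookkeeping only spells out what the paper leaves implicit.
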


\begin{proof}
For the frequency-gap case, the gap assumption gives
$\omega_\lambda(k)^{-1}\le T_\lambda/\vartheta_{\min}$ almost everywhere,
and therefore
\begin{equation}
  C_{g_\lambda}
  =\int_{\{k\in\mathbb R:\,g_\lambda(k)\ne0\}}\dd k\,
    \frac{|g_\lambda(k)|^2}{\omega_\lambda(k)}
  \le\frac{T_\lambda}{\vartheta_{\min}}\Gamma_{g_\lambda}.
  \label{eq:B2-gap-proof}
\end{equation}
This is precisely \eqref{eq:B2-inverse-frequency} with
$K_\omega=\vartheta_{\min}^{-1}$.

For the coupling-gap case, radial monotonicity and
\eqref{eq:B2-coupling-gap-sufficient} give
$T_\lambda\omega_\lambda(k)\ge\vartheta_{\mathrm{gap}}$ almost everywhere on the coupled set
$\{k:g_\lambda(k)\ne0\}$, so the same estimate gives
$C_{g_\lambda}\le T_\lambda \Gamma_{g_\lambda}/\vartheta_{\mathrm{gap}}$.
\end{proof}

\begin{lemma}[Gapless infrared-localization sufficiency for \assumptionlink{B2}]
\label{lem:B2-energy-localization}
Suppose that the one-particle frequency operator admits an absolutely continuous energy
representation in which, for every $\lambda$, the coupling is a possibly vector-valued function
$\gamma_\lambda(y)$ satisfying
\begin{align}
  \Gamma_{g_\lambda}
  &=\int_0^\infty\lVert\gamma_\lambda(y)\rVert_2^2\dd y,
  \notag\\
  C_{g_\lambda}
  &=\int_0^\infty
    \frac{\lVert\gamma_\lambda(y)\rVert_2^2}{y}\dd y.
  \label{eq:energy-representation-identities}
\end{align}
Assume further that every member has an infrared cutoff $y_{\mathrm{IR},\lambda}>0$ and a
time $\tau_{\mathrm{loc},\lambda}<\infty$ such that
\begin{align}
  \gamma_\lambda|_{(0,y_{\mathrm{IR},\lambda})}
  &\in H^1((0,y_{\mathrm{IR},\lambda});\mathcal K_\lambda),
  \qquad \gamma_\lambda(0)=0,
  \notag\\
  \lVert\partial_y\gamma_\lambda\rVert_{L^2(0,y_{\mathrm{IR},\lambda})}
  &\le\tau_{\mathrm{loc},\lambda}
  \lVert\gamma_\lambda\rVert_{L^2(0,y_{\mathrm{IR},\lambda})}
  \label{eq:energy-localization-condition}
\end{align}
and that there is one dimensionless $K_{\mathrm{loc}}<\infty$ satisfying
\begin{equation}
  \frac{2\tau_{\mathrm{loc},\lambda}+y_{\mathrm{IR},\lambda}^{-1}}
       {T_\lambda}
  \le K_{\mathrm{loc}}
  \qquad\text{for every }\lambda.
  \label{eq:energy-localization-Komega}
\end{equation}
Then Assumption~\assumptionlink{B2} holds with $K_\omega=K_{\mathrm{loc}}$.
\end{lemma}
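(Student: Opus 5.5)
The plan is to split the inverse-frequency integral in the energy representation \eqref{eq:energy-representation-identities} at the infrared cutoff $y_{\mathrm{IR},\lambda}$. The high-frequency part is bounded trivially. The low-frequency part is bounded by a Hardy inequality, which uses the vanishing $\gamma_\lambda(0)=0$ and the localization condition \eqref{eq:energy-localization-condition}. Both contributions come out as multiples of $\Gamma_{g_\lambda}$, and \eqref{eq:energy-localization-Komega} then gives \eqref{eq:B2-inverse-frequency}.

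\emph{Step 1 (high part).} On $[y_{\mathrm{IR},\lambda},\infty)$ we have $1/y\le y_{\mathrm{IR},\lambda}^{-1}$, so
\[
\int_{y_{\mathrm{IR},\lambda}}^\infty \frac{\lVert\gamma_\lambda(y)\rVert_2^2}{y}\dd y\le y_{\mathrm{IR},\lambda}^{-1}\Gamma_{g_\lambda}.
\]

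\emph{Step 2 (low part).} Write $a:=y_{\mathrm{IR},\lambda}$ and $f:=\gamma_\lambda|_{(0,a)}$. Split $1/y$ as $(1)\cdot(1/y)$ inside the integral and apply Cauchy--Schwarz:
\[
\int_0^a\frac{\lVert f\rVert_2^2}{y}\dd y\le \lVert f\rVert_{L^2(0,a)}\,\Bigl\lVert \tfrac{f}{y}\Bigr\rVert_{L^2(0,a)}.
\]
Next invoke the classical Hardy inequality on the finite interval, $\lVert f/y\rVert_{L^2(0,a)}\le2\lVert\partial_yf\rVert_{L^2(0,a)}$ for $f\in H^1$ with $f(0)=0$. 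By \eqref{eq:energy-localization-condition} this gives the bound
\[
2\lVert f\rVert\,\lVert\partial_y f\rVert\le2\tau_{\mathrm{loc},\lambda}\lVert f\rVert_{L^2(0,a)}^2\le2\tau_{\mathrm{loc},\lambda}\Gamma_{g_\lambda}.
\]

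\emph{Step 3 (combine).} Adding the two parts gives $C_{g_\lambda}\le(2\tau_{\mathrm{loc},\lambda}+y_{\mathrm{IR},\lambda}^{-1})\Gamma_{g_\lambda}$. By \eqref{eq:energy-localization-Komega} this is at most $K_{\mathrm{loc}}T_\lambda\Gamma_{g_\lambda}$, which is \eqref{eq:B2-inverse-frequency} with $K_\omega=K_{\mathrm{loc}}$.

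The main obstacle is justifying the Hardy step for vector-valued $\mathcal K_\lambda$-valued $H^1$ functions on a bounded interval. I would first reduce to the scalar function $h(y):=\lVert f(y)\rVert_2$. This $h$ is absolutely continuous, satisfies $h(0)=0$, and obeys $|h'|\le\lVert\partial_yf\rVert_2$ almost everywhere.

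For the scalar $h$, I would prove the inequality on $(\epsilon,a)$ by integrating by parts:
\[
\int_\epsilon^a \frac{h^2}{y^2}\dd y=\Bigl[-\frac{h^2}{y}\Bigr]_\epsilon^a+2\int_\epsilon^a\frac{hh'}{y}\dd y.
\]
The endpoint term at $a$ is nonpositive. The endpoint term $h(\epsilon)^2/\epsilon$ tends to zero, because $h(\epsilon)^2\le\epsilon\int_0^\epsilon|h'|^2$. Applying Cauchy--Schwarz to the remaining integral and letting $\epsilon\to0$ yields $\lVert h/y\rVert\le2\lVert h'\rVert$.
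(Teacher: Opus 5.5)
Your proposal is correct and follows the paper's proof. Both split $C_{g_\lambda}$ at $y_{\mathrm{IR},\lambda}$, bound the upper part by $y_{\mathrm{IR},\lambda}^{-1}\Gamma_{g_\lambda}$, and control the lower part by Cauchy--Schwarz combined with the Hardy inequality $\int_0^{y_{\mathrm{IR},\lambda}}\lVert\gamma_\lambda(y)\rVert_2^2/y^2\,\dd y\le4\lVert\partial_y\gamma_\lambda\rVert_{L^2(0,y_{\mathrm{IR},\lambda})}^2$. The only difference is that you prove the Hardy step directly for $H^1$ functions, via the scalar reduction $h=\lVert\gamma_\lambda\rVert_2$ and an $\epsilon$-truncation, whereas the paper integrates by parts for smooth $\gamma_\lambda$ vanishing at $0$ and leaves the density extension implicit.
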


\begin{proof}
The vector-valued Hardy inequality \cite[Theorem~330]{HardyLittlewoodPolya1952}, localized to
$(0,y_{\mathrm{IR},\lambda})$, is
\begin{equation}
  \int_0^{y_{\mathrm{IR},\lambda}}
  \frac{\lVert\gamma_\lambda(y)\rVert_2^2}{y^2}\dd y
  \le4\int_0^{y_{\mathrm{IR},\lambda}}
  \lVert\partial_y\gamma_\lambda(y)\rVert_2^2\dd y.
  \label{eq:energy-Hardy}
\end{equation}
For completeness, this form follows first for smooth $\gamma_\lambda$ vanishing at $y=0$ by
integration by parts:
\begin{align}
  \int_0^{y_{\mathrm{IR},\lambda}}
    \frac{\lVert\gamma_\lambda(y)\rVert_2^2}{y^2}\dd y
  &=-\frac{\lVert\gamma_\lambda(y_{\mathrm{IR},\lambda})\rVert_2^2}{y_{\mathrm{IR},\lambda}}
    +2\operatorname{Re}\int_0^{y_{\mathrm{IR},\lambda}}
    \frac{\langle\gamma_\lambda(y),
    \partial_y\gamma_\lambda(y)\rangle}{y}\dd y
  \notag\\
  &\le2
  \left(\int_0^{y_{\mathrm{IR},\lambda}}
  \frac{\lVert\gamma_\lambda(y)\rVert_2^2}{y^2}\dd y\right)^{1/2}
  \left(\int_0^{y_{\mathrm{IR},\lambda}}
  \lVert\partial_y\gamma_\lambda(y)\rVert_2^2\dd y\right)^{1/2}.
  \label{eq:energy-Hardy-proof}
\end{align}
Thus, rearranging gives \eqref{eq:energy-Hardy}.  Splitting the defining
integral for $C_{g_\lambda}$ at $y_{\mathrm{IR},\lambda}$, applying Cauchy--Schwarz and
\eqref{eq:energy-Hardy} below the cutoff, and using
$y^{-1}\le y_{\mathrm{IR},\lambda}^{-1}$ above it,
we obtain
\begin{align}
  C_{g_\lambda}
  &\le
  \left(\int_0^{y_{\mathrm{IR},\lambda}}
    \lVert\gamma_\lambda(y)\rVert_2^2\dd y\right)^{1/2}
  \left(\int_0^{y_{\mathrm{IR},\lambda}}
  \frac{\lVert\gamma_\lambda(y)\rVert_2^2}{y^2}\dd y\right)^{1/2}
  +\frac{1}{y_{\mathrm{IR},\lambda}}
  \int_{y_{\mathrm{IR},\lambda}}^\infty
    \lVert\gamma_\lambda(y)\rVert_2^2\dd y
  \notag\\
  &\le
  2\tau_{\mathrm{loc},\lambda}
  \int_0^{y_{\mathrm{IR},\lambda}}\lVert\gamma_\lambda(y)\rVert_2^2\dd y
  +\frac{1}{y_{\mathrm{IR},\lambda}}
  \int_{y_{\mathrm{IR},\lambda}}^\infty
    \lVert\gamma_\lambda(y)\rVert_2^2\dd y
  \notag\\
  &\le
  \left(2\tau_{\mathrm{loc},\lambda}+\frac{1}{y_{\mathrm{IR},\lambda}}\right)
  \Gamma_{g_\lambda}.
  \label{eq:energy-localization-implies-B2}
\end{align}
Dividing the coefficient by $T_\lambda$ and using
\eqref{eq:energy-localization-Komega} proves \eqref{eq:B2-inverse-frequency} with
$K_\omega=K_{\mathrm{loc}}$.  Thus no regularity or localization hypothesis is needed above
the assigned infrared cutoff.
\end{proof}

The cutoff hypothesis can be expressed directly in the original mode variable.  Suppose that,
over the infrared energy interval $(0,y_{\mathrm{IR},\lambda})$, the preimage of the dispersion
decomposes into monotone differentiable branches indexed by $j$, with inverse functions
$k_{\lambda,j}(y)$.  Then the coupling in the energy representation is the branch vector
\begin{equation}
  \gamma_\lambda(y)
  =\left(
    \frac{g_\lambda(k_{\lambda,j}(y))}
    {\sqrt{|\omega_\lambda'(k_{\lambda,j}(y))|}}
  \right)_j,
  \qquad 0<y<y_{\mathrm{IR},\lambda}.
  \label{eq:energy-representation-gamma}
\end{equation}
The absolute value of the Jacobian covers both increasing and decreasing branches; the
resulting vector-valued spectral representation is standard
\cite[Chap.~VII]{ReedSimon1980}.
Thus Lemma~\ref{lem:B2-energy-localization} controls only those amplitudes below
$y_{\mathrm{IR},\lambda}$.  The coupling at higher energies is unrestricted; uniformity is
required only through the dimensionless bound \eqref{eq:energy-localization-Komega}.

\medskip
\noindent\textbf{{Example: the linear-dispersion case.}}
For the positive full-line linear dispersion $\omega_\lambda(k)=\nu_\lambda|k|$, put
$k_{\mathrm{IR},\lambda}:=y_{\mathrm{IR},\lambda}/\nu_\lambda$ and define the two-branch
coupling $\mathbf g_\lambda(r):=(g_\lambda(r),g_\lambda(-r))$ for $r>0$.  Equation
\eqref{eq:energy-representation-gamma} gives
\begin{align}
  \gamma_\lambda(y)
  &=\nu_\lambda^{-1/2}\mathbf g_\lambda(y/\nu_\lambda),
  \qquad 0<y<y_{\mathrm{IR},\lambda},
  \notag\\
  \frac{
    \lVert\partial_y\gamma_\lambda\rVert_{L^2(0,y_{\mathrm{IR},\lambda})}
  }{
    \lVert\gamma_\lambda\rVert_{L^2(0,y_{\mathrm{IR},\lambda})}
  }
  &=\frac1{\nu_\lambda}
  \frac{
    \lVert\partial_k g_\lambda\rVert_{L^2(-k_{\mathrm{IR},\lambda},k_{\mathrm{IR},\lambda})}
  }{
    \lVert g_\lambda\rVert_{L^2(-k_{\mathrm{IR},\lambda},k_{\mathrm{IR},\lambda})}
  }.
  \label{eq:linear-energy-gamma}
\end{align}
The ratio is used when the infrared restriction of $g_\lambda$ is nonzero; if that restriction
vanishes, its contribution to $C_{g_\lambda}$ is zero and the same conclusion is immediate.
Consequently, if
$g_\lambda|_{(-k_{\mathrm{IR},\lambda},k_{\mathrm{IR},\lambda})}
\in H^1((-k_{\mathrm{IR},\lambda},k_{\mathrm{IR},\lambda}))$,
$g_\lambda(0)=0$, and
\begin{equation}
  \lVert\partial_k g_\lambda\rVert_{L^2(-k_{\mathrm{IR},\lambda},k_{\mathrm{IR},\lambda})}
  \le L_{\mathrm{IR},\lambda}
  \lVert g_\lambda\rVert_{L^2(-k_{\mathrm{IR},\lambda},k_{\mathrm{IR},\lambda})}
  \label{eq:linear-fixed-interaction-region}
\end{equation}
then \eqref{eq:energy-localization-condition} holds with
\begin{equation}
  \tau_{\mathrm{loc},\lambda}
  =\frac{L_{\mathrm{IR},\lambda}}{\nu_\lambda}.
  \label{eq:linear-B2-constant}
\end{equation}
Consequently, this example satisfies \assumptionlink{B2} whenever
\begin{equation}
  \sup_\lambda
  \frac{
    2L_{\mathrm{IR},\lambda}/\nu_\lambda
    +y_{\mathrm{IR},\lambda}^{-1}
  }{T_\lambda}
  <\infty.
  \label{eq:linear-B2-uniformity}
\end{equation}
No regularity or localization condition is imposed on $g_\lambda(k)$ for
$|k|\ge k_{\mathrm{IR},\lambda}$.

There is also a spatial interpretation of a useful stronger version of
\eqref{eq:linear-fixed-interaction-region}.  Define the infrared piece
$g_{\lambda,<}(k):=\boldsymbol{1}_{(-k_{\mathrm{IR},\lambda},k_{\mathrm{IR},\lambda})}(k)g_\lambda(k)$, where
$\boldsymbol{1}_{(-k_{\mathrm{IR},\lambda},k_{\mathrm{IR},\lambda})}$ denotes the indicator function of the displayed interval.
Suppose that this infrared piece belongs to $H^1(\mathbb R)$, which in
particular excludes hard jumps at $\pm k_{\mathrm{IR},\lambda}$.  If
\begin{equation}
  G_{\lambda,<}(x)
  :=\frac1{\sqrt{2\pi}}
  \int_{-k_{\mathrm{IR},\lambda}}^{k_{\mathrm{IR},\lambda}}g_\lambda(k)e^{-ikx}\dd k,
  \label{eq:linear-infrared-spatial-profile}
\end{equation}
then Plancherel's theorem gives
\begin{equation}
  \frac{\lVert xG_{\lambda,<}\rVert_2}{\lVert G_{\lambda,<}\rVert_2}
  =
  \frac{
    \lVert\partial_k g_\lambda\rVert_{L^2(-k_{\mathrm{IR},\lambda},k_{\mathrm{IR},\lambda})}
  }{
    \lVert g_\lambda\rVert_{L^2(-k_{\mathrm{IR},\lambda},k_{\mathrm{IR},\lambda})}
  }.
  \label{eq:linear-infrared-Plancherel}
\end{equation}
Hence a bound by $L_{\mathrm{IR},\lambda}$ on the left-hand side is a spatial-extent condition
for the infrared part of each transmission-line coupling.  Its required family uniformity is
the dimensionless condition \eqref{eq:linear-B2-uniformity}; the spatial profile of the
higher-frequency part remains unrestricted.

We now prove the quiet-interval estimate that replaces Assumption~\assumptionlink{A0}.

\begin{lemma}[Quiet interaction action controls the coupling norm]
\label{lem:quiet-controls-Gamma}
Under Assumption~\assumptionlink{B1},
\begin{equation}
  T_\lambda^2\Gamma_{g_\lambda}
  \le K_{\mathrm q}
  :=\frac{K_D}{\chi_0^2}
  +\frac{2\sqrt{K_DK_J}}{\chi_0}
  \qquad\text{for every }\lambda.
  \label{eq:quiet-Gamma-bound}
\end{equation}
\end{lemma}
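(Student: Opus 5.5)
The idea is to bound $\Gamma_{g_\lambda}$ pointwise in time on $\mathcal W_\lambda$ by a combination of $\mathcal D_\lambda(t)$ and $\mathcal J_\lambda(t)$. I then average over $\mathcal W_\lambda$, which is legitimate because $\Gamma_{g_\lambda}$ is time independent. The vacuum commutator \eqref{eq:quiet-B-commutator}, which the fixed-Hamiltonian bound deliberately avoided, is exactly what links $\Gamma_{g_\lambda}$ to the creation-operator norm $\lVert B_\lambda^\dagger(t)\Phi\rVert_2$.

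\emph{Step 1: the commutator.} Fix $t\in\mathcal W_\lambda$ and write $\Phi:=\ket{\Phi_\lambda^{\picsup{I}}(t)}$, suppressing $\otimes\id_R$. For any projection $P$, set $A:=[P,\sigma_\lambda(t)]$. As in \eqref{eq:theorem8-X-minus-adjoint}, $[P\otimes\id_F,H_{\mathrm{int},\lambda}(t)]=A\otimes B_\lambda(t)-A^\dagger\otimes B_\lambda^\dagger(t)$. By unitary invariance and the fact that the supremum in \eqref{eq:B1-chi} runs over all projections, I choose $P$ (the conjugate by $e^{iH_{Q,\lambda}t}$ of a maximiser of \eqref{eq:B1-chi}) so that $\lVert A\rVert=\chi_\lambda$. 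I then choose a unit $\psi$ with $\lVert A^\dagger\psi\rVert_2=\lVert A^\dagger\rVert=\chi_\lambda$, which exists in finite dimension.

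\emph{Step 2: the key estimate.} By the reverse triangle inequality,
\[
\mathcal D_\lambda(t)\ge\lVert A^\dagger\psi\rVert_2\,\lVert B_\lambda^\dagger(t)\Phi\rVert_2-\lVert A\psi\rVert_2\,\lVert B_\lambda(t)\Phi\rVert_2
\ge\chi_\lambda\sqrt{\mathcal J_\lambda(t)^2+\Gamma_{g_\lambda}}-\chi_\lambda\mathcal J_\lambda(t).
\]
Here I use $\lVert B^\dagger\Phi\rVert_2^2=\lVert B\Phi\rVert_2^2+\Gamma_{g_\lambda}$ from \eqref{eq:quiet-B-commutator}, and $\lVert A\psi\rVert_2\le\chi_\lambda$. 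Hence $\chi_\lambda\sqrt{\mathcal J^2+\Gamma}\le\mathcal D+\chi_\lambda\mathcal J$. Squaring and cancelling $\chi_\lambda^2\mathcal J^2$ gives
\[
\chi_\lambda^2\Gamma_{g_\lambda}\le\mathcal D_\lambda(t)^2+2\chi_\lambda\mathcal J_\lambda(t)\mathcal D_\lambda(t).
\]
Dividing by $\chi_\lambda^2$ and using $\chi_\lambda\ge\chi_0$ yields
\[
\Gamma_{g_\lambda}\le\frac{\mathcal D_\lambda(t)^2}{\chi_0^2}+\frac{2\mathcal J_\lambda(t)\mathcal D_\lambda(t)}{\chi_0}.
\]

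\emph{Step 3: averaging.} Average this over $\mathcal W_\lambda$ and multiply by $T_\lambda^2$. For the cross term, apply Cauchy--Schwarz in $L^2(\mathcal W_\lambda)$:
\[
\frac{1}{|\mathcal W_\lambda|}\int_{\mathcal W_\lambda}\mathcal J_\lambda\mathcal D_\lambda\,\dd t\le\Bigl(\frac{1}{|\mathcal W_\lambda|}\int_{\mathcal W_\lambda}\mathcal J_\lambda^2\,\dd t\Bigr)^{1/2}\Bigl(\frac{1}{|\mathcal W_\lambda|}\int_{\mathcal W_\lambda}\mathcal D_\lambda^2\,\dd t\Bigr)^{1/2}.
\]
Inserting \eqref{eq:B1-J-bound} and \eqref{eq:B1-D-bound} then gives $T_\lambda^2\Gamma_{g_\lambda}\le K_D/\chi_0^2+2\sqrt{K_JK_D}/\chi_0$.

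\emph{Main obstacle.} Most care is needed in Step 2. The sign of the cross term between the $B$ and $B^\dagger$ contributions is not controlled, so I avoid expanding the square and use the triangle inequality instead. A secondary technical point is ensuring the single $(P,\psi)$ pair realises $\lVert A^\dagger\psi\rVert_2=\chi_\lambda$ while still satisfying $\lVert A\psi\rVert_2\le\chi_\lambda$; both hold since $\lVert A^\dagger\rVert=\lVert A\rVert$. Domain issues, namely $\Phi$ lying in the domain of $B^\dagger$, are handled on the finite-energy core as in Section~\ref{sec:model}.
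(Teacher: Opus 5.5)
Your proposal is correct and follows essentially the same route as the paper. You choose a projection attaining $\chi_\lambda$ for $\sigma_\lambda(t)$, use the commutator identity to get $\lVert B_\lambda^\dagger(t)\Phi\rVert_2^2=\mathcal J_\lambda(t)^2+\Gamma_{g_\lambda}$, and apply the reverse triangle inequality to get $\mathcal D_\lambda(t)\ge\chi_\lambda\bigl(\sqrt{\mathcal J_\lambda(t)^2+\Gamma_{g_\lambda}}-\mathcal J_\lambda(t)\bigr)$, then square and average with Cauchy--Schwarz. The only cosmetic differences are two: you fix an explicit unit $\psi$ with $\lVert A^\dagger\psi\rVert_2=\chi_\lambda$, where the paper phrases the same step through induced norms of the maps $\psi\mapsto(A\otimes B)(\psi\otimes\Phi)$; and you replace $\chi_\lambda$ by $\chi_0$ after squaring rather than before.
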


\begin{proof}
Fix $\lambda$ and $t\in \mathcal W_\lambda$.  To shorten the notation, put
\begin{equation}
  \ket{\Phi}:=\ket{\Phi_\lambda^{\picsup{I}}(t)},
  \qquad
  B:=B_\lambda(t),
  \qquad
  \mathcal J:=\mathcal J_\lambda(t),
  \qquad
  \Gamma:=\Gamma_{g_\lambda}.
  \label{eq:quiet-lemma-abbreviations}
\end{equation}
By definition of $\mathcal J$,
\begin{equation}
  \lVert(B\otimes\id_R)\ket{\Phi}\rVert_2=\mathcal J.
  \label{eq:quiet-annihilation-vector-norm}
\end{equation}
Using the commutator identity \eqref{eq:quiet-B-commutator}, normalization of $\ket{\Phi}$,
and $BB^\dagger=B^\dagger B+\Gamma\id_F$, we also have
\begin{align}
  \lVert(B^\dagger\otimes\id_R)\ket{\Phi}\rVert_2^2
  &=\bra{\Phi}(BB^\dagger\otimes\id_R)\ket{\Phi}
  \notag\\
  &=\bra{\Phi}(B^\dagger B\otimes\id_R)\ket{\Phi}+\Gamma
  \notag\\
  &=\mathcal J^2+\Gamma.
  \label{eq:quiet-creation-vector-norm}
\end{align}

The maximum in \eqref{eq:B1-chi} exists because the system is finite dimensional.  Choose a
projection $P_t$ attaining the corresponding maximum for $\sigma_\lambda(t)$ and define
\begin{equation}
  A_t:=[P_t,\sigma_\lambda(t)].
  \label{eq:quiet-At}
\end{equation}
Unitary conjugation maps orthogonal projections bijectively to orthogonal projections.  Hence
\begin{align}
  \max_P\lVert[P,\sigma_\lambda(t)]\rVert
  &=\max_P\left\|
    \left[e^{-iH_{Q,\lambda}t}Pe^{iH_{Q,\lambda}t},\sigma_\lambda\right]
  \right\|
  \notag\\
  &=\chi_\lambda,
  \label{eq:quiet-chi-time-invariance}
\end{align}
where unitary invariance of the operator norm was used.  Thus
\begin{equation}
  \lVert A_t\rVert=\lVert A_t^\dagger\rVert=\chi_\lambda.
  \label{eq:quiet-At-norm}
\end{equation}

As in the adjoint-pair calculation in Theorem~\ref{thm:exact-speed}, self-adjointness of $P_t$
gives
\begin{equation}
  [P_t,\sigma_\lambda^\dagger(t)]=-A_t^\dagger.
  \label{eq:quiet-adjoint-pair}
\end{equation}
Consequently,
\begin{equation}
  [P_t\otimes\id_F,H_{\mathrm{int},\lambda}(t)]
  =A_t\otimes B-A_t^\dagger\otimes B^\dagger.
  \label{eq:quiet-system-commutator}
\end{equation}
Regard the two terms on the right as linear maps from $\mathcal H_Q$ into
$\mathcal H_Q\otimes\mathcal H_F\otimes\mathcal H_R$ after applying them to
$\ket{\psi}\otimes\ket{\Phi}$.  Their induced norms are, by
\eqref{eq:quiet-annihilation-vector-norm}, \eqref{eq:quiet-creation-vector-norm}, and
\eqref{eq:quiet-At-norm},
\begin{equation}
  \chi_\lambda \mathcal J
  \qquad\text{and}\qquad
  \chi_\lambda\sqrt{\mathcal J^2+\Gamma},
  \label{eq:quiet-two-map-norms}
\end{equation}
respectively.  The reverse triangle inequality for operator norms and the supremum in
\eqref{eq:B1-disturbance-action} therefore imply
\begin{equation}
  \mathcal D_\lambda(t)
  \ge\chi_\lambda
  \left(\sqrt{\mathcal J_\lambda(t)^2+\Gamma_{g_\lambda}}-\mathcal J_\lambda(t)\right).
  \label{eq:quiet-reverse-triangle}
\end{equation}
Using $\chi_\lambda\ge\chi_0$ and rearranging gives
\begin{equation}
  \sqrt{\mathcal J_\lambda(t)^2+\Gamma_{g_\lambda}}
  \le \mathcal J_\lambda(t)+\frac{\mathcal D_\lambda(t)}{\chi_0}.
  \label{eq:quiet-before-squaring}
\end{equation}
Both sides are nonnegative.  Squaring and cancelling $\mathcal J_\lambda(t)^2$ yields the pointwise
bound
\begin{equation}
  \Gamma_{g_\lambda}
  \le
  \frac{\mathcal D_\lambda(t)^2}{\chi_0^2}
  +\frac{2\mathcal J_\lambda(t)\mathcal D_\lambda(t)}{\chi_0}.
  \label{eq:quiet-pointwise-Gamma}
\end{equation}

The left-hand side is independent of $t$.  Average \eqref{eq:quiet-pointwise-Gamma} over
$\mathcal W_\lambda$ and apply Cauchy--Schwarz to the mixed term:
\begin{align}
  T_\lambda^2\Gamma_{g_\lambda}
  &\le
  \frac{T_\lambda^2}{\chi_0^2|\mathcal W_\lambda|}
  \int_{\mathcal W_\lambda}\mathcal D_\lambda(t)^2\dd t
  \notag\\
  &\quad+
  \frac{2T_\lambda^2}{\chi_0|\mathcal W_\lambda|}
  \left(\int_{\mathcal W_\lambda}\mathcal J_\lambda(t)^2\dd t\right)^{1/2}
  \left(\int_{\mathcal W_\lambda}\mathcal D_\lambda(t)^2\dd t\right)^{1/2}
  \notag\\
  &\le
  \frac{K_D}{\chi_0^2}
  +\frac{2\sqrt{K_JK_D}}{\chi_0}
  =K_{\mathrm q},
  \label{eq:quiet-Gamma-proof-finish}
\end{align}
where \eqref{eq:B1-J-bound} and \eqref{eq:B1-D-bound} were used in the last line.  This proves
\eqref{eq:quiet-Gamma-bound}.
\end{proof}

It is important that the commutator term in
\eqref{eq:quiet-creation-vector-norm} cannot be discarded in this lemma.  The sharper proof of
Theorem~\ref{thm:exact-speed} combines the two adjoint contributions inside an expectation and
thereby avoids a separate $\Gamma_g$ term in the speed estimate.  Repeating that cancellation
here would leave only $B_\lambda(t)$ acting on the field state.  Since an arbitrarily strong
annihilation operator still annihilates the vacuum, such an estimate contains no state-independent
information about the size of $g_\lambda$.  The commutator is therefore unnecessary in the final
fidelity-rate estimate but essential in converting quiet interaction action into a coupling-norm
bound.

\begin{theorem}[Quiet-interaction square-root bound for general dispersions]
\label{thm:quiet-relative-speed}
Consider a family satisfying the model and domain assumptions of Section~\ref{sec:model}.
Assume \assumptionlink{A1}, \assumptionlink{A2}, and
\assumptionlink{B1}--\assumptionlink{B2}.  The nonnegative dispersions $\omega_\lambda$ may
depend on $\lambda$; finiteness of $C_{g_\lambda}$ requires each one to be strictly positive
almost everywhere on its coupled set.
Then
\begin{equation}
  \omega_{\mathrm{gate},\lambda}T_\lambda
  \le c_{\mathrm{quiet}}\sqrt{T_\lambda E_\lambda},
  \label{eq:quiet-relative-result}
\end{equation}
where
\begin{equation}
  c_{\mathrm{quiet}}
  :=\frac{4\pi}{q_0}
  \sqrt{C_{\mathrm{fld}}K_\omega K_{\mathrm q}},
  \qquad
  K_{\mathrm q}
  :=\frac{K_D}{\chi_0^2}
  +\frac{2\sqrt{K_DK_J}}{\chi_0}.
  \label{eq:quiet-relative-constant}
\end{equation}
The constant $c_{\mathrm{quiet}}$ is independent of $\lambda$, $E_\lambda$, the total ground energy
$E_{\mathrm{gs},\lambda}$, and the detailed coupling shape.  In particular, no uniform bound on
$T_\lambda[\lambda_{\max}(H_{Q,\lambda})-E_{\mathrm{gs},\lambda}]$ is assumed.
\end{theorem}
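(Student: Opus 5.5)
The plan is to chain three results already established: Lemma~\ref{lem:quiet-controls-Gamma}, Assumption~\assumptionlink{B2}, and the memberwise form of the fixed-Hamiltonian estimate from Theorem~\ref{thm:exact-speed}.

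\emph{Step 1 (memberwise exact rate).} For each $\lambda$, I would repeat Part~2 of the proof of Theorem~\ref{thm:relative-speed} verbatim. That argument never used \assumptionlink{A0}; it only needs Lemma~\ref{lem:projection-commutator}, Lemma~\ref{lem:B-energy} applied with $(g_\lambda,\omega_\lambda)$, Assumption~\assumptionlink{A1}, and Assumption~\assumptionlink{A2}. It yields
\begin{equation*}
  \omega_{\mathrm{gate},\lambda}\le\frac{4\pi}{q_0}\sqrt{C_{g_\lambda}C_{\mathrm{fld}}E_\lambda}.
\end{equation*}
Here $C_{g_\lambda}<\infty$ is part of the model assumptions, so Lemma~\ref{lem:B-energy} applies even though $\omega_\lambda$ now varies with $\lambda$. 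No dimensionless control of $C_{g_\lambda}$ is needed at this stage.

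\emph{Step 2 (bounding $C_{g_\lambda}$ in reference units).} Multiplying \eqref{eq:B2-inverse-frequency} by $T_\lambda$ gives $T_\lambda C_{g_\lambda}\le K_\omega T_\lambda^2\Gamma_{g_\lambda}$. Lemma~\ref{lem:quiet-controls-Gamma}, which uses only \assumptionlink{B1}, gives $T_\lambda^2\Gamma_{g_\lambda}\le K_{\mathrm q}$. Together these yield $T_\lambda C_{g_\lambda}\le K_\omega K_{\mathrm q}$ uniformly in $\lambda$.

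\emph{Step 3 (assembly).} I would multiply the Step~1 bound by $T_\lambda$ and write
\begin{equation*}
  \omega_{\mathrm{gate},\lambda}T_\lambda\le\frac{4\pi}{q_0}\sqrt{(T_\lambda C_{g_\lambda})\,C_{\mathrm{fld}}\,(T_\lambda E_\lambda)}.
\end{equation*}
Inserting the Step~2 bound then gives \eqref{eq:quiet-relative-result} with the stated $c_{\mathrm{quiet}}$. The independence claims follow by inspection, since $c_{\mathrm{quiet}}$ is built from $q_0$, $C_{\mathrm{fld}}$, $K_\omega$, $K_D$, $K_J$ and $\chi_0$ alone. The ground energy $E_{\mathrm{gs},\lambda}$ enters only through $E_\lambda$ and is never bounded separately.

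\emph{Main obstacle.} The substantive work is Lemma~\ref{lem:quiet-controls-Gamma}, which is already proved. That leaves one bookkeeping issue to check in Step~1: the degenerate case $g_\lambda=0$. There no gate transition is possible, so \assumptionlink{A2} fails and the case is vacuous. I would also confirm that the domain and absolute-continuity hypotheses of Section~\ref{sec:model} are imposed memberwise, so that the fundamental theorem of calculus applies on each $I_\lambda$.
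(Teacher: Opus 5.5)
Your proposal is correct and follows essentially the same argument as the paper, only in a different order. The paper likewise uses Lemma~\ref{lem:quiet-controls-Gamma} to get $T_\lambda^2\Gamma_{g_\lambda}\le K_{\mathrm q}$. It then converts this via \assumptionlink{B2} into $T_\lambda C_{g_\lambda}\le K_\omega K_{\mathrm q}$, and reruns the memberwise adjoint-pair rate estimate of Theorem~\ref{thm:exact-speed}, using \assumptionlink{A1}--\assumptionlink{A2} and never \assumptionlink{A0}.
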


The statement is covariant under a global fast-forwarding: both
$\omega_{\mathrm{gate},\lambda}T_\lambda$ and $T_\lambda E_\lambda$ are unchanged by
$H_\lambda\mapsto\alpha_\lambda H_\lambda$ when the reference rule
\eqref{eq:reference-time-rule} is used.  It implies an absolute square-root law in a common
laboratory unit whenever the memberwise times $T_\lambda$ are uniformly comparable to that
unit.  Without such comparability, the theorem is deliberately a relative, dimensionless
statement.

\begin{proof}
The proof has three steps.  First, quiet interaction action bounds the unweighted coupling norm.
Second, inverse-frequency compatibility converts that estimate into a uniform bound on
$C_{g_\lambda}$.  Third, the adjoint-pair calculation of Theorem~\ref{thm:exact-speed} gives the
fidelity rate without introducing any additional $\Gamma_{g_\lambda}$ term.

\medskip
\noindent\textit{Step 1: quiet intervals control $\Gamma_{g_\lambda}$.}
Lemma~\ref{lem:quiet-controls-Gamma} gives
\begin{equation}
  T_\lambda^2\Gamma_{g_\lambda}\le K_{\mathrm q}.
  \label{eq:quiet-theorem-step-one}
\end{equation}
This is the only point at which the quiet-time condition enters the proof: it uses the aggregate
averages over the positive-measure set $\mathcal W_\lambda$.  It neither requires both component windows
to have positive length nor uses separate estimates on them.

\medskip
\noindent\textit{Step 2: the dispersion condition controls $C_{g_\lambda}$.}
Assumption~\assumptionlink{B2} and \eqref{eq:quiet-theorem-step-one} imply
\begin{align}
  T_\lambda C_{g_\lambda}
  &\le K_\omega T_\lambda^2\Gamma_{g_\lambda}
  \notag\\
  &\le K_\omega K_{\mathrm q}.
  \label{eq:quiet-uniform-Cg}
\end{align}
This step is the only place where the detailed dispersion enters.  Assumption~\assumptionlink{B2}
allows $\omega_\lambda$ to depend on $\lambda$ and supplies a single constant $K_\omega$ that
works for every family member; the remainder of the proof uses no further property of the
dispersion.

\medskip
\noindent\textit{Step 3: exact fidelity rate and integration.}
This is essentially the proof of Theorem~\ref{thm:exact-speed}, now applied memberwise to the
$\lambda$-dependent quantities.  We repeat it to make clear why no further bound on
$\Gamma_{g_\lambda}$ is needed.
For the member $\lambda$, define
\begin{equation}
  A_\lambda(t)
  :=[P_{\mathrm{tar},\lambda},\sigma_\lambda(t)],
  \qquad
  X_\lambda(t)
  :=A_\lambda(t)\otimes B_\lambda(t),
  \label{eq:quiet-theorem-A-X}
\end{equation}
and
\begin{equation}
  z_\lambda(t):=\Tr[\rho_\lambda^{\picsup{I}}(t)X_\lambda(t)].
  \label{eq:quiet-theorem-z}
\end{equation}
Since $P_{\mathrm{tar},\lambda}$ is self-adjoint,
\begin{equation}
  [P_{\mathrm{tar},\lambda},\sigma_\lambda^\dagger(t)]
  =-A_\lambda^\dagger(t).
  \label{eq:quiet-theorem-adjoint}
\end{equation}
Consequently,
\begin{align}
  [P_{\mathrm{tar},\lambda}\otimes\id_F,
   H_{\mathrm{int},\lambda}(t)]
  &=X_\lambda(t)-X_\lambda^\dagger(t).
  \label{eq:quiet-theorem-X-minus-adjoint}
\end{align}
The exact von Neumann equation and cyclicity of the trace now give
\begin{align}
  \frac{\dd}{\dd t}F_{\mathrm{tar},\lambda}(t)
  &=-i\Tr\!\left[
    \rho_\lambda^{\picsup{I}}(t)
    \bigl(X_\lambda(t)-X_\lambda^\dagger(t)\bigr)
  \right]
  \notag\\
  &=2\operatorname{Im}z_\lambda(t).
  \label{eq:quiet-theorem-fidelity-derivative}
\end{align}
This exact pairing is the improvement over separately estimating the annihilation and creation
contributions: the latter would introduce
$\Tr[\rho_\lambda^{\picsup{I}}(t)B_\lambda(t)B_\lambda^\dagger(t)]$ and hence an unnecessary additive
$\Gamma_{g_\lambda}$ term.

The Hilbert--Schmidt Cauchy--Schwarz inequality gives
\begin{align}
  |z_\lambda(t)|^2
  &\le\Tr[\rho_\lambda^{\picsup{I}}(t)X_\lambda^\dagger(t)X_\lambda(t)]
  \notag\\
  &=\Tr\!\left[
    \rho_\lambda^{\picsup{I}}(t)
    \bigl(A_\lambda^\dagger(t)A_\lambda(t)
    \otimes B_\lambda^\dagger(t)B_\lambda(t)\bigr)
  \right]
  \notag\\
  &\le\lVert A_\lambda(t)\rVert^2
  \Tr\!\left[
    \rho_\lambda^{\picsup{I}}(t)
    (\id_Q\otimes B_\lambda^\dagger(t)B_\lambda(t))
  \right].
  \label{eq:quiet-theorem-z-bound}
\end{align}
Lemma~\ref{lem:projection-commutator}, unitary invariance of the operator norm, and
$\lVert\sigma_\lambda\rVert=1$ show that
\begin{equation}
  \lVert A_\lambda(t)\rVert\le1.
  \label{eq:quiet-theorem-A-bound}
\end{equation}
Lemma~\ref{lem:B-energy}, applied memberwise with $g_\lambda$ and $\omega_\lambda$, therefore
implies
\begin{equation}
  |z_\lambda(t)|^2
  \le C_{g_\lambda}
  \Tr[\rho_\lambda^{\picsup{I}}(t)(\id_Q\otimes H_{F,\lambda})].
  \label{eq:quiet-theorem-energy-bound}
\end{equation}
By Assumption~\assumptionlink{A1},
\begin{equation}
  \left|
    \frac{\dd}{\dd t}F_{\mathrm{tar},\lambda}(t)
  \right|
  \le2\sqrt{C_{g_\lambda}C_{\mathrm{fld}}E_\lambda}
  \qquad(t\in I_\lambda).
  \label{eq:quiet-theorem-rate}
\end{equation}

Absolute continuity, the fundamental theorem of calculus, and Assumption~\assumptionlink{A2} give
\begin{align}
  q_0
  &\le q_\lambda
  \le\int_{I_\lambda}
  \left|
    \frac{\dd}{\dd t}F_{\mathrm{tar},\lambda}(t)
  \right|\dd t
  \notag\\
  &\le2\Delta t_\lambda
  \sqrt{C_{g_\lambda}C_{\mathrm{fld}}E_\lambda}.
  \label{eq:quiet-theorem-integrated-rate}
\end{align}
Using $\omega_{\mathrm{gate},\lambda}=2\pi/\Delta t_\lambda$ and multiplying by $T_\lambda$ yields
\begin{align}
  \omega_{\mathrm{gate},\lambda}T_\lambda
  &\le\frac{4\pi}{q_0}
  \sqrt{(T_\lambda C_{g_\lambda})C_{\mathrm{fld}}(T_\lambda E_\lambda)}
  \notag\\
  &\le\frac{4\pi}{q_0}
  \sqrt{C_{\mathrm{fld}}K_\omega K_{\mathrm q}}
  \sqrt{T_\lambda E_\lambda}
  \notag\\
  &=c_{\mathrm{quiet}}\sqrt{T_\lambda E_\lambda},
  \label{eq:quiet-theorem-finish}
\end{align}
where \eqref{eq:quiet-uniform-Cg} was used in the second line.  This proves
\eqref{eq:quiet-relative-result} and \eqref{eq:quiet-relative-constant}.
\end{proof}

\section{Uniform lower stability and near-linear speed with nonlinear clock interactions}
\label{sec:nonlinear-clock-comparison}

This section presents a nonlinear finite-clock family that satisfies \assumptionlink{A0}--\assumptionlink{A2}
and the meaningful analogues \assumptionlink[B1$'$]{B1prime}--\assumptionlink[B2$'$]{B2prime}, while
achieving near-linear gate-frequency scaling with drive energy.  It thereby shows that the
square-root scaling is not fundamental beyond the linear bosonic control setting.

\subsection{\texorpdfstring{Model and fulfilment of Assumptions
\assumptionlink{A0}--\assumptionlink{A2}}{Model and fulfilment of Assumptions A0--A2}}
\label{sec:nonlinear-clock-A-assumptions}

The linear bosonic form of the interaction in Section~\ref{sec:model} is essential to
Theorem~\ref{thm:relative-speed}.  Uniform lower stability, memberwise background periods
defined by one common rule,
uniform control of the drive energy, and a nontrivial target transition do not by
themselves imply a square-root speed limit for arbitrary interactions.  We demonstrate this
by briefly recalling the finite-clock construction in Theorem~2 of
Ref.~\cite{Woods2024}.  In that construction the family parameter is the clock
dimension $d$; thus $d$ serves as the counterpart of the family index $\lambda$ used for the
linear bosonic drive family.  The $d$-dependence of many quantities there was left
implicit.  Here we make it explicit by adding a subscript
$d$ or an argument $d$; for example, we write $N_g(d)$ for the number of gates.

Assign a background period $T_d>0$, denoted $T_0$ in Ref.~\cite{Woods2024}, and write
$\omega_{0,d}:=2\pi/T_d$.  Throughout the following, as in that work,
$T_d$ is held constant as $d$ varies.  The $d$-dimensional clock
has energy basis $\{\ket{E_n}\}_{n=0}^{d-1}$ and free Hamiltonian
\begin{equation}
  H_{C,d}
  :=\sum_{n=0}^{d-1}n\omega_{0,d}\ket{E_n}\bra{E_n}.
  \label{eq:finite-clock-free-H}
\end{equation}
Its discrete Fourier basis is
\begin{equation}
  \ket{\theta_k}
  :=\frac{1}{\sqrt d}\sum_{n=0}^{d-1}
  e^{-2\pi i nk/d}\ket{E_n}.
  \label{eq:finite-clock-time-basis}
\end{equation}
For a finite gate set independent of $d$, the time-independent total Hamiltonian has the form
\begin{equation}
  H_d
  :=H_{C,d}
  +\sum_{l=1}^{N_g(d)}
  I_{M_0L}^{(l)}\otimes I_{C,d}^{(l)},
  \label{eq:finite-clock-total-H}
\end{equation}
where
\begin{equation}
  I_{C,d}^{(l)}
  :=\omega_{0,d}\sum_k
  \overline V_{0,d}^{(l)}\!\left(\frac{2\pi k}{d}\right)
  \ket{\theta_k}\bra{\theta_k}.
  \label{eq:finite-clock-interaction}
\end{equation}
Here $\overline V_{0,d}^{(l)}\ge0$ is a smooth, localized, periodic potential normalized by
\begin{equation}
  \int_0^{2\pi}\overline V_{0,d}^{(l)}(x)\,\dd x=1.
  \label{eq:finite-clock-area}
\end{equation}
The operator $I_{M_0L}^{(l)}$ reads the $l$th memory cell and applies the corresponding
logical gate generator.  The logarithm of each gate is chosen with spectrum in $(0,2\pi]$;
consequently,
\begin{equation}
  0\le I_{M_0L}^{(l)}\le2\pi\id_{M_0L},
  \qquad
  I_{C,d}^{(l)}\ge0.
  \label{eq:finite-clock-positive-interactions}
\end{equation}

\begin{theorem}[Uniform stability is compatible with near-linear speed for nonlinear control]
\label{thm:nonlinear-clock-comparison}
For every fixed $\bar\varepsilon\in(0,1/6)$, the family constructed in Theorem~2 of
Ref.~\cite{Woods2024} can be chosen so that its background period is $T_d$, its
total Hamiltonians have the common ground energy
\begin{equation}
  E_{\mathrm{gs},d}=0,
  \label{eq:finite-clock-ground-energy}
\end{equation}
and its initial mean energy above the ground satisfies
\begin{equation}
  E_d
  =\frac{2\pi\widetilde n_0}{T_d}(d-1)+r_d,
  \qquad T_d r_d\longrightarrow0,
  \label{eq:finite-clock-energy}
\end{equation}
where $\widetilde n_0\in(0,1)$ is independent of $d$.  With the choice
$\widetilde n_0=1/(2\pi)$ and
\begin{equation}
  N_g(d)=\left\lfloor d^{1-\bar\varepsilon}\right\rfloor,
  \label{eq:finite-clock-number-gates}
\end{equation}
the gate frequency $f_d:=N_g(d)/T_d$ obeys
\begin{equation}
  f_d
  =\frac{1}{T_d}(T_d E_d)^{1-\bar\varepsilon}
  +O\!\left(\frac{1}{T_d}\right).
  \label{eq:finite-clock-near-linear-frequency}
\end{equation}
The construction satisfies the reference-time definition and the uniform stability,
drive-energy, and nontrivial-transition requirements in assumptions
\assumptionlink{A0}--\assumptionlink{A2}.
\end{theorem}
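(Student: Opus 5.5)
The proof reduces to bookkeeping on the construction of Theorem~2 of Ref.~\cite{Woods2024}; no new analytic estimate is needed. It has four steps.

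\emph{Ground energy.} By \eqref{eq:finite-clock-positive-interactions}, every summand $I_{M_0L}^{(l)}\otimes I_{C,d}^{(l)}$ is a tensor product of positive operators and hence positive. Since $H_{C,d}\ge0$, we get $H_d\ge0$. Equality $E_{\mathrm{gs},d}=0$ then needs a zero-energy vector. The natural candidate is $\ket{E_0}$ tensored with a memory/logical state annihilated by every $I_{M_0L}^{(l)}$. If the logarithm convention (spectrum in $(0,2\pi]$) rules out a zero eigenvalue, I would do one of two things. The first option adds a scalar shift to $H_d$; this is allowed because $E_d$ is measured from the ground energy, and \eqref{eq:reference-time-rule} and \eqref{eq:A0-reference-time-bound} are invariant under such shifts. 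The second option uses the freedom ``can be chosen'' to include an idle register state on which the gate generators vanish. Either way \eqref{eq:finite-clock-ground-energy} follows.

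\emph{Energy.} In Ref.~\cite{Woods2024} the initial clock state is a quasi-ideal clock state. Its energy amplitudes are Gaussian-type and centred at $n_0=\widetilde n_0(d-1)$, so
\[
\langle H_{C,d}\rangle=\omega_{0,d}\,\widetilde n_0(d-1)+\text{(exponentially small in $d$)}.
\]
The interaction expectation is $\sum_l\langle I_{M_0L}^{(l)}\rangle\langle I_{C,d}^{(l)}\rangle$. Each clock factor is $\omega_{0,d}$ times the potential sampled at the clock's angular position. The clock starts at a position where every localized $\overline V_{0,d}^{(l)}$ is exponentially small, up to the $d$-dependent tails controlled in Ref.~\cite{Woods2024}. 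Multiplying by $T_d$, the sum over $N_g(d)\le d$ gates remains $o(1)$ provided the per-gate tail is $O(d^{-1-\delta})$. This gives $T_dr_d\to0$ and hence \eqref{eq:finite-clock-energy}. \textbf{Main obstacle:} I expect this step to be hardest. It requires extracting explicit $d$-uniform tail bounds from Ref.~\cite{Woods2024} that survive summation over $N_g(d)$ gates. The restriction $\bar\varepsilon<1/6$ is presumably where the construction guarantees these bounds.

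\emph{Frequency.} Substitute $\widetilde n_0=1/(2\pi)$. Then
\[
T_dE_d=(d-1)+o(1).
\]
Write $d^{1-\bar\varepsilon}=(T_dE_d+1+o(1))^{1-\bar\varepsilon}$. The mean value theorem gives $d^{1-\bar\varepsilon}-(T_dE_d)^{1-\bar\varepsilon}=O(d^{-\bar\varepsilon})=O(1)$. Removing the floor costs at most $1$. Dividing by $T_d$ yields \eqref{eq:finite-clock-near-linear-frequency}.

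\emph{Assumptions.}
\begin{itemize}
\item \textbf{A0:} the finite clock plus register plays the role of the system. The recurrence period $T_d=2\pi/\omega_{0,d}$ of $H_{C,d}$ is a scale-covariant rule as in \eqref{eq:reference-time-rule}. Using the operator bounds above, the spectral ceiling of $H_d$ is at most $\omega_{0,d}(d-1)+2\pi\omega_{0,d}\sum_l\max\overline V_{0,d}^{(l)}$. Multiplied by $T_d$, this bounds the relevant separation from $E_{\mathrm{gs},d}=0$ by a quantity in units of $T_d$. If this bound grows with $d$, I would instead identify the controlled system with the logical register only, whose spectral ceiling is $O(1/T_d)$ uniformly. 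In that reading the clock is the drive, so the product is bounded by a $d$-independent $K_0$.
\item \textbf{A1:} the clock energy during the run stays near its initial value. Since $E_d$ is dominated by the clock energy, the ratio is bounded by a constant $C_{\mathrm{fld}}$ close to $1$.
\item \textbf{A2:} the per-gate fidelity guarantees of Ref.~\cite{Woods2024} give $q_d\ge q_0$ for large $d$, for example $q_0=1/2$.
\end{itemize}
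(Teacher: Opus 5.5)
\paragraph{Comparison with the paper's proof.}
Your skeleton matches the paper's proof. Positivity of each $I_{M_0L}^{(l)}\otimes I_{C,d}^{(l)}$ gives $H_d\ge0$. The energy expansion is read off Appendix~C of Ref.~\cite{Woods2024}. The paper simply cites it, including the faster-than-polynomial decay of $T_dr_d$, so the tail summation you flag as the main obstacle is not reproved there. The floor/mean-value bookkeeping for $f_d$ is the same as yours. \assumptionlink{A2} follows from the reference's vanishing trace-distance error; the paper makes this explicit as $q_d\ge q_{\mathrm{id}}-2\epsilon_d$, using contractivity of the partial trace and the fact that trace distance bounds projector probabilities.

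\paragraph{The gap: your A1 step is not proved.}
The statement ``the clock energy during the run stays near its initial value'' is a dynamical claim you give no argument for. It is also not evident: while the narrow packet crosses a narrow unit-area potential, the interaction energy is transiently nonzero and the clock energy changes accordingly. You already have the tool that closes this. Positivity of the interaction gives the operator inequality $0\le H_{C,d}\le H_d$. Energy conservation and $E_{\mathrm{gs},d}=0$ then give $\Tr[\rho_d(t)H_{C,d}]\le\Tr[\rho_d(t)H_d]=\Tr[\rho_{0,d}H_d]=E_d$ for every $t$. Hence $C_{\mathrm{fld}}=1$ exactly, with no control of the trajectory needed.

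\paragraph{Two alternatives you should drop.}
First, the scalar shift does not give a usable $E_{\mathrm{gs},d}=0$. Precisely because $E_d$ is shift invariant, after shifting you would have $E_d=\Tr(\rho_{0,d}H_d)-E^{\mathrm{comp}}_{\mathrm{gs},d}$. Here $E^{\mathrm{comp}}_{\mathrm{gs},d}>0$ is the ground energy of the unshifted computational block (in the case you worry about, where no zero eigenvector exists). Appendix~C of the reference, however, computes $\Tr(\rho_{0,d}H_d)$, not this difference. Nothing forces $T_dE^{\mathrm{comp}}_{\mathrm{gs},d}\to0$. When the generators are bounded below by a positive constant, a clock state must either overlap the $N_g(d)$ unit-area potentials or pay $H_{C,d}$-energy to avoid them. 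This suggests $T_dE^{\mathrm{comp}}_{\mathrm{gs},d}$ grows like $N_g(d)$, which would destroy both $T_dr_d\to0$ and the $O(1/T_d)$ error in \eqref{eq:finite-clock-near-linear-frequency}. Only your second option works: adjoin a decoupled vector with $H_d\ket{\mathrm{ground}}=0$, which is exactly what the paper does.

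Second, for \assumptionlink{A0} the clock-plus-register reading fails outright, since $T_d\lambda_{\max}(H_{C,d})=2\pi(d-1)$ is unbounded. The paper uses your fallback, with the logical register as the controlled system and the clock as the drive. Because the free logical Hamiltonian is exactly zero, it obtains $T_d[\lambda_{\max}(H_{Q,d})-E_{\mathrm{gs},d}]=0$, i.e.\ $K_0=0$. The recurrence identity $e^{-iH_{C,d}T_d}=\id$ certifies that $T_d$ is a scale-covariant reference time.
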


\begin{proof}
We verify the asserted uniform properties and then derive the frequency-energy relation.

\medskip
\noindent\textit{Uniform logical structure and regularity.}
The logical Hilbert space, gate alphabet, and logical gate set are independent of $d$.  The
free logical Hamiltonian is zero, and \eqref{eq:finite-clock-positive-interactions} gives the
$d$-independent bound $\lVert I_{M_0L}^{(l)}\rVert\le2\pi$ for every local gate generator.
This verifies the common-system-Hilbert-space and uniform-normalization conventions of the family setup.  Each Hamiltonian
is finite dimensional before the harmless common-Hilbert-space embedding, so it is bounded
and self-adjoint and all state trajectories and fidelities are analytic in time.  The free Hamiltonian
$H_{C,d}$ is the restriction of a harmonic-oscillator Hamiltonian, with its ground energy set to zero,
to its first $d$ energy levels.

\medskip
\noindent\textit{Scale-covariant background time.}
Equation~\eqref{eq:finite-clock-free-H} gives
\begin{equation}
  e^{-iH_{C,d}T_d}
  =\sum_{n=0}^{d-1}e^{-2\pi i n}\ket{E_n}\bra{E_n}
  =\id_{C,d}
  \label{eq:finite-clock-period}
\end{equation}
for every $d$.  Thus $T_d$ is the recurrence period assigned to member $d$.  The same
recurrence-period rule is used throughout the family, and
$H_d\mapsto\alpha_dH_d$ sends $T_d\mapsto T_d/\alpha_d$; a global fast-forward therefore
leaves all dimensionless comparisons below invariant.

Importantly, the achievability construction of Ref.~\cite{Woods2024} does not
use this rescaling freedom.  Instead, $E_d$ is increased through the $d$-dependent quasi-ideal
initial clock state.  The resulting near-linear energy--frequency scaling is
therefore not produced by the trivial global rescaling
$H_d\mapsto\alpha_dH_d$ and the accompanying contraction
$T_d\mapsto T_d/\alpha_d$.

\medskip
\noindent\textit{Scale-covariant uniform lower stability.}
Equations~\eqref{eq:finite-clock-free-H} and
\eqref{eq:finite-clock-positive-interactions} imply
\begin{equation}
  H_d\ge0
  \qquad\text{for every }d.
  \label{eq:finite-clock-uniform-lower-bound}
\end{equation}
If the computational block has no zero eigenvector, the construction in
Ref.~\cite{Woods2024} adjoins one decoupled vector
$\ket{\mathrm{ground}}$ and defines $H_d\ket{\mathrm{ground}}=0$.  This leaves the
computational dynamics unchanged and proves \eqref{eq:finite-clock-ground-energy}.  Hence the
family has $E_{\mathrm{gs},d}=0$.  Since its logical free Hamiltonian is also zero,
$T_d[\lambda_{\max}(H_{Q,d})-E_{\mathrm{gs},d}]=0$; hence the scale-covariant
lower-stability condition \eqref{eq:A0-reference-time-bound} holds with $K_0=0$.

\medskip
\noindent\textit{Uniform drive-energy control.}
Positivity of every interaction term gives the operator inequality
\begin{equation}
  0\le H_{C,d}\le H_d.
  \label{eq:finite-clock-free-below-total}
\end{equation}
Let $\rho_d(t):=e^{-iH_dt}\rho_{0,d}e^{iH_dt}$.  Taking expectations in
\eqref{eq:finite-clock-free-below-total} and using conservation of total energy yields
\begin{align}
  \Tr[\rho_d(t)H_{C,d}]
  &\le\Tr[\rho_d(t)H_d]
  \notag\\
  &=\Tr[\rho_{0,d}H_d]
  =E_d.
  \label{eq:finite-clock-energy-control}
\end{align}
The last equality uses \eqref{eq:finite-clock-ground-energy}.  Thus the analogue of
\assumptionlink{A1} holds for all times with the uniform constant $C_{\mathrm{fld}}=1$.

\medskip
\noindent\textit{Uniformly nontrivial transitions.}
Choose from the $d$-independent gate set a gate and an input state for which the ideal target-fidelity
increase is some $q_{\mathrm{id}}>0$; an orthogonalizing gate gives $q_{\mathrm{id}}=1$.
Theorem~2 of Ref.~\cite{Woods2024} bounds the trace-distance error at the gate
before/after times $t_j$ by a quantity $\epsilon_d$ that tends to zero faster than any inverse polynomial in
the energy.  Trace distance cannot increase under a partial trace, and for any projector $P$
it bounds the difference of the corresponding probabilities.  Therefore the actual target
fidelity increase over the interval between the two relevant times satisfies
\begin{equation}
  q_d\ge q_{\mathrm{id}}-2\epsilon_d.
  \label{eq:finite-clock-fidelity-gap}
\end{equation}
For all sufficiently large $d$, this gives $q_d\ge q_{\mathrm{id}}/2=:q_0>0$.  This is the
uniform nontriviality required by \assumptionlink{A2}.  If these values are used in
Definition~\ref{def:gate-interval}, then its duration is $T_d/N_g(d)$ and
\begin{equation}
  \omega_{\mathrm{gate},d}=2\pi f_d.
  \label{eq:finite-clock-two-frequency-conventions}
\end{equation}

\medskip
\noindent\textit{Energy and speed.}
For the quasi-ideal initial clock state used in the construction, Appendix~C of
Ref.~\cite{Woods2024} proves \eqref{eq:finite-clock-energy}; the interaction-energy
contribution $T_d r_d$ vanishes faster than every inverse polynomial.  Setting
$\widetilde n_0=1/(2\pi)$ gives
\begin{equation}
  T_d E_d=d-1+o(1).
  \label{eq:finite-clock-dimension-energy}
\end{equation}
Combining this identity with \eqref{eq:finite-clock-number-gates} gives
\begin{align}
  f_d
  &=\frac{\lfloor d^{1-\bar\varepsilon}\rfloor}{T_d}
  \notag\\
  &=\frac{1}{T_d}(T_d E_d)^{1-\bar\varepsilon}
  +O\!\left(\frac{1}{T_d}\right),
  \label{eq:finite-clock-frequency-calculation}
\end{align}
which is \eqref{eq:finite-clock-near-linear-frequency}.  For every fixed
$\bar\varepsilon>0$ the proven exponent is $1-\bar\varepsilon$, rather than exactly one, but
it can be chosen arbitrarily close to one.

There is no contradiction with Theorem~\ref{thm:relative-speed}.  The decisive model
hypothesis used there is the linear interaction
$\sigma\otimes B_\lambda+\sigma^\dagger\otimes B^\dagger_\lambda$, with $B_\lambda$ linear in annihilation
operators.  By contrast, $I_{C,d}^{(l)}$ in \eqref{eq:finite-clock-interaction} is a localized
nonlinear function of the phase operator conjugate to $H_{C,d}$.  Its normalization
\eqref{eq:finite-clock-area} fixes the integrated rotation, while its profile may become
narrower as $d$ increases.  Positivity then preserves the uniform lower bound even as the
gate is localized into a shorter part of the common background cycle.  The square-root proof
therefore does not extend to this nonlinear interaction, and the near-linear family lies
outside the theorem's stated class.
\end{proof}

\subsection{\texorpdfstring{Analogues of Assumptions~\assumptionlink{B1} and
\assumptionlink{B2} for the nonlinear interactions of Ref.~\cite{Woods2024}}%
{Analogues of Assumptions B1 and B2 for the nonlinear interactions of the cited work}}
\label{sec:nonlinear-clock-B-prime}

The bosonic assumptions \assumptionlink{B1} and \assumptionlink{B2} cannot be transferred
literally to the finite clock: 	\eqref{eq:finite-clock-interaction} is not a
linear creation-annihilation interaction of the form \eqref{eq:family-V} and there is no annihilation-field coupling $g_d$. 

The following analogues retain
the two pieces of physical content that remain meaningful.  The first tests the action of the
\emph{full} nonlinear interaction in quiet windows.  The second tests the inverse-frequency
moment of the part detected by free clock translations.  In particular, the latter does not
remove the zero-frequency part from the Hamiltonian; that part remains present in the first
test.

Write
\begin{equation}
  \mathbb V_d
  :=H_d-H_{C,d}
  =\sum_{l=1}^{N_g(d)}I_{M_0L}^{(l)}\otimes I_{C,d}^{(l)},
  \qquad
  \varrho_{C,d}(t):=\Tr_{M_0L}[\rho_d(t)].
  \label{eq:nonlinear-total-interaction-and-clock-state}
\end{equation}
For the $j$th gate, let $t_{j,d}:=jT_d/N_g(d)$, set
\begin{equation}
  a_d:=\frac{T_d}{8N_g(d)},
  \qquad
  \mathcal W'_{j,d}
  :=[t_{j-1,d}-a_d,t_{j-1,d}]
    \cup[t_{j,d},t_{j,d}+a_d],
  \label{eq:nonlinear-quiet-windows}
\end{equation}
The autonomous trajectory is defined for all real $t$; only the clock phase used in the
localization argument below is read modulo $T_d$.
Define the total clock-side activation
\begin{equation}
  \mathcal J_d^{\mathrm{nl}}(t)
  :=\sum_{l=1}^{N_g(d)}
  \sqrt{\Tr\!\left[
    \varrho_{C,d}(t)\bigl(I_{C,d}^{(l)}\bigr)^2
  \right]},
  \label{eq:nonlinear-clock-activation}
\end{equation}
and the corresponding system-changing action
\begin{equation}
  \mathcal D_d^{\mathrm{nl}}(t)
  :=\sup_{\substack{P=P^\dagger=P^2\\ \lVert\psi\rVert_2=1}}
  \sqrt{\Tr\!\left[
    \bigl(\ket{\psi}\bra{\psi}\otimes\varrho_{C,d}(t)\bigr)
    [P\otimes\id_{C,d},\mathbb V_d]^\dagger
    [P\otimes\id_{C,d},\mathbb V_d]
  \right]},
  \label{eq:nonlinear-system-changing-action}
\end{equation}
where $P$ acts on $M_0L$.  For the selected nontrivial gate term also put
\begin{equation}
  \chi_{j,d}^{\mathrm{nl}}
  :=\max_{P=P^\dagger=P^2}
  \lVert[P,I_{M_0L}^{(j)}]\rVert.
  \label{eq:nonlinear-system-sensitivity}
\end{equation}

\begin{description}
  \item[\textbf{B1$'$.}] \phantomsection\label{ass:B1prime}%
  \textbf{Uniform nonlinear quiet-time interaction action.}
  The family satisfies \assumptionlink[B1$'$]{B1prime} when there are $d$-independent
  constants $\chi_0^{\mathrm{nl}}>0$, $K_J^{\mathrm{nl}}<\infty$, and
  $K_D^{\mathrm{nl}}<\infty$ such that
  \begin{align}
    \chi_{j,d}^{\mathrm{nl}}&\ge\chi_0^{\mathrm{nl}},
    \label{eq:B1prime-sensitivity}\\
    \frac{T_d^2}{|\mathcal W'_{j,d}|}
    \int_{\mathcal W'_{j,d}}\bigl(\mathcal J_d^{\mathrm{nl}}(t)\bigr)^2\dd t
    &\le K_J^{\mathrm{nl}},
    \label{eq:B1prime-J}\\
    \frac{T_d^2}{|\mathcal W'_{j,d}|}
    \int_{\mathcal W'_{j,d}}\bigl(\mathcal D_d^{\mathrm{nl}}(t)\bigr)^2\dd t
    &\le K_D^{\mathrm{nl}}.
    \label{eq:B1prime-D}
  \end{align}
  This is the multi-channel finite-clock counterpart of \assumptionlink{B1}: it requires
  both the drive-side action and the part capable of changing system sectors to be
  uniformly quiet, rather than requiring the expectation of the interaction merely to cancel.

  \item[\textbf{B2$'$.}] \phantomsection\label{ass:B2prime}%
  \textbf{Uniform inverse-frequency compatibility of the clock-translation derivative.}
  Let $\Pi_{n,d}:=\ket{E_n}\bra{E_n}$ and, for a clock operator $X$, define, for a Bohr frequency $\nu$,  its Bohr component 
  by
  \begin{equation}
    X[\nu]
    :=\sum_{\substack{0\le m,n\le d-1\\(m-n)\omega_{0,d}=\nu}}
      \Pi_{m,d}X\Pi_{n,d}.
    \label{eq:finite-clock-Bohr-component}
  \end{equation}
  Introduce the clock-translation derivative
  \begin{equation}
    \begin{aligned}
      I_{C,d}^{(l)}(t)
      &:=e^{iH_{C,d}t}I_{C,d}^{(l)}e^{-iH_{C,d}t},\\
      \dot I_{C,d}^{(l)}
      &:=\left.\frac{\dd}{\dd t}I_{C,d}^{(l)}(t)\right|_{t=0}
      =i[H_{C,d},I_{C,d}^{(l)}].
    \end{aligned}
    \label{eq:finite-clock-translation-derivative}
  \end{equation}
  and the normalized Hilbert--Schmidt norm
  $\lVert X\rVert_{\mathrm{HS},d}^2:=d^{-1}\Tr(X^\dagger X)$. (Derivatives of $I_{C,d}^{(l)}(t)$ at later times are the same, up to a unitary transformation for which the Hilbert--Schmidt norm is invariant.)  Its discrete unweighted and
  inverse-frequency moments are
  \begin{align}
    \Gamma_{\partial,d}
    &:=\sum_{l=1}^{N_g(d)}\sum_{\nu\ne0}
      \bigl\lVert\dot I_{C,d}^{(l)}[\nu]\bigr\rVert_{\mathrm{HS},d}^2,
    \label{eq:B2prime-Gamma}\\
    C_{\partial,d}
    &:=\sum_{l=1}^{N_g(d)}\sum_{\nu\ne0}
      \frac{\bigl\lVert\dot I_{C,d}^{(l)}[\nu]\bigr\rVert_{\mathrm{HS},d}^2}{|\nu|}.
    \label{eq:B2prime-C}
  \end{align}
  Note that $\bigl\lVert\dot I_{C,d}^{(l)}[\nu]\bigr\rVert_{\mathrm{HS},d}=0$ for $\nu=0$, hence the summation over $\nu$ in \eqref{eq:B2prime-C} is taken over its support. The summation over $\nu$ in Eqs.~\eqref{eq:B2prime-Gamma} and \eqref{eq:B2prime-C} is analogous to the integration over $k$ in \eqref{eq:quiet-Gamma} and \eqref{eq:family-Cg}, while the summation over $l$ accounts for the $N_g(d)$ localized interaction channels, in contrast to the single coupling channel $B_\lambda$ in the linear bosonic drive model.  The family satisfies \assumptionlink[B2$'$]{B2prime} when a $d$-independent dimensionless
  constant $K_\omega^{\mathrm{nl}}<\infty$ obeys
  \begin{equation}
    C_{\partial,d}
    \le K_\omega^{\mathrm{nl}}T_d \Gamma_{\partial,d}
    \qquad\text{for every }d.
    \label{eq:B2prime-inverse-frequency}
  \end{equation}
  This is a discrete spectral-density condition for the part of the coupling that varies under
  free clock translation.  The definition through the commutator is essential: it annihilates
  the zero-Bohr-frequency component without suggesting that an additive shift of the clock
  energy could open a Bohr-frequency gap.
\end{description}

\begin{lemma}[The finite nonlinear clock satisfies \texorpdfstring{\assumptionlink[B1$'$]{B1prime}
and \assumptionlink[B2$'$]{B2prime}}{B1' and B2'}]
\label{lem:finite-clock-B-prime}
For the localized-potential construction recalled above, the selected nontrivial gate interval
satisfies \assumptionlink[B1$'$]{B1prime}, and the whole family satisfies
\assumptionlink[B2$'$]{B2prime} with
\begin{equation}
  K_\omega^{\mathrm{nl}}=\frac{1}{2\pi}.
  \label{eq:B2prime-constant}
\end{equation}
\end{lemma}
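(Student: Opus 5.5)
The plan splits into the two assumptions. \assumptionlink[B2$'$]{B2prime} should reduce to a frequency-gap argument, as in the first half of Lemma~\ref{lem:B2-frequency-gap}. \assumptionlink[B1$'$]{B1prime} should reduce to a single localization estimate for the quasi-ideal clock state of Ref.~\cite{Woods2024}.

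\emph{B2$'$.} By \eqref{eq:finite-clock-free-H}, every Bohr frequency of $H_{C,d}$ has the form $\nu=(m-n)\omega_{0,d}$ with $m,n$ integers, so $\nu\ne0$ forces $|\nu|\ge\omega_{0,d}=2\pi/T_d$. In addition, $\dot I^{(l)}_{C,d}=i[H_{C,d},I^{(l)}_{C,d}]$ has Bohr components $\dot I^{(l)}_{C,d}[\nu]=i\nu\,I^{(l)}_{C,d}[\nu]$. Its zero-frequency component therefore vanishes identically, which justifies the remark that the sums in \eqref{eq:B2prime-Gamma}--\eqref{eq:B2prime-C} run over $\nu\ne0$. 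Termwise, $|\nu|^{-1}\le T_d/(2\pi)$, and summing over $l$ and $\nu$ gives $C_{\partial,d}\le (2\pi)^{-1}T_d\Gamma_{\partial,d}$. This is \eqref{eq:B2prime-inverse-frequency} with $K^{\mathrm{nl}}_\omega=1/(2\pi)$.

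\emph{B1$'$: sensitivity and reduction of $\mathcal D$ to $\mathcal J$.} The generator $I^{(j)}_{M_0L}$ belongs to a $d$-independent finite gate set. It is not a multiple of the identity, since otherwise the gate would act trivially and could not give $q_{\mathrm{id}}>0$. Hence $\chi^{\mathrm{nl}}_{j,d}$ equals a fixed positive number, and we take this as $\chi^{\mathrm{nl}}_0$.

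Next, expand $[P\otimes\id_{C,d},\mathbb V_d]=\sum_l[P,I^{(l)}_{M_0L}]\otimes I^{(l)}_{C,d}$ and act on $\ket\psi\otimes\ket{\Phi_d(t)}$, where $\ket{\Phi_d(t)}$ purifies $\varrho_{C,d}(t)$. The triangle inequality, Lemma~\ref{lem:projection-commutator}, and \eqref{eq:finite-clock-positive-interactions} give $\mathcal D^{\mathrm{nl}}_d(t)\le 2\pi\,\mathcal J^{\mathrm{nl}}_d(t)$. It therefore suffices to show that $T_d^2\,\sup_{t\in\mathcal W'_{j,d}}(\mathcal J^{\mathrm{nl}}_d(t))^2$ is bounded uniformly in $d$; I in fact expect it to tend to $0$.

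\emph{B1$'$: the localization estimate (main obstacle).} In the construction, the potential $\overline V^{(l)}_{0,d}$ is concentrated around the clock phase reached at the middle of the $l$th gate slot $[t_{l-1,d},t_{l,d}]$. Its width is a fraction of $2\pi/N_g(d)$, and its tails decay superpolynomially in $d$. For $t\in\mathcal W'_{j,d}$, the clock phase $2\pi t/T_d$ lies within $2\pi/(8N_g(d))$ of a slot boundary. It is therefore separated by at least a fixed fraction of $2\pi/N_g(d)$ from every potential peak.

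Writing $(I^{(l)}_{C,d})^2=\omega_{0,d}^2\sum_k\overline V^{(l)}_{0,d}(2\pi k/d)^2\ket{\theta_k}\bra{\theta_k}$, each term of $\mathcal J^{\mathrm{nl}}_d$ splits into two parts. The first is the clock weight on $\theta_k$ near the peak, times $\sup(\overline V^{(l)})^2$. The second is the potential tail, weighted by the clock state. I would invoke the quasi-ideal-clock bounds of Ref.~\cite{Woods2024} (Appendix~C and the proof of Theorem~2). These give that the autonomously evolved clock state stays localized around the ideal phase $2\pi t/T_d$, with tails and accumulated error vanishing faster than any inverse polynomial in $d$. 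Both parts are then $O(d^{-\infty})$.

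The prefactors are only polynomial in $d$. The factor $T_d^2\omega_{0,d}^2=4\pi^2$ is constant, $\sup\overline V\lesssim\mathrm{poly}(d)$ because the width is polynomially small, and there are $N_g(d)\le d$ terms. Hence $T_d^2(\mathcal J^{\mathrm{nl}}_d)^2\to0$ uniformly on the windows, so $K^{\mathrm{nl}}_J$ and $K^{\mathrm{nl}}_D=4\pi^2K^{\mathrm{nl}}_J$ are finite; the finitely many small $d$ are absorbed into the constants.

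The delicate part is extracting this uniform-in-time localization from Ref.~\cite{Woods2024} with explicit polynomial control of the peak height. In particular, the quasi-ideal state must remain localized at times near every slot boundary over the full cycle, not only at the gate times $t_j$. I would first try the paper's per-gate trace-distance error bound together with its clock-state Gaussian tail estimate.
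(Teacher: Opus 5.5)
Your proposal is correct and follows the paper's proof essentially step for step. For B2$'$ you use the same Bohr-frequency gap argument. For B1$'$ you use the same finite-alphabet lower bound on $\chi^{\mathrm{nl}}_{j,d}$, the same triangle-inequality reduction of $\mathcal D^{\mathrm{nl}}_d$ to $\mathcal J^{\mathrm{nl}}_d$, and the same localization estimate: the packet center stays at least $3/8$ of the potential-center spacing from every peak, combined with the tail estimates of Ref.~\cite{Woods2024}. The between-gate-times control you flag as the main obstacle is supplied in the paper by that reference's moving-clock estimate (C.51) and iterative Lemma~C.1. Your appeal to Lemma~\ref{lem:projection-commutator} also gives $\mathcal D^{\mathrm{nl}}_d\le 2\pi\mathcal J^{\mathrm{nl}}_d$, slightly sharper than the paper's $4\pi$ and equally sufficient.
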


\begin{proof}
We first prove the quiet-action statement.  In the construction of
Ref.~\cite{Woods2024}, the center of the $l$th periodic potential is [its Eq.~(C.26)]
\begin{equation}
  x_{0,d}^{(l)}=\frac{2\pi(l-1/2)}{N_g(d)}.
  \label{eq:finite-clock-potential-centers}
\end{equation}
Free clock evolution translates the quasi-ideal wave packet at angular speed $2\pi/T_d$.
Denoting its center by $\theta_{c,d}(t)$, Eq.~\eqref{eq:nonlinear-quiet-windows} therefore gives,
for every $t\in \mathcal W'_{j,d}$,
\begin{equation}
  \min_{1\le l\le N_g(d)}
  \operatorname{dist}_{\mathbb S^1}\!\left(
    \theta_{c,d}(t),x_{0,d}^{(l)}
  \right)
  \ge \frac{3\pi}{4N_g(d)}
  =\frac{3}{8}\frac{2\pi}{N_g(d)},
  \label{eq:finite-clock-quiet-separation}
\end{equation}
where
$\operatorname{dist}_{\mathbb S^1}(\theta,\phi)
:=\min_{m\in\mathbb Z}|\theta-\phi+2\pi m|\in[0,\pi]$
is the shortest distance between the angles modulo $2\pi$.
Appendix~C of
Ref.~\cite{Woods2024} chooses the packet width $\sigma_d$ and potential concentration
$n_d$ so that
\begin{equation}
  \frac{d}{\sigma_dN_g(d)}\longrightarrow\infty,
  \qquad
  \frac{n_d}{N_g(d)}\longrightarrow\infty.
  \label{eq:finite-clock-localization-scales}
\end{equation}
The estimates in Ref.~\cite{Woods2024} require only that the packet center be separated from
each potential center by one quarter of the neighboring-center spacing.  The stronger
three-eighths separation in Eq.~\eqref{eq:finite-clock-quiet-separation} therefore leaves its
Gaussian- and potential-tail estimates [Lemma~C.2 and Eqs.~(C.37)--(C.40)] and the corresponding
near/far decomposition [Eqs.~(C.132)--(C.142)] valid uniformly in $j$ and $d$.
Applying the same decomposition to the squared potential gives
the state-weighted squared norms in \eqref{eq:nonlinear-clock-activation}; its additional
polynomial prefactor is absorbed by the parameter estimates in Eqs.~(C.123)--(C.130).  The
moving-clock estimate in Eq.~(C.51) and the iterative estimate in Lemma~C.1 control the exact
trajectory between times $\{t_j\}_j$.  Consequently, there is a sequence $\delta_d\to0$, independent
of $j$, such that
\begin{equation}
  T_d\sup_{t\in \mathcal W'_{j,d}}
  \mathcal J_d^{\mathrm{nl}}(t)
  \le\delta_d.
  \label{eq:finite-clock-quiet-localization}
\end{equation}
This estimate concerns the sum over all $N_g(d)$ interaction terms; it is not merely a
pointwise statement about a single potential.

For any projection $P$, equation~\eqref{eq:finite-clock-positive-interactions} gives
$\lVert[P,I_{M_0L}^{(l)}]\rVert\le2\lVert I_{M_0L}^{(l)}\rVert\le4\pi$.  Let
$\ket{\Phi_{C,d}(t)}$ be any purification of $\varrho_{C,d}(t)$.  Expanding the commutator gives
\begin{equation*}
  [P\otimes\id_{C,d},\mathbb V_d]
  =\sum_{l=1}^{N_g(d)}[P,I_{M_0L}^{(l)}]\otimes I_{C,d}^{(l)}.
\end{equation*}
Therefore, for every unit vector $\ket{\psi}$, the triangle inequality and the purification
identity yield
\begin{align*}
  &\left\|
    \bigl([P\otimes\id_{C,d},\mathbb V_d]\otimes\id_R\bigr)
    \bigl(\ket{\psi}\otimes\ket{\Phi_{C,d}(t)}\bigr)
  \right\|_2
  \\
  &\quad\le\sum_{l=1}^{N_g(d)}
    \lVert[P,I_{M_0L}^{(l)}]\rVert
    \sqrt{\Tr\!\left[\varrho_{C,d}(t)\bigl(I_{C,d}^{(l)}\bigr)^2\right]}
  \\
  &\quad\le4\pi\sum_{l=1}^{N_g(d)}
    \sqrt{\Tr\!\left[\varrho_{C,d}(t)\bigl(I_{C,d}^{(l)}\bigr)^2\right]}
  =4\pi\mathcal J_d^{\mathrm{nl}}(t).
\end{align*}
The first norm is the quantity under the supremum in
\eqref{eq:nonlinear-system-changing-action}.  Taking that supremum therefore gives
\begin{equation}
  \mathcal D_d^{\mathrm{nl}}(t)
  \le4\pi\mathcal J_d^{\mathrm{nl}}(t).
  \label{eq:finite-clock-D-by-J}
\end{equation}
Equations~\eqref{eq:finite-clock-quiet-localization} and
\eqref{eq:finite-clock-D-by-J} imply
\begin{align}
  \frac{T_d^2}{|\mathcal W'_{j,d}|}
  \int_{\mathcal W'_{j,d}}\bigl(\mathcal J_d^{\mathrm{nl}}(t)\bigr)^2\dd t
  &\le\delta_d^2,
  \label{eq:finite-clock-B1prime-J-proof}\\
  \frac{T_d^2}{|\mathcal W'_{j,d}|}
  \int_{\mathcal W'_{j,d}}\bigl(\mathcal D_d^{\mathrm{nl}}(t)\bigr)^2\dd t
  &\le16\pi^2\delta_d^2.
  \label{eq:finite-clock-B1prime-D-proof}
\end{align}
The selected gate in the proof of \assumptionlink{A2} is nontrivial, so its generator is not
proportional to the identity and \eqref{eq:nonlinear-system-sensitivity} is strictly positive.
Because the gate alphabet is finite and independent of $d$, the positive minimum over its
nontrivial generators supplies a $d$-independent $\chi_0^{\mathrm{nl}}$.  The two bounds above
give $d$-independent $K_J^{\mathrm{nl}}$ and $K_D^{\mathrm{nl}}$ for all sufficiently large $d$;
enlarging them covers the finitely many remaining dimensions.  This proves
\assumptionlink[B1$'$]{B1prime}.

For \assumptionlink[B2$'$]{B2prime}, equation~\eqref{eq:finite-clock-free-H} gives
\begin{equation}
  [H_{C,d},X[\nu]]=\nu X[\nu],
  \qquad
  \nu\in\omega_{0,d}\mathbb Z.
  \label{eq:finite-clock-Bohr-commutator}
\end{equation}
Thus every nonzero frequency in \eqref{eq:B2prime-C} satisfies
$|\nu|\ge\omega_{0,d}=2\pi/T_d$.  Termwise comparison of the nonnegative sums gives
\begin{equation}
  C_{\partial,d}
  \le\frac{1}{\omega_{0,d}}\Gamma_{\partial,d}
  =\frac{1}{2\pi}T_d \Gamma_{\partial,d},
  \label{eq:finite-clock-B2prime-proof}
\end{equation}
which proves \eqref{eq:B2prime-constant}.  Moreover,
$\dot I_{C,d}^{(l)}[0]=0$ because the zero-frequency block commutes with $H_{C,d}$.
This is why no division by zero occurs.
\end{proof}

\section{Gaussian waveguide instantiation}
\label{app:ir-regular-gaussian}

This section specializes the continuum model of Section~\ref{sec:model} to a
waveguide-qubit calculation that tests the assumptions in
Section~\ref{sec:families}.  The packet-displacement symbol
$\alpha_{\mathrm p}$ used below is distinct from the energy-space coupling
$\gamma_\lambda$ introduced in the discussion of
Assumption~\assumptionlink{B2}.
The final plot uses the dimensionless variables $T_\lambda E$ and
$\omega_{\mathrm{gate}}T_\lambda$, with $E$ denoting the total initial energy
above the interacting ground energy as in \eqref{eq:energy-resource}.
Code supporting these calculations is available in an online repository
\cite{WoodsSquareRootCode}.

\subsection{Model, incoming state, and numerical representation}
\label{app:irg-model}

Let $\{\ket{0},\ket{1}\}$ be the qubit energy basis and set
\begin{equation}
  H_{Q,\lambda}=\omega_q\ket{1}\!\bra{1},
  \qquad
  \sigma=\sigma_+:=\ket{1}\!\bra{0},
  \qquad
  \sigma_-:=\sigma_+^\dagger.
  \label{eq:irg-qubit}
\end{equation}
Thus $\lVert\sigma\rVert=1$.  The free dispersion and the coupling
profile are
\begin{align}
  \omega(k)&=v|k|,
  \qquad k\in\mathbb R,
  \label{eq:irg-dispersion}\\
  g(k)&=\boldsymbol 1_{(0,\infty)}(k)\,g_0\frac{k}{k_g}
  \exp\!\left[-\frac{(k-k_g)^2}{2\sigma_g^2}\right]
  \exp\!\left[i(kx_g+\phi_g)\right],
  \label{eq:irg-coupling}
\end{align}
In the notation of \eqref{eq:interaction-V}--\eqref{eq:collective-B},
\begin{equation}
  B=\int_{\mathbb R}\dd k\,g(k)a_k,
  \qquad
  V=\sigma_+\otimes B+\sigma_-\otimes B^\dagger.
  \label{eq:irg-interaction}
\end{equation}
where $\boldsymbol 1_{(0,\infty)}$ is the indicator function of $(0,\infty)$, selecting the
incoming positive-$k$ propagation branch; the
uncoupled negative-$k$ branch remains in the full-line model.  The factor $k/k_g$ is an
infrared regularizer.  It preserves the value of the Gaussian envelope at its carrier $k_g$
but enforces $g(0)=0$.  Since $g(k)=O(k)$ as $k\downarrow0$ on its support and
$\omega(k)=O(|k|)$ as $k\to0$, both $g/\sqrt{\omega}$ and $g/\omega$ are square integrable
at the zero-frequency point.

The incoming packet mode has spectral amplitude
\begin{align}
  \xi(k)
  &=\boldsymbol 1_{(0,\infty)}(k)\,\mathcal N_\xi
  \exp\!\left[-\frac{(k-k_0)^2}{2\sigma_k^2}\right]e^{-ikx_0},
  \label{eq:irg-packet}\\
  \mathcal N_\xi
  &=\left(\sigma_k A_\xi\right)^{-1/2},
  \qquad
  A_\xi:=\frac{\sqrt\pi}{2}
  \left[1+\operatorname{erf}\!\left(\frac{k_0}{\sigma_k}\right)\right],
  \label{eq:irg-packet-normalization}
\end{align}
so that $\int_{\mathbb R} |\xi(k)|^2\dd k=1$.  Here $k_0>0$ is the
central wave number (with carrier frequency $vk_0$ on the positive-$k$
branch), $\sigma_k$ is the spectral width, $x_0$ is the initial packet
position in the Fourier convention of \eqref{eq:irg-packet}, and
$\boldsymbol 1_{(0,\infty)}$ restricts the incoming packet to the positive-$k$
propagation branch.  Define
\begin{equation}
  b:=\int_{\mathbb R}\dd k\,\xi(k)^*a_k,
  \qquad [b,b^\dagger]=1.
  \label{eq:irg-packet-mode}
\end{equation}
For a packet displacement $\alpha_{\mathrm p}\in\mathbb C$ and squeezing
$\zeta=re^{i\phi_s}$, the joint initial state is
\begin{align}
  \rho_0&=\ket{\Psi(0)}\!\bra{\Psi(0)},
  \notag\\
  \ket{\Psi(0)}
  &=\ket{0}_Q\otimes
  D_b(\alpha_{\mathrm p})S_b(\zeta)\ket{\mathrm{vac}},
  \label{eq:irg-initial-state}\\
  D_b(\alpha_{\mathrm p})
  &:=\exp(\alpha_{\mathrm p}b^\dagger-\alpha_{\mathrm p}^*b),
  \notag\\
  S_b(\zeta)
  &:=\exp\!\left[\frac12
  (\zeta^*b^2-\zeta b^{\dagger2})\right].
  \label{eq:irg-displacement-squeezing}
\end{align}
The Hamiltonian and packet-shape parameters used below are
\begin{align}
  v&=1,&
  k_0&=k_g=4,&
  \sigma_k&=\sigma_g=0.8,&
  g_0&=0.13296159,
  \notag\\
  \omega_q&=4.08,&
  x_0&=4,&
  x_g&=10,&
  \phi_g&=1.13274123.
  \label{eq:irg-parameters}
\end{align}
The state parameters varied in the first-moment and full-dynamics simulations are
specified separately below.  In particular, the Hamiltonian is not rescaled
between members of either family.
The phase $\phi_g$ aligns the integrated interaction-picture drive with the
chosen qubit rotation axis.

For later reference, put $a_g:=k_g/\sigma_g$ and
\begin{equation}
  A_g:=\frac{\sqrt\pi}{2}[1+\operatorname{erf}(a_g)].
  \label{eq:irg-Ag}
\end{equation}
Because both $g$ and $\xi$ are supported on the positive-$k$ branch, the following full-line
integrals reduce exactly to their positive-branch values:
\begin{align}
  C_g
  &=\frac{|g_0|^2}{v k_g^2}
  \left(\sigma_gk_gA_g+\frac{\sigma_g^2}{2}e^{-a_g^2}\right),
  \label{eq:irg-Cg}\\
  \Gamma_g
  &=\frac{|g_0|^2}{k_g^2}
  \left(\sigma_gk_g^2A_g+\frac{\sigma_g^3}{2}A_g
  +\frac{\sigma_g^2k_g}{2}e^{-a_g^2}\right),
  \label{eq:irg-Gamma}\\
  \int_{\mathbb R}\dd k\,\frac{|g(k)|^2}{\omega(k)^2}
  &=\frac{|g_0|^2\sigma_gA_g}{v^2k_g^2}.
  \label{eq:irg-g-over-omega}
\end{align}
For \eqref{eq:irg-parameters},
\begin{equation}
  C_g=0.006266965903,
  \qquad
  \Gamma_g=0.02556922089,
  \qquad
  \int_{\mathbb R}\dd k\,\frac{|g(k)|^2}{\omega(k)^2}
  =0.001566741476.
  \label{eq:irg-integral-values}
\end{equation}

For the full-dynamics matrix-product-state calculation, the dynamically active branch
$[0,30]$ is represented
by $N_k=81$ uniformly spaced nodes $k_j$ with trapezoidal weights $w_j$.
Writing $a_j=\sqrt{w_j}a_{k_j}$ yields the discrete operators
\begin{align}
  H_F^{\mathrm{disc}}
  &=\sum_j\omega(k_j)a_j^\dagger a_j,
  \label{eq:irg-discrete-HF}\\
  V^{\mathrm{disc}}
  &=\sum_j\sqrt{w_j}\left[
  g(k_j)\sigma_+a_j+g(k_j)^*\sigma_-a_j^\dagger\right].
  \label{eq:irg-discrete-V}
\end{align}
The negative-$k$ branch begins in vacuum, is uncoupled by \eqref{eq:irg-coupling}, and therefore
factorizes exactly from the evolution.  We retain $22$ positive-branch modes, discarding only
$3.45\times10^{-15}$ of the discretized coupling power, and apply a unitary
Lanczos star-to-chain transformation
\cite{Chin2010Chain,Prior2010Chain,Woods2014Chain}.  A free coherent displacement moves the
incoming packet into a time-dependent classical drive while leaving the
quantum fluctuation field initially in vacuum; this is an exact change of
representation on the retained one-particle space and preserves the reduced
qubit dynamics.  The resulting nearest-neighbour chain is evolved using the
two-site time-dependent variational principle (TDVP), with occupations
$0,\ldots,4$, time step $0.025$, total time $12$,
maximum bond dimension $40$, and singular-value cutoff $10^{-10}$.  Repeating
the two endpoint members with occupations $0,\ldots,5$, time step $0.0125$,
and maximum bond dimension $60$ changes
$\omega_{\mathrm{gate}}T_\lambda$ by at most $0.36\%$.

\subsection{Checks of assumptions
\texorpdfstring{\assumptionlink{A0}--\assumptionlink{A2} and
\assumptionlink{B1}--\assumptionlink{B2}}{A0--A2 and B1--B2}}
\label{app:irg-assumptions}

The assumptions in Section~\ref{sec:families} are uniform-family statements.
The full-dynamics computation uses ten coherent members, $r=0$, with equally spaced
real displacements $\alpha_{\mathrm p}\in[50,100]$.  Only the initial state
varies: $H_{Q,\lambda}$, $\omega$, $g$, and the reference time are common to the family.

\subsubsection*{Assumptions
\texorpdfstring{\assumptionlink{A0}--\assumptionlink{A2}}{A0--A2}}

Equations \eqref{eq:irg-qubit} and \eqref{eq:irg-parameters} fix
$H_{Q,\lambda}=4.08\ket{1}\!\bra{1}$ throughout the initial-state family, and
$\lVert\sigma_+\rVert=1$.  Thus this example uses a fixed member of the more general
$\lambda$-dependent system-Hamiltonian family.

We use the same one-radian free-evolution convention throughout the family; from~\eqref{eq:irg-dispersion}\\:
\begin{equation}
  T_\lambda:=\frac{1}{\omega(k_{\mathrm{ref}})},
  \qquad k_{\mathrm{ref}}=1,
  \qquad T_\lambda=1.
  \label{eq:irg-Tlambda}
\end{equation}
This definition uses only the uncoupled dispersion and a reference mode common to the family;
it is independent of $g$, the interaction, and the drive state.  It is
therefore an admissible choice for the reference time in the family definition.

\paragraph{\texorpdfstring{\assumptionlink{A0}.}{A0.}}
Equation \eqref{eq:irg-g-over-omega} makes the following continuum square
completion well defined:
\begin{align}
  H
  &=\int_{\mathbb R}\dd k\,\omega(k)
  \left(a_k^\dagger+\frac{g(k)}{\omega(k)}\sigma_+\right)
  \left(a_k+\frac{g(k)^*}{\omega(k)}\sigma_-\right)
  \notag\\
  &\quad +(\omega_q-C_g)\ket{1}\!\bra{1}.
  \label{eq:irg-square-completion}
\end{align}
Here $\omega_q-C_g=4.073733034>0$, so $H\ge0$.  Moreover,
$\ket{0}_Q\otimes\ket{\mathrm{vac}}$ has exactly zero energy.  Hence
\begin{equation}
  E_{\mathrm{gs},\lambda}=0,
  \label{eq:irg-ground-energy}
\end{equation}
and the common vacuum and zero-energy convention in Assumption~\assumptionlink{A0} are
certified analytically.  Since $T_\lambda=1$ and
$\lambda_{\max}(H_{Q,\lambda})-E_{\mathrm{gs},\lambda}=4.08$, its dimensionless condition
\eqref{eq:A0-reference-time-bound} holds with $K_0=4.08$.

\paragraph{\texorpdfstring{\assumptionlink{A1}.}{A1.}}
Across the ten full-dynamics members, the sampled threshold intervals obey
\begin{equation}
  \max_\lambda\max_{t\in I_\lambda}
  \frac{h_{F,\lambda}(t)}{E_\lambda}
  \le0.999998959<1.
  \label{eq:irg-A1}
\end{equation}
Thus the sampled family supports Assumption~\assumptionlink{A1} with the
conservative choice $C_{\mathrm{fld}}=1$.  Because the maximum is sampled at finite time
step, \eqref{eq:irg-A1} is a numerical check rather than a rigorous
continuum-time supremum bound.

\paragraph{\texorpdfstring{\assumptionlink{A2}.}{A2.}}
With $P_{\mathrm{tar}}=\ket{1}\!\bra{1}$, the first upward crossings of target
populations $0.01$ and $0.99$ give, across the family,
\begin{equation}
  q_\lambda=0.98,
  \qquad
  3.43703\le
  \omega_{\mathrm{gate},\lambda}T_\lambda
  \le3.92717.
  \label{eq:irg-gate-data}
\end{equation}
All ten trajectories therefore satisfy Assumption~\assumptionlink{A2} with
the common value $q_0=0.98$.

\subsubsection*{Assumptions
\texorpdfstring{\assumptionlink{B1}--\assumptionlink{B2}}{B1--B2}}

\paragraph{\texorpdfstring{\assumptionlink{B1}.}{B1.}}
For the numerical quiet-set averages, choose
$\ell_{-,\lambda}=\min\{1,t_{\mathrm{in},\lambda}\}$ and
$\ell_{+,\lambda}=1$ in the units \eqref{eq:irg-Tlambda}, so that
\begin{align}
  \mathcal W_{-,\lambda}
  &=[t_{\mathrm{in},\lambda}-\ell_{-,\lambda},
     t_{\mathrm{in},\lambda}],
  \label{eq:irg-Qminus}\\
  \mathcal W_{+,\lambda}
  &=[t_{\mathrm{out},\lambda},t_{\mathrm{out},\lambda}+1].
  \label{eq:irg-Qplus}
\end{align}
For $\sigma=\sigma_+$, the system sensitivity is
$\chi=\max_P\lVert[P,\sigma_+]\rVert=1$.  Evaluation of the simulated field
marginals on $\mathcal W_{-,\lambda}\cup\mathcal W_{+,\lambda}$ gives the uniform sampled
bound
\begin{equation}
  K_J=7.65.
  \label{eq:irg-KJ}
\end{equation}
For the ideal canonical field, $\lVert[P,\sigma_+]\rVert\leq 1$ for every
orthogonal projector $P$.  Hence the disturbance definition
\eqref{eq:B1-disturbance-action}, the field-vector norm identities
\eqref{eq:quiet-annihilation-vector-norm}--\eqref{eq:quiet-creation-vector-norm},
and the triangle inequality give the conservative pointwise estimate
\begin{align}
  \mathcal D(t)
  \leq \sup_{\substack{P=P^\dagger=P^2\\ \lVert\psi\rVert_2=1}}
  \left\{
    \lVert[P,\sigma_+]\psi\rVert_2\,\mathcal J(t)
    +\lVert[P,\sigma_-]\psi\rVert_2
      \sqrt{\mathcal J(t)^2+\Gamma_g}
  \right\}
  \leq \mathcal J(t)+\sqrt{\mathcal J(t)^2+\Gamma_g},
  \label{eq:irg-D-upper}
\end{align}
which gives the uniform sampled quiet-set average
\begin{equation}
  K_D=30.64.
  \label{eq:irg-KD}
\end{equation}
Equations \eqref{eq:irg-KJ}--\eqref{eq:irg-KD} include the refined endpoint
runs and provide finite-grid numerical support for
Assumption~\assumptionlink{B1}, not an asymptotic continuum-family proof.  The
largest bond dimension in the full-dynamics simulations is $5$, and the minimum sampled
qubit purity is $0.6061$, confirming that the calculation does not remain on a
product-state manifold.

\paragraph{\texorpdfstring{\assumptionlink{B2}.}{B2.}}
The exact continuum integrals in \eqref{eq:irg-integral-values} give
\begin{equation}
  \frac{C_g}{T_\lambda \Gamma_g}=0.2450980392<\infty.
  \label{eq:irg-Komega}
\end{equation}
Thus Assumption~\assumptionlink{B2} holds analytically with
$K_\omega=0.2450980392$ for the family-independent coupling.

In summary, the continuum model analytically satisfies the structural family
conventions, the reference-time definition, and Assumptions
\assumptionlink{A0} and \assumptionlink{B2}.  The ten-member full-dynamics
TDVP simulation supports \assumptionlink{A1}, \assumptionlink{A2}, and
\assumptionlink{B1} with the numerical qualifications stated above.

\subsection{Dimensionless energy--frequency reference plot}
\label{app:irg-frequency-plot}

Figure~\ref{fig:irg-frequency} uses the infrared-regular coupling profile
\eqref{eq:irg-coupling}.  All Hamiltonian parameters are common to the family;
only the initial drive displacement and squeezing are varied.  In this
compact reference calculation, the field is propagated
freely in the interaction coefficient
\begin{equation}
  \Omega^{\picsup{I}}(t)
  =\alpha_{\mathrm p}\int_{\mathbb R}\dd k\,
  g(k)\xi(k)e^{-i[\omega(k)-\omega_q]t},
  \label{eq:irg-first-moment-drive}
\end{equation}
and the qubit is evolved under
\begin{equation}
  H^{(1)\,\picsup{I}}(t)
  =\Omega^{\picsup{I}}(t)\sigma_+
  +\Omega^{\picsup{I}}(t)^*\sigma_-.
  \label{eq:irg-first-moment-H}
\end{equation}
The simulation uses ten equally spaced real displacements
$\alpha_{\mathrm p}\in[2.8,5.8]$ for each $r\in\{0,0.5,1\}$.  The initial
free-field energy is
\begin{align}
  E_F(0)
  &=\overline\omega_\xi
  (|\alpha_{\mathrm p}|^2+\sinh^2r),
  \label{eq:irg-initial-field-energy}\\
  \overline\omega_\xi
  &:=\int_{\mathbb R}\dd k\,\omega(k)|\xi(k)|^2
  =v\left(k_0+\frac{\sigma_ke^{-(k_0/\sigma_k)^2}}{2A_\xi}\right)
  \simeq4.
  \label{eq:irg-mean-frequency}
\end{align}
For this initial-state family, $E_F(0)$ coincides exactly with the theorem's
total-energy resource.  Indeed, \eqref{eq:irg-initial-state} and
\eqref{eq:irg-qubit} give
\begin{equation}
  \Tr[\rho_0(H_{Q,\lambda}\otimes\id_F)]=0,
  \qquad
  \Tr(\rho_0V)=0,
  \qquad
  E_{\mathrm{gs},\lambda}=0,
  \qquad
  E=E_F(0),
  \label{eq:irg-E-equals-EF0}
\end{equation}
where the third equality is \eqref{eq:irg-ground-energy}.  The vanishing
interaction expectation follows from
$\langle0|\sigma_\pm|0\rangle=0$.  Moreover, the common Hamiltonian and the
one-radian convention \eqref{eq:irg-Tlambda} fix $T_\lambda=1$ for every point
in the simulation.  We therefore plot the dimensionless variables $T_\lambda E$
and $\omega_{\mathrm{gate}}T_\lambda$; their numerical values happen to equal
$E_F(0)$ and $\omega_{\mathrm{gate}}$, respectively.  The first $0.01$ and
$0.99$ target-population crossings define $\omega_{\mathrm{gate}}$ exactly as
in \eqref{eq:irg-gate-data}.

\begin{figure}[ht]
  \centering
  \includegraphics[width=0.88\textwidth]
  {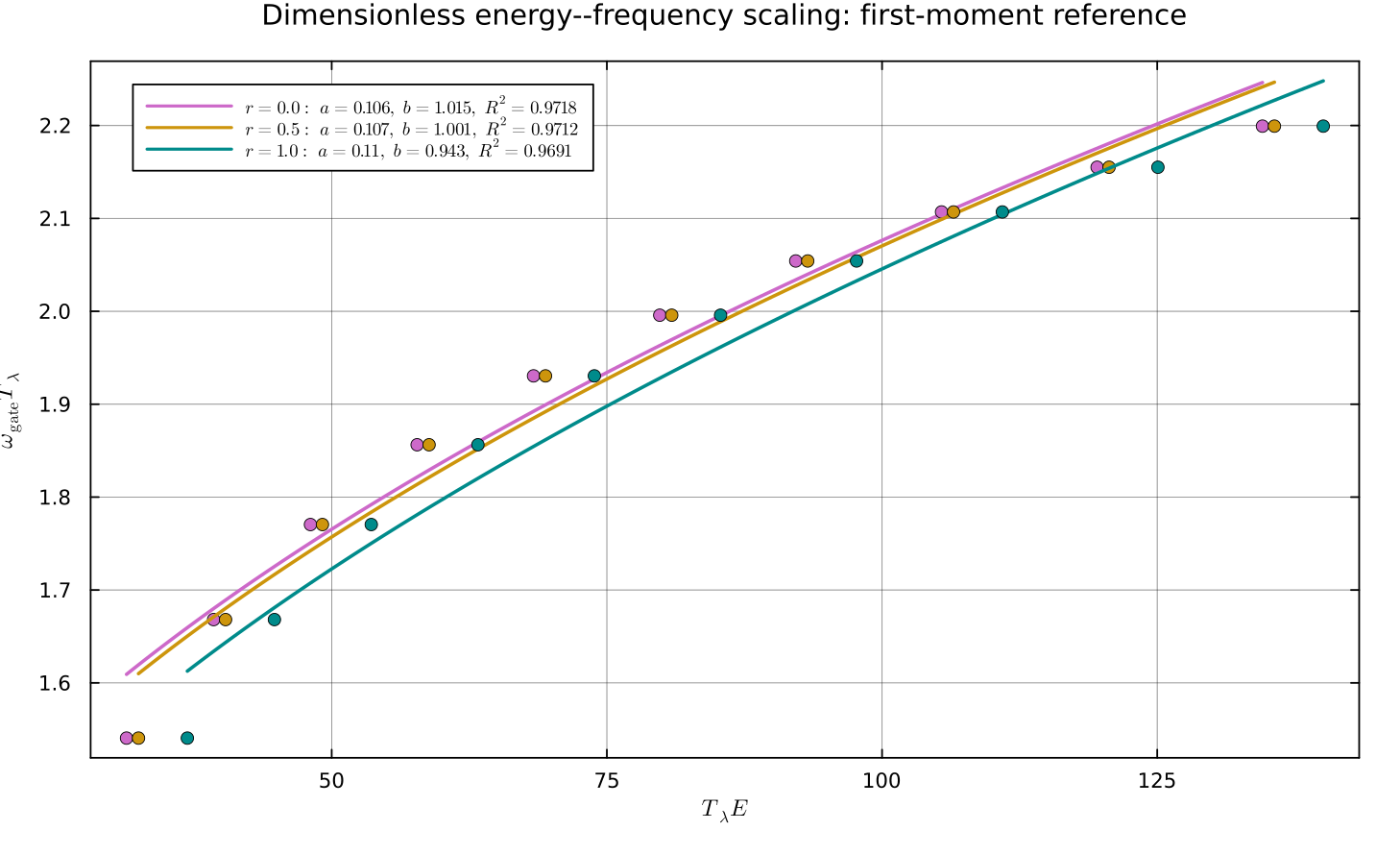}
  \caption{First-moment dimensionless energy--frequency reference simulation for the coupling
  \eqref{eq:irg-coupling}.  Points are threshold frequencies obtained from
  \eqref{eq:irg-first-moment-H}; lines are ordinary least-squares fits of the form
  $\omega_{\mathrm{gate}}T_\lambda
  =a_r\sqrt{T_\lambda E}+b_r$.  The legend reports the fitted coefficients and
  $R^2$.  Squeezing changes the horizontal energy coordinate but not the first
  moment at fixed displacement.}
  \label{fig:irg-frequency}
\end{figure}

For each fixed $r$, the coefficients $(a_r,b_r)$ minimize the sum of squared
vertical residuals over the ten points.  The resulting coefficients of
determination are $R^2=0.9718$, $0.9712$, and $0.9691$ for
$r=0$, $0.5$, and $1$, respectively.

In Figure~\ref{fig:irg-frequency}, the field operator in the interaction is
replaced by its freely propagated expectation value \eqref{eq:irg-first-moment-drive}.
Thus the qubit is driven only by the incoming field's first moment; field
depletion, qubit--field back-action, and qubit--field entanglement are omitted.
For comparison, the full-dynamics simulation evolves the joint qubit--field state
for the ten coherent members $\alpha_{\mathrm p}\in[50,100]$ using the fixed
Hamiltonian \eqref{eq:irg-parameters} and fixed $T_\lambda=1$.  All ten members
realize the $q_0=0.98$ threshold gate.  An ordinary least-squares fit gives
\begin{equation}
  \omega_{\mathrm{gate}}T_\lambda
  =0.00483089\sqrt{T_\lambda E}+2.98115,
  \qquad R=0.9962,
  \quad R^2=0.9924.
  \label{eq:irg-full-dynamics-fit}
\end{equation}
This numerical fit illustrates consistency with square-root scaling over the
sampled energy window.

\begin{figure}[ht]
  \centering
  \includegraphics[width=0.88\textwidth]
  {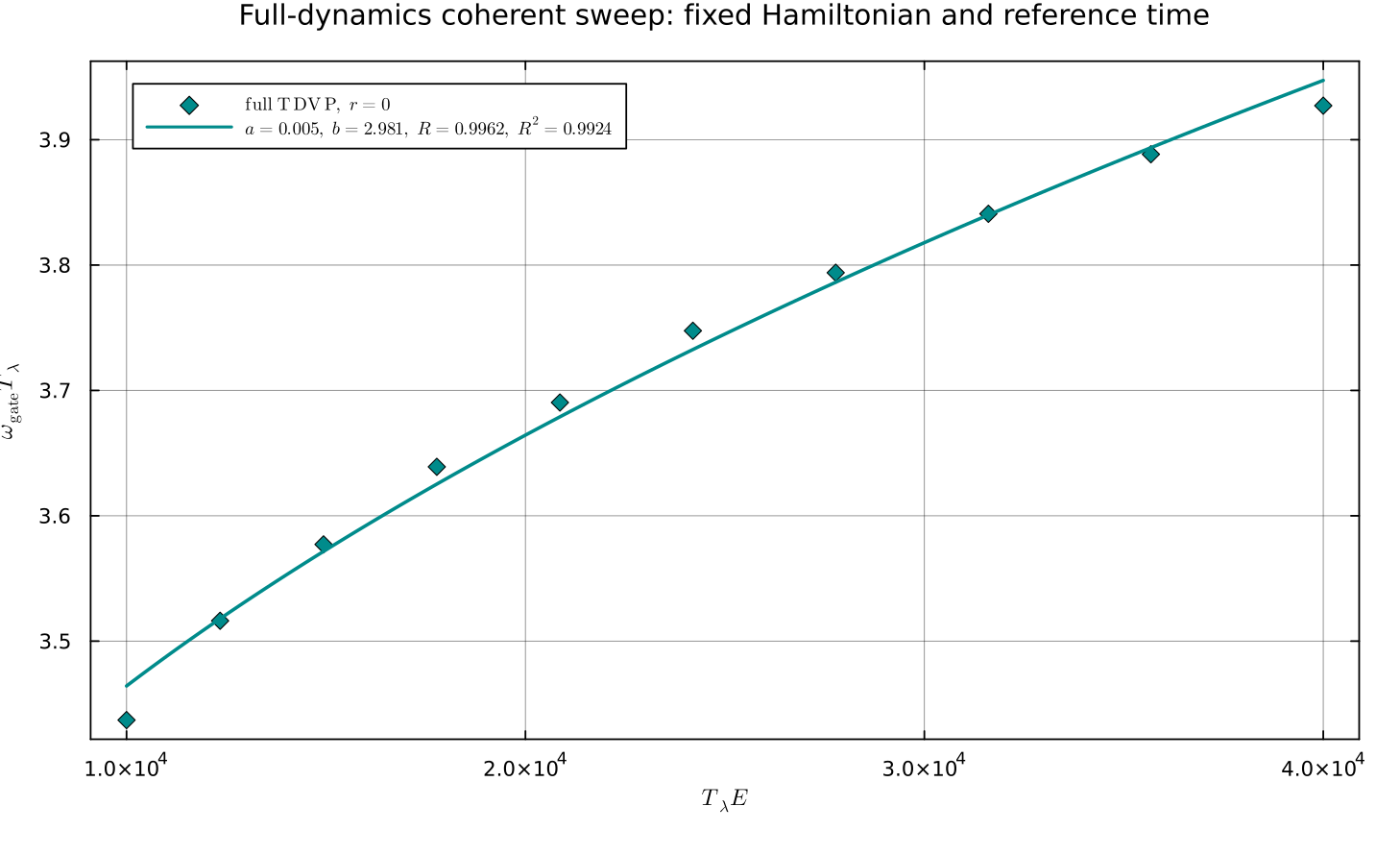}
  \caption{Full-dynamics coherent-state simulation in the same dimensionless
  coordinates as Figure~\ref{fig:irg-frequency}.  Only the initial displacement
  varies; the Hamiltonian and $T_\lambda=1$ are fixed.  Diamonds are the ten
  entangling TDVP trajectories, and the line is the square-root least-squares
  fit \eqref{eq:irg-full-dynamics-fit}.}
  \label{fig:irg-full-dynamics-sweep}
\end{figure}

\subsection{Extended full-dynamics energy simulation}
\label{app:irg-extended-sweep}

To test whether the close fit in Figure~\ref{fig:irg-full-dynamics-sweep}
persists beyond its original factor-four energy window, we repeated the
full-dynamics calculation at the nearly logarithmically spaced displacements
\begin{equation}
  \alpha_{\mathrm p}\in\{50,70,100,140,200,280,400\}.
  \label{eq:irg-extended-alphas}
\end{equation}
For the coherent reference family only the initial displacement is varied:
the Hamiltonian, $T_\lambda=1$, packet profile, target projector, and
$0.01$--$0.99$ thresholds are unchanged.  Since
$T_\lambda E=4\alpha_{\mathrm p}^2$ to the reported precision, the simulation
spans
\begin{equation}
  10^4\le T_\lambda E\le6.4\times10^5,
  \label{eq:irg-extended-energy-range}
\end{equation}
a factor of $64$ in energy and a factor of $8$ in $\sqrt{T_\lambda E}$.

All seven trajectories were computed with $N_{\max}=5$, maximum bond
dimension $80$, and $dt=0.0125$.  The lowest- and largest-displacement
endpoints were also repeated with $dt=0.00625$; their dimensionless gate
frequencies changed from $3.44928$ to $3.45077$ and from $5.43878$ to
$5.44132$, relative changes of $0.043\%$ and $0.047\%$.  The refined endpoints
are used below.  All seven members realize the common $q_0=0.98$ transition,
with maximum target populations between $0.99060$ and $0.99537$.

An ordinary least-squares fit to the same square-root model used in
Figure~\ref{fig:irg-full-dynamics-sweep} gives
\begin{equation}
  \omega_{\mathrm{gate}}T_\lambda
  =0.00277630\sqrt{T_\lambda E}+3.32807,
  \qquad R=0.9895,
  \qquad R^2=0.9790.
  \label{eq:irg-full-dynamics-extended-fit}
\end{equation}
The residuals are systematic rather than noise-like: the lowest- and
highest-energy points lie below the square-root fit, while the middle points
lie above it.  Thus the high $R^2=0.9924$ in the original narrow window does
not persist over the enlarged range.  As a descriptive finite-window check,
a fit $a(T_\lambda E)^p+b$ gives $p\simeq0.234$ and $R^2\simeq0.9998$; this
number is not an asymptotic exponent.  The observed sub-square-root bending
is consistent with the theorem, which is an upper bound and does not assert
that a fixed Gaussian pulse family must saturate the exponent.

We additionally tested the phase-sensitive Gaussian states

\begin{equation}
 D_b(\alpha_{\mathrm p})S_b(re^{i\pi})\ket{\mathrm{vac}},
 \qquad r\in\{0.25,0.5\},
 \qquad
 \alpha_{\mathrm p}\in\{50,70,100,200,400\},
 \label{eq:irg-squeezed-sweep}
\end{equation}

without changing any Hamiltonian or gate parameter.  In the convention of
\eqref{eq:irg-displacement-squeezing}, $\phi_s=\pi$ squeezes the displacement
quadrature.  Its initial variance is
$\Delta X^2=e^{-2r}/2$: it falls from $0.5$ for the coherent packet to
$0.3033$ and $0.1839$, reductions of $39.3\%$ and $63.2\%$, respectively.
The displaced-frame calculation retained packet Fock sectors through
$n=4$ for $r=0.25$ and through $n=6$ for $r=0.5$, representing probabilities
$0.999931$ and $0.999375$; the largest additional local-cutoff loss was
$3.7\times10^{-8}$.  All ten squeezed trajectories satisfy the same
$q_0=0.98$ gate criterion, with maximum target populations between $0.99019$
and $0.99535$.

Separate five-point square-root regressions give
\begin{align}
 r=0.25:\quad
 \omega_{\mathrm{gate}}T_\lambda
 &=0.00275550\sqrt{T_\lambda E}+3.30353,
 &R^2&=0.9816,
 \notag\\
 r=0.5:\quad
 \omega_{\mathrm{gate}}T_\lambda
 &=0.00276194\sqrt{T_\lambda E}+3.29933,
 &R^2&=0.9810.
 \label{eq:irg-squeezed-fits}
\end{align}
Despite the large covariance changes, the largest relative frequency shifts
are only $0.080\%$ for $r=0.25$ and $0.297\%$ for $r=0.5$, both at the
lowest energy.  At the largest energy they fall below $0.0014\%$.  Thus this
fixed coupling is much more sensitive to the packet displacement than to
moderate squeezing of its fluctuations.  Figure~\ref{fig:irg-full-dynamics-extended}
in the main text shows the three regressions and, in its lower panel, the
matched relative frequency shifts.